\documentclass[aps,twocolumn,superscriptaddress]{revtex4-2}

\usepackage[T1]{fontenc}
\usepackage{lmodern}
\usepackage{microtype}
\usepackage{amsmath,amssymb,amsfonts}
\usepackage{amsthm}
\usepackage{bm}
\usepackage{graphicx}
\usepackage{color}
\usepackage{hyperref}
\usepackage{cleveref}
\usepackage{physics}
\usepackage{mathtools}
\hypersetup{
 colorlinks=true,
 linkcolor=blue,
 citecolor=blue,
 urlcolor=blue
}
\theoremstyle{plain}
\newtheorem{theorem}{Theorem}
\newtheorem{lemma}{Lemma}
\newtheorem{corollary}{Corollary}
\newtheorem{proposition}{Proposition}

\theoremstyle{definition}
\newtheorem{definition}{Definition}
\theoremstyle{remark}
\newtheorem{remark}{Remark}

\newcommand{\Cov}{\operatorname{Cov}}

\DeclareMathOperator{\supp}{supp}
\DeclareMathOperator{\wt}{wt}

\begin{document}

\title{Finite-Sample Selected Covariance Spectra in Classical Shadows}

\author{Masahito Hayashi}
\email{hmasahito@cuhk.edu.cn}
\affiliation{School of Data Science, The Chinese University of Hong Kong, Shenzhen, Longgang District, Shenzhen, 518172, China}
\affiliation{International Quantum Academy, Futian District, Shenzhen 518048, China}
\affiliation{Graduate School of Mathematics, Nagoya University, Nagoya, 464-8602, Japan}
\date{\today}

\begin{abstract}
We study finite-sample estimation of selected covariance matrices of
classical-shadow outputs. For a general shadow-output vector, we consider its
covariance matrix and a fixed selected compression. Our main theorem applies
to arbitrary shadow protocols and gives an operator-norm error bound for the
selected sample-centered empirical covariance. When the protocol-dependent
constants appearing in this bound remain independent of the ambient system
size, the required sample size is also independent of the ambient dimension.
The proof combines matrix Bernstein concentration, an exact rank-one
centering identity, and Weyl and Davis--Kahan perturbation bounds.

We verify this bounded-output condition for local measurement settings. For general local product shadow protocols with fixed local dimension, finite-weight product observables lead to bounds controlled by support sizes and local reconstruction coefficients, not by the total number of tensor factors. Hence uniform bounds on selected set size, observable weight, and local reconstruction coefficients imply dimension-independent selected covariance estimation.

For biased local Pauli shadows, we evaluate the relevant bound in closed form from the selected Pauli supports and local basis-selection probabilities. We also derive an exact covariance formula governed by Pauli compatibility and inverse-probability overlap factors, showing how measurement bias affects both diagonal variances and off-diagonal statistical couplings. A comparison with global Clifford shadows shows that this dimension-independent local behavior is not automatic for every shadow protocol.
\end{abstract}

\maketitle

\section{Introduction}
\label{sec:introduction}
Standard shadow theory is usually organized around expectation-value prediction, shadow norms, marginal variances, and many-observable prediction guarantees. These quantities control individual reconstructed observables, or collections of such observables, but they do not treat the covariance matrix of the reconstructed shadow-output vector as the primary object. The present paper develops this covariance-matrix viewpoint.
For a general shadow-output vector \(\hat x\), we consider the covariance matrix \(\Sigma(\rho)=\operatorname{Cov}_\rho(\hat x)\). In the finite-sample setting, the main object is its fixed selected compression \(\Sigma_l(\rho)=R_l\Sigma(\rho)R_l^\top\), where \(R_l\) selects a deterministic set of output coordinates before the data are observed. This selected covariance matrix describes the joint fluctuations of reconstructed shadow coordinates, not only their individual means or marginal variances.

This viewpoint is natural because a covariance matrix describes how different
coordinates fluctuate together. In multivariate analysis, the mean vector and
the covariance matrix are both basic objects: the mean vector gives the first
moments of the coordinates, while the covariance matrix describes their
second-order joint fluctuations
\cite{Anderson2003,MardiaKentBibby1979}. The covariance matrix also identifies
linear combinations of coordinates with large or small variance, as in
principal component analysis \cite{Jolliffe2002}. For shadow data, the
covariance matrix captures the joint fluctuations of reconstructed output
coordinates across repeated measurement shots, beyond what is visible from
individual expectation estimates or marginal variances alone.

The covariance matrix studied here is not a new physical observable. Its entries are covariances of random reconstructed output coordinates produced by the same measurement-and-reconstruction procedure. Thus off-diagonal entries describe statistical couplings in the shadow post-processing noise. In the Pauli case, for example, \(\operatorname{Cov}_\rho(\hat x_P,\hat x_Q)\) should not be confused with the physical expectation of a Pauli product such as \(PQ\).

Once the covariance matrix is taken as the object of interest, the finite-sample problem becomes matrix-valued. The empirical covariance matrix must be compared with \(\Sigma_l(\rho)\) in operator norm, and the eigenvalues and spectral projectors of \(\Sigma_l(\rho)\) must be stable under this perturbation. We therefore combine matrix Bernstein concentration, an exact rank-one identity relating true-centered and sample-centered empirical covariances, and standard Weyl and Davis--Kahan perturbation bounds \cite{Tropp2012UserFriendly,HornJohnson2012,DavisKahan1970,StewartSun1990,YuWangSamworth2015,Vershynin2018}.

The first result is a general finite-sample theorem for fixed selected
coordinates. It applies to arbitrary shadow protocols and gives an
operator-norm error bound for the selected sample-centered empirical
covariance. If the selected number of coordinates and the protocol-dependent
constants in this bound are independent of the ambient system size, then the
required number of samples is also independent of the ambient dimension. The
same operator-norm control gives eigenvalue and spectral-projector guarantees
for the selected covariance spectrum.

We then show that the required boundedness condition is satisfied in local measurement settings. For general local product shadow protocols on tensor factors of fixed local dimension, finite-weight product observables depend only on the reconstructed local snapshots on their supports. Consequently, for bounded selected set size, bounded observable weight, and uniformly bounded local reconstruction coefficients, the selected covariance matrix can be estimated with sample complexity independent of the total number of tensor factors. This gives a local-product mechanism for dimension-independent selected covariance estimation, and it is not specific to Pauli measurements.

As an explicit example, we analyze biased local Pauli shadows, introduced for Hamiltonian-estimation tasks by Hadfield--Bravyi--Raymond--Mezzacapo~\cite{HadfieldBravyiRaymondMezzacapo2022}. We use the same measurement mechanism for a different object: the covariance matrix of selected reconstructed Pauli coefficients. In this setting, bounded-weight selected Pauli strings and local basis probabilities bounded away from zero give dimension-independent finite-sample covariance estimation.

For biased local Pauli shadows, we also derive an exact covariance formula. The formula reorganizes the local-Pauli second-moment mechanism already present in the observable-wise variance analysis of Huang--Kueng--Preskill~\cite[Lemma~4]{HuangKuengPreskill2020} and extends it to biased designs. Pauli compatibility determines which off-diagonal raw second moments can be nonzero, while inverse-probability overlap factors determine their size. Subtracting the corresponding products of means gives the covariance entries. This formula shows how local measurement bias changes both diagonal variances and off-diagonal statistical couplings.

Finally, we contrast these local mechanisms with global Clifford shadows. The general theorem applies to any shadow protocol, but dimension-independent covariance estimation depends on the protocol. 
Local product protocols can keep the relevant constants in the finite-sample
bound independent of the number of tensor factors for finite-weight selected
observables. Global Clifford shadows behave differently: the inverse shadow
channel amplifies every non-identity Pauli coordinate by the global factor
\(d+1\), so the corresponding constants grow with the Hilbert-space
dimension even for a nonempty selected Pauli set.
Thus the dimension-independent regime obtained for local product protocols
relies on protocol-specific structure, not on the abstract finite-sample
theorem alone.

\begin{table*}[t]
\centering
\small
\caption{
Compact positioning of the present work relative to representative strands of
the classical-shadow literature. The comparison is organized by the primary
object of analysis.
}
\label{tab:related_work_positioning}
\begin{tabular}{p{3.2cm} p{4.5cm} p{6.4cm}}
\hline
\textbf{Strand}
&
\textbf{Primary object}
&
\textbf{Relation to this work}
\\
\hline
Foundational and observable-wise shadows
\cite{HuangKuengPreskill2020,ElbenHuangKuengPreskill2023}
&
Expectation-value prediction, shadow norms, marginal variances, and
many-observable guarantees.
&
These works provide the standard prediction framework. The present work is
complementary: it studies covariance matrices of reconstructed shadow outputs
and the selected spectra of such matrices.
\\[1mm]
Pauli-invariant and general shadow formalisms
\cite{BuKohGarciaJaffe2024,
InnocentiLorenzoPalmisanoAlbarelliFerraroPaternostroPalma2023}
&
Measurement-channel, reconstruction-map, or measurement-frame descriptions
of shadow protocols.
&
These formalisms motivate the general shadow-output notation. Here the focus
is not a new shadow formalism, but covariance matrices and finite-sample
recovery of selected covariance spectra.
\\[1mm]
Local Pauli observable-wise analysis
\cite{HuangKuengPreskill2020}
&
Variance and sample-complexity bounds for Pauli-expanded observables under
local Pauli measurements.
&
We reorganize the local-Pauli second-moment mechanism into a covariance-matrix
formula and extend it to biased local Pauli designs.
\\[1mm]
Modified, biased, robust, and broader shadow protocols
\cite{HadfieldBravyiRaymondMezzacapo2022,
IppolitiLiRakovszkyKhemani2023,ZhouZhang2024MBLShadows,
ChenYuZengFlammia2021,AliakbarpourBravermanChiaLinLiuOufkirShen2025,
WestSauvageSenForestanoWierichsKilloranGrinkoCerezoLarocca2026,
GandhariAlbertGerritsTaylorGullans2024}
&
Shallow or dynamics-informed shadows, robust estimation, group-theoretic
extensions, and continuous-variable settings.
&
These works modify the measurement model or statistical assumptions. 
The present work has a different focus: it studies selected covariance matrices
of shadow-output vectors, extends the local mechanism to general local
product shadows, and treats biased local Pauli shadows as the main fully
explicit example.
\\[1mm]
\textbf{This work}
&
\(\Sigma(\rho)=\operatorname{Cov}_\rho(\hat x)\), selected covariance matrices
\(R_l\Sigma(\rho)R_l^\top\), and empirical approximations of their spectra.
&
The contribution is a covariance-spectral layer for classical-shadow data:
selected covariance matrices of shadow-output vectors, finite-sample
operator-norm approximation for fixed selected coordinates, dimension-independent
local-product regimes, and explicit biased local Pauli covariance formulas.
\\
\hline
\end{tabular}
\end{table*}

The main contributions of the paper are as follows.
\begin{itemize}
\item \textbf{Selected covariance matrices for shadow-output vectors.}
We formulate the covariance matrix \(\Sigma(\rho)=\operatorname{Cov}_\rho(\hat x)\) for a general shadow-output vector and study its fixed selected compression \(\Sigma_l(\rho)=R_l\Sigma(\rho)R_l^\top\). This selected covariance matrix captures joint fluctuations of reconstructed output coordinates, not physical products of observables.
\item \textbf{Finite-sample selected covariance approximation.}
We prove a finite-sample theorem for selected sample-centered empirical covariances. If the selected dimension and a protocol-dependent one-shot boundedness parameter are independent of the ambient system size, then constant operator-norm accuracy is achieved with sample size independent of the ambient dimension. The proof uses matrix Bernstein concentration, the exact rank-one centering identity, and Weyl and Davis--Kahan perturbation bounds.
\item \textbf{General local product shadows.}
We show that dimension-independent selected covariance estimation is not specific to Pauli measurements. For local product shadow protocols on fixed-dimensional local systems, finite-weight product observables lead to bounds controlled by support sizes and local reconstruction coefficients, not by the total number of tensor factors.
\item \textbf{Explicit biased local Pauli formulas.}
For biased local Pauli shadows, we evaluate the quantities entering
the general theorem in closed form.  
For a selected Pauli family \(J_l=\{P_1,\ldots,P_l\}\), the protocol-dependent constants in the finite-sample bound are determined explicitly by the supports of the \(P_a\)'s and the local basis-selection probabilities \(p_{j,r}\).
In particular, a bounded number of selected Pauli strings, bounded Pauli
weight, and a uniform lower bound on the probabilities \(p_{j,r}\) give a
bound independent of the number of qubits.
We also derive an exact
covariance formula whose entries are governed by Pauli compatibility and
inverse-probability overlap factors.
\end{itemize}

This work is complementary to the standard observable-wise classical-shadow theory, which is organized around shadow norms, marginal variances, and many-observable prediction guarantees \cite{HuangKuengPreskill2020,ElbenHuangKuengPreskill2023}. It is also related to Pauli-invariant shadow formalisms, which provide reconstruction maps and sample-complexity results for broad classes of Pauli-invariant ensembles \cite{BuKohGarciaJaffe2024}. Our focus is different: we study covariance matrices of reconstructed shadow outputs and the finite-sample recovery of selected covariance spectra.

Covariance quantities estimated from classical-shadow data also appear in the CoVaR approach to variational quantum algorithms \cite{BoydKoczor2022CoVaR}. In that setting, covariances between a Hamiltonian and an operator pool are used as nonlinear conditions for finding eigenstates. The present work is different in both object and guarantee: we study the covariance matrix of a shadow-output vector itself and prove finite-sample operator-norm and spectral-perturbation bounds for fixed selected covariance matrices. Table~\ref{tab:related_work_positioning} summarizes this positioning.

Other recent directions study dynamics-informed or shallow-shadow regimes, robustness against corruption or heavy tails, and broader group-theoretic or continuous-variable shadow frameworks \cite{IppolitiLiRakovszkyKhemani2023,ZhouZhang2024MBLShadows, ChenYuZengFlammia2021,AliakbarpourBravermanChiaLinLiuOufkirShen2025, WestSauvageSenForestanoWierichsKilloranGrinkoCerezoLarocca2026, GandhariAlbertGerritsTaylorGullans2024}. 
These works broaden the class of shadow protocols or modify the statistical
model. The present paper has a different focus: it develops a covariance-matrix
analysis for selected shadow-output coordinates, proves finite-sample
operator-norm and spectral guarantees, extends the local bounded-output
mechanism to general local product shadows, and treats biased local
Pauli shadows as the main fully explicit example.

The paper is organized as follows. Section~\ref{sec:general_covariance} introduces the general shadow-output notation, covariance matrices, empirical covariances, and selected covariance spectra. Section~\ref{sec:general_selected_covariance_spectral_recovery} proves the finite-sample selected covariance theorem. Section~\ref{sec:local_pauli_exact_covariance} specializes the framework to biased local Pauli shadows, derives the exact covariance formula, and compares the selected-output scaling with global Clifford shadows. Section~\ref{sec:general_local_product_shadows} extends the local mechanism to general local product shadows and finite-weight product observables. Section~\ref{sec:discussion} discusses the scope, interpretation, and future directions.

\section{Preparatory covariance objects and notation}
\label{sec:general_covariance}
\subsection{General finite-dimensional shadow channel}
\label{subsec:general_finite_dimensional_shadow_channel}
This section fixes the general shadow-output notation used throughout the
paper.  The construction is the standard classical-shadow
measurement-channel formalism, written here for a finite randomized POVM.
In the usual formulation of Huang--Kueng--Preskill, one samples a unitary,
measures in a fixed basis, forms the raw snapshot, and applies the inverse
of the associated shadow channel to obtain an unbiased single-shot
reconstruction~\cite{HuangKuengPreskill2020}.  We use the same
measurement-channel and inverse-channel viewpoint, but write it in a form
that also covers finite randomized POVMs.  This level of generality is
consistent with the measurement-frame perspective on shadow tomography,
where classical shadows are understood as unbiased estimators associated
with a suitable dual frame~\cite{InnocentiLorenzoPalmisanoAlbarelliFerraroPaternostroPalma2023}.
For Pauli-invariant unitary ensembles, the corresponding Pauli-diagonal
reconstruction-map formalism is developed in
Bu--Koh--Garcia--Jaffe~\cite{BuKohGarciaJaffe2024}.

Although the original classical-shadow protocol is often introduced through
random unitaries followed by computational-basis measurements
\cite{HuangKuengPreskill2020}, the same measurement-channel construction can
be written for a finite randomized POVM.  We use this slightly more general
notation because it makes the later biased local Pauli specialization
transparent.  Although we use product notation below, the same definitions
also apply to nonlocal measurement settings by regarding \(u\) as a global
measurement setting and \(M_u\) as an arbitrary POVM on \(\mathcal H\).

Let
$  \mathcal H
  =
  \bigotimes_{j=1}^k \mathcal H_j $
be a finite-dimensional multipartite Hilbert space.  We describe a single
round of a general randomized local measurement protocol.  For each site
\(j=1,\ldots,k\), let \(\mathcal U_j\) be a finite set of local measurement
settings.  For each \(u_j\in\mathcal U_j\), let
$  M^{(j)}_{u_j}
  =
  \{M^{(j)}_{b_j|u_j}\}_{b_j\in\mathcal B_{j,u_j}} 
$ be a POVM on \(\mathcal H_j\), so that
\begin{equation}
  M^{(j)}_{b_j|u_j}\succeq 0,
  \qquad
  \sum_{b_j\in\mathcal B_{j,u_j}}
  M^{(j)}_{b_j|u_j}
  =
  I_j .
  \label{eq:general_local_povm_conditions}
\end{equation}

The notation introduced so far is local to a single tensor factor
\(\mathcal H_j\). For reference, Table~\ref{tab:notation-local-site}
summarizes the local measurement notation before we pass to the global
product measurement.
\begin{table}[t]
\centering
\small
\caption{Notation on the local Hilbert space \(\mathcal H_j\).}
\label{tab:notation-local-site}
\begin{tabular}{p{1.8cm} p{5.2cm}}
\hline
\textbf{Notation} & \textbf{Meaning} \\
\hline
\(\mathcal H_j\)
&
Local Hilbert space at site \(j\).
\\[1mm]

\(j\)
&
Site index, ranging over \(1,\ldots,k\).
\\[1mm]
\(\mathcal U_j\)
&
Finite set of local measurement settings available at site \(j\).
\\[1mm]

\(u_j\)
&
A local measurement setting chosen from \(\mathcal U_j\).
\\[1mm]

\(\mathcal B_{j,u_j}\)
&
Outcome set for the local POVM at site \(j\) under setting \(u_j\).
\\[1mm]

\(b_j\)
&
A local measurement outcome in \(\mathcal B_{j,u_j}\).
\\[1mm]

\(M^{(j)}_{b_j|u_j}\)
&
Local POVM element on \(\mathcal H_j\) corresponding to outcome \(b_j\)
under setting \(u_j\).
\\[1mm]

\(I_j\)
&
Identity operator on \(\mathcal H_j\).
\\
\hline
\end{tabular}
\end{table}

Set
\begin{equation}
  \mathcal U
  :=
  \mathcal U_1\times\cdots\times\mathcal U_k .
  \label{eq:general_setting_space}
\end{equation}
In one round of the protocol, a random measurement setting
\begin{equation}
  \hat u
  =
  (\hat u_1,\ldots,\hat u_k)
  \in\mathcal U .
  \label{eq:general_random_measurement_setting}
\end{equation}
is drawn according to a prescribed probability distribution \(q\) on
\(\mathcal U\).  We do not assume here that \(q\) is a product distribution.
Conditioned on \(\hat u=u=(u_1,\ldots,u_k)\), the product POVM
\begin{equation}
  M_u
  =
  \{M_{b|u}\}_{b\in\mathcal B_u}.
  \label{eq:general_product_povm_family}
\end{equation}
is measured, where
\begin{align}
  \mathcal B_u
  &:=
  \mathcal B_{1,u_1}\times\cdots\times\mathcal B_{k,u_k},
  \label{eq:general_product_outcome_space}
  \\
  M_{b|u}
  &:=
  \bigotimes_{j=1}^k
  M^{(j)}_{b_j|u_j},
  \qquad
  b=(b_1,\ldots,b_k)\in\mathcal B_u .
  \label{eq:general_product_povm_element}
\end{align}
for \(b=(b_1,\ldots,b_k)\in\mathcal B_u\).  Thus, conditioned on
\(\hat u=u\), the measurement outcome
\(
  \hat b=(\hat b_1,\ldots,\hat b_k)\in\mathcal B_u
\)
is obtained with probability
\begin{equation}
  \mathbb P_\rho(\hat b=b\mid \hat u=u)
  =
  \operatorname{Tr}(\rho M_{b|u}).
  \label{eq:general_conditional_outcome_probability}
\end{equation}

For notational convenience, we regard the pair \((\hat u,\hat b)\) as the
single-shot classical outcome of the randomized measurement protocol.  The
raw operator associated with the outcome \((u,b)\) is the corresponding POVM
element
\begin{equation}
  \sigma_{u,b}
  :=
  M_{b|u}.
  \label{eq:general_raw_operator_def}
\end{equation}
For the realized outcome, we write
\begin{equation}
  \hat\sigma
  :=
  \sigma_{\hat u,\hat b}
  =
  M_{\hat b|\hat u}.
  \label{eq:general_realized_raw_operator}
\end{equation}

Let
\(\mathsf L(\mathcal H)\) denote the real vector space of Hermitian operators
on \(\mathcal H\).  
Following the standard classical-shadow measurement-channel construction
\cite{HuangKuengPreskill2020}, define the associated shadow channel 
as the linear map
\(\mathcal M:\mathsf L(\mathcal H)\to\mathsf L(\mathcal H)\) by
\begin{equation}
 \mathcal M[\rho]
 :=
 \mathbb E_{\hat u}
 \sum_{b\in\mathcal B_{\hat u}}
 M_{b|\hat u}\,
 \operatorname{Tr}(\rho M_{b|\hat u}) .
 \label{eq:general_shadow_channel_expectation_form}
\end{equation}

Equivalently,
\begin{equation}
  \mathcal M[\rho]
  =
  \sum_{u\in\mathcal U} q(u)
  \sum_{b\in\mathcal B_u}
  M_{b|u}\,
  \operatorname{Tr}(\rho M_{b|u}).
  \label{eq:general_shadow_channel_povm}
\end{equation}

Throughout this general formulation, we assume that the randomized
measurement protocol is tomographically complete, in the sense that the
shadow channel
\(
  \mathcal M:\mathsf L(\mathcal H)\to\mathsf L(\mathcal H)
\)
is invertible on the real vector space \(\mathsf L(\mathcal H)\) of Hermitian
operators.  Thus \(\mathcal M^{-1}\) denotes the inverse linear map on
\(\mathsf L(\mathcal H)\).

The classical-shadow reconstruction applies the inverse channel to the raw
snapshot~\cite{HuangKuengPreskill2020}. Thus, for an observed outcome
\((\hat u,\hat b)\), we set
\begin{equation}
  \hat\rho_{\hat u,\hat b}
  :=
  \mathcal M^{-1}(M_{\hat b|\hat u}) .
  \label{eq:general_reconstructed_snapshot}
\end{equation}
Since the mean raw snapshot is \(\mathcal M[\rho]\), the reconstructed
snapshot is unbiased:
\begin{equation}
  \mathbb E_\rho
  \bigl[
    \hat\rho_{\hat u,\hat b}
  \bigr]
  =
  \mathcal M^{-1}\mathcal M[\rho]
  =
  \rho .
  \label{eq:general_snapshot_unbiasedness}
\end{equation}
Consequently, for every observable \(O_\alpha\) considered below,
\begin{equation}
  \mathbb E_\rho
  \operatorname{Tr}(\hat\rho_{\hat u,\hat b}O_\alpha)
  =
  \operatorname{Tr}(\rho O_\alpha).
  \label{eq:general_snapshot_unbiasedness_observable_form}
\end{equation}
The notation up to this point describes one global measurement shot and its
linear reconstruction on \(\mathcal H\). Table~\ref{tab:notation-global-measurement}
summarizes the global measurement and reconstruction notation before we
introduce the output-coordinate vector.

\begin{table}[t]
\centering
\footnotesize
\caption{Global measurement and reconstruction notation on \(\mathcal H\).}
\label{tab:notation-global-measurement}
\begin{tabular}{@{}p{0.30\columnwidth}p{0.64\columnwidth}@{}}
\hline
\textbf{Notation} & \textbf{Meaning} \\
\hline

\(\mathcal H\)
&
Global Hilbert space \(\bigotimes_{j=1}^k\mathcal H_j\).
\\[1mm]
\(k\)
&
Number of tensor factors, or sites, in the multipartite system.
\\[1mm]

\(\mathsf L(\mathcal H)\)
&
Real vector space of Hermitian operators on \(\mathcal H\).
\\[1mm]

\(\mathcal U\)
&
Global measurement-setting space $\mathcal U_1\times\cdots\times\mathcal U_k $.
\\[1mm]

\(u\)
&
A global measurement setting
$  (u_1,\ldots,u_k)$.
\\[1mm]

\(\hat u\)
&
Random global measurement setting in one shot
$  (\hat u_1,\ldots,\hat u_k)$.
\\[1mm]

\(q\)
&
Probability distribution of \(\hat u\) on \(\mathcal U\).
\\[1mm]

\(\mathcal B_u\)
&
Outcome space associated with the setting \(u\).
\\[1mm]

\(b\)
&
A global outcome in \(\mathcal B_u\).
\\[1mm]

\(M_{b|u}\)
&
Global POVM element for outcome \(b\) under setting \(u\).
\\[1mm]

\(\sigma_{u,b}\)
&
Raw operator associated with the outcome pair \((u,b)\).
\\[1mm]

\(\hat\sigma\)
&
Realized raw operator in one shot.
\\[1mm]

\(\mathcal M\)
&
Shadow channel associated with the randomized measurement protocol.
\\[1mm]

\(\mathcal M^{-1}\)
&
Inverse reconstruction map of the shadow channel.
\\[1mm]

\(\hat\rho_{\hat u,\hat b}\)
&
Reconstructed shadow snapshot from the realized outcome
\((\hat u,\hat b)\).
\\
\hline
\end{tabular}
\end{table}

This formulation includes the usual randomized projective-measurement
classical shadows of Huang--Kueng--Preskill as a special case
\cite{HuangKuengPreskill2020}.
This is the standard random-unitary shadow channel.  The local and global
Clifford protocols are obtained by choosing the corresponding unitary
ensemble, while the biased local Pauli protocol can be written
directly in the finite randomized POVM notation used above.
For example, when the setting \(u\) is
a unitary \(U\) and the POVM elements are
\begin{equation}
  M_{b|U}
  =
  U^\dagger |b\rangle\!\langle b|U .
  \label{eq:general_projective_shadow_povm_element}
\end{equation}
the above definition reduces to the standard shadow channel
\begin{equation}
  \mathcal M[\rho]
  =
  \mathbb E_{\hat U}
  \sum_b
  \hat U^\dagger |b\rangle\!\langle b|\hat U\,
  \operatorname{Tr}\!\left(
    \rho\,\hat U^\dagger |b\rangle\!\langle b|\hat U
  \right).
  \label{eq:general_projective_shadow_channel}
\end{equation}
The biased local Pauli protocol will be recovered later by taking
\(\mathcal U_j=\{X,Y,Z\}\) and
\(M^{(j)}_{s_j|r_j}=\frac12(I+s_j r_j)\).

\subsection{Shadow-output vectors and covariance matrices}
\label{subsec:shadow_output_vectors_covariance_matrices}

We now pass from reconstructed shadow snapshots to the finite-dimensional
real random vectors whose covariance matrices will be studied in the rest of
the paper.  Let
$  \mathcal O
  =
  \{O_1,\ldots,O_p\}$
be a fixed finite collection of Hermitian observables on \(\mathcal H\).  The
identity observable may be included in this collection.  These observables are
not assumed to be measured directly.  Rather, they specify the linear
functionals of the state that are evaluated on the reconstructed shadow
snapshot.

For a single randomized measurement outcome \((\hat u,\hat b)\), define
\begin{equation}
  \hat x_\alpha
  :=
  \operatorname{Tr}\!\left(
    \hat\rho_{\hat u,\hat b} O_\alpha
  \right)
  =
  \operatorname{Tr}\!\left(
    \mathcal M^{-1}(M_{\hat b|\hat u}) O_\alpha
  \right),
  \label{eq:general_shadow_output_coordinate}
\end{equation}
for $   \alpha=1,\ldots,p$.
We write
\begin{equation}
  \hat x
  :=
  (\hat x_1,\ldots,\hat x_p)^\top
  \in\mathbb R^p .
  \label{eq:general_shadow_output_vector}
\end{equation}
for the resulting single-shot shadow-output vector.  Its mean is
denoted by
\begin{equation}
  m_\alpha
  :=
  \operatorname{Tr}(\rho O_\alpha),
  \qquad
  m:=(m_1,\ldots,m_p)^\top\in\mathbb R^p .
  \label{eq:general_shadow_output_mean_vector}
\end{equation}
By the unbiasedness of the reconstructed snapshot,
\(
  \mathbb E_\rho[\hat\rho_{\hat u,\hat b}]=\rho,
\)
we have
\begin{equation}
  \mathbb E_\rho[\hat x_\alpha]
  =
  \operatorname{Tr}(\rho O_\alpha)
  =
  m_\alpha,
  \qquad
  \alpha=1,\ldots,p.
  \label{eq:general_shadow_output_coordinate_unbiasedness}
\end{equation}
or equivalently
\begin{equation}
  \mathbb E_\rho[\hat x]=m.
  \label{eq:general_shadow_output_unbiasedness}
\end{equation}

The raw second-moment matrix of the shadow-output vector is
\begin{equation}
  M^{(2)}(\rho)
  :=
  \mathbb E_\rho[\hat x\hat x^\top]
  \in\mathbb R^{p\times p}.
  \label{eq:general_raw_second_moment}
\end{equation}
The associated covariance matrix is
\begin{equation}
  \Sigma(\rho)
  :=
  \operatorname{Cov}_\rho(\hat x)
  =
  M^{(2)}(\rho)-mm^\top
  \in\mathbb R^{p\times p}.
  \label{eq:general_shadow_output_covariance}
\end{equation}
Equivalently, its entries are
\begin{equation}
  \Sigma_{\alpha\beta}(\rho)
  =
  \operatorname{Cov}_\rho(\hat x_\alpha,\hat x_\beta)
  =
  \mathbb E_\rho[\hat x_\alpha\hat x_\beta]
  -
  m_\alpha m_\beta .
  \label{eq:general_covariance_entries}
\end{equation}

The off-diagonal entries of \(\Sigma(\rho)\) encode statistical couplings
between different shadow-output coordinates.  They are properties of the
joint distribution of the reconstructed snapshot
\(\hat\rho_{\hat u,\hat b}\), and should not be confused with directly
measuring the products of the observables \(O_\alpha O_\beta\).  In
particular, the measurement actually performed in one shot is the POVM
\(M_{\hat u}\), while the quantities \(\hat x_\alpha\) are obtained by
post-processing the reconstructed shadow snapshot.

\begin{proposition}[Covariance object]
\label{prop:covariance_object_general}
For the shadow-output vector \(\hat x\) defined above, the matrices
\(M^{(2)}(\rho)\) and \(\Sigma(\rho)\) are positive semidefinite:
\begin{equation}
  M^{(2)}(\rho)\succeq 0,
  \qquad
  \Sigma(\rho)\succeq 0.
  \label{eq:general_matrices_psd}
\end{equation}
Moreover,
\begin{equation}
  \mathbb E_\rho[\hat x_\alpha]
  =
  m_\alpha
  =
  \operatorname{Tr}(\rho O_\alpha),
  \qquad
  \alpha=1,\ldots,p.
  \label{eq:general_mean_identity}
\end{equation}
\end{proposition}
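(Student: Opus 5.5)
The plan is to verify each claim directly from the definitions, using only that \(\hat x\) is a finite real random vector with finitely many possible values. Since \(\mathcal U\) and every \(\mathcal B_u\) are finite, \((\hat u,\hat b)\) takes finitely many values. Hence \(\hat x\), defined in \eqref{eq:general_shadow_output_coordinate}, takes finitely many real values, and all its moments exist. Realness uses two facts: \(\mathcal M^{-1}\) maps \(\mathsf L(\mathcal H)\) to itself and \(M_{\hat b|\hat u}\) is Hermitian, so \(\hat\rho_{\hat u,\hat b}\) is Hermitian. The trace of a product of two Hermitian operators is real, so each \(\hat x_\alpha\in\mathbb R\).

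The mean identity \eqref{eq:general_mean_identity} is just \eqref{eq:general_shadow_output_coordinate_unbiasedness}. I will rederive it briefly. By linearity of the trace and of expectation over the finite outcome set, \(\mathbb E_\rho[\hat x_\alpha]=\operatorname{Tr}(\mathbb E_\rho[\hat\rho_{\hat u,\hat b}]O_\alpha)\). By \eqref{eq:general_snapshot_unbiasedness} this equals \(\operatorname{Tr}(\rho O_\alpha)\). That identity in turn comes from the following computation: the expectation equals \(\sum_u q(u)\sum_b \operatorname{Tr}(\rho M_{b|u})\,\mathcal M^{-1}(M_{b|u})=\mathcal M^{-1}(\mathcal M[\rho])\), by linearity of \(\mathcal M^{-1}\) and \eqref{eq:general_shadow_channel_povm}.

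For positive semidefiniteness, take any \(v\in\mathbb R^p\) and form the scalar random variable \(v^\top\hat x\). Then
\[
v^\top M^{(2)}(\rho)v=\mathbb E_\rho[(v^\top\hat x)^2]\ge 0 .
\]
For the covariance, use the mean identity to write
\[
\Sigma(\rho)=\mathbb E_\rho[(\hat x-m)(\hat x-m)^\top].
\]
This follows by expanding the right side and using \(\mathbb E_\rho[\hat x]=m\): the cross terms give \(-2mm^\top\) and the constant term gives \(+mm^\top\), leaving \(M^{(2)}(\rho)-mm^\top\). Consequently
\[
v^\top\Sigma(\rho)v=\mathbb E_\rho[(v^\top(\hat x-m))^2]\ge 0 .
\]
Both matrices are symmetric by construction, so these quadratic-form inequalities give \(M^{(2)}(\rho)\succeq0\) and \(\Sigma(\rho)\succeq0\).

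No step is a genuine obstacle. The only point needing care is that \(\hat x\) is real-valued, so that \(\mathbb E[\hat x\hat x^\top]\) is a real symmetric matrix and the quadratic-form argument applies. This rests on \(\mathcal M^{-1}\) preserving Hermiticity, which holds because \(\mathcal M\) was assumed invertible as a map on the real space \(\mathsf L(\mathcal H)\).
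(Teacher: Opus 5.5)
Your proposal is correct and follows essentially the same route as the paper: the mean identity comes from unbiasedness of the reconstructed snapshot, and positive semidefiniteness comes from the quadratic forms \(v^\top M^{(2)}(\rho)v=\mathbb E_\rho[(v^\top\hat x)^2]\ge0\) and \(v^\top\Sigma(\rho)v=\operatorname{Var}_\rho(v^\top\hat x)\ge0\). Your extra checks are sound additions that the paper leaves implicit. These are that \(\hat x\) is real-valued because \(\mathcal M^{-1}\) preserves Hermiticity, and the explicit expansion showing \(\Sigma(\rho)=\mathbb E_\rho[(\hat x-m)(\hat x-m)^\top]\).
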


\begin{proof}
The mean identity follows from the unbiasedness of the reconstructed snapshot:
\[
  \mathbb E_\rho[\hat x_\alpha]
  =
  \operatorname{Tr}\!\left(
    \mathbb E_\rho[\hat\rho_{\hat u,\hat b}]\,O_\alpha
  \right)
  =
  \operatorname{Tr}(\rho O_\alpha).
\]
Any \(v\in\mathbb R^p\) satisfies 
\begin{equation}
  v^\top M^{(2)}(\rho)v
  =
  \mathbb E_\rho[(v^\top\hat x)^2]
  \ge 0,
  \label{eq:general_raw_second_moment_psd_proof}
\end{equation}
so \(M^{(2)}(\rho)\succeq0\).  Similarly,
\begin{equation}
  v^\top\Sigma(\rho)v
  =
  \operatorname{Var}_\rho(v^\top\hat x)
  \ge 0,
  \label{eq:general_covariance_psd_proof}
\end{equation}
which proves \(\Sigma(\rho)\succeq0\).
\end{proof}

\begin{remark}[Identity coordinate]
\label{rem:identity_coordinate_general}
The identity observable may be included among the \(O_\alpha\).  In that case
the corresponding shadow-output coordinate is
$  \hat x_I
  =
  \operatorname{Tr}(\hat\rho_{\hat u,\hat b}I)$.
Its expectation is always
\(
  \mathbb E_\rho[\hat x_I]
  =
  \operatorname{Tr}(\rho I)
  =
  1
\)
for normalized states.  In many standard shadow protocols this coordinate is
deterministic, but this determinism is not needed for the finite-sample
covariance theory below.  The coordinate can simply be included as one of the
components of the shadow-output vector whenever it is useful.
\end{remark}

\begin{remark}[Pauli coordinates as a special case]
\label{rem:pauli_coordinates_special_case_general}
If \(\mathcal H=(\mathbb C^2)^{\otimes k}\) and
\(\mathcal O\) is chosen to be a collection of Pauli strings, then
\(\hat x_\alpha\) is the reconstructed Pauli coefficient associated with the
corresponding Pauli observable.  This Pauli-coordinate viewpoint is closely
related to the Pauli-invariant shadow formalism of
Bu--Koh--Garcia--Jaffe~\cite{BuKohGarciaJaffe2024}, where Pauli-invariant
unitary ensembles lead to Pauli-diagonal reconstruction maps.  The biased local Pauli protocol studied later is a special case of the present
finite randomized POVM formulation.  In that case, the selected one-shot
radius appearing in the finite-sample theory can be computed explicitly from
the local basis-selection probabilities.
\end{remark}

\begin{remark}[Relation to other general shadow formalisms]
\label{rem:relation_other_general_shadow_formalisms}
The finite randomized POVM notation used here is not intended to introduce a
new general theory of shadow tomography.  It is a convenient form of the
standard measurement-channel framework for the purposes of defining
shadow-output covariance matrices.  Other general formulations are available.
For example, the measurement-frame approach of
Innocenti et al.~\cite{InnocentiLorenzoPalmisanoAlbarelliFerraroPaternostroPalma2023}
treats classical shadows as unbiased estimators associated with dual frames
and recovers the Huang--Kueng--Preskill construction as a special covariant
case.  The Pauli-invariant framework of
Bu--Koh--Garcia--Jaffe~\cite{BuKohGarciaJaffe2024} gives explicit
reconstruction maps for Pauli-invariant unitary ensembles, including local
and global Clifford ensembles.  More representation-theoretic formulations
for group-based shadows have also been developed, where the measurement
channel is analyzed using general group representations
\cite{WestSauvageSenForestanoWierichsKilloranGrinkoCerezoLarocca2026}.
Our purpose here is narrower: we only need a common notation in which the
single-shot reconstructed output vector, its covariance matrix, and its
selected empirical covariance can be defined.
\end{remark}

We have also introduced the output-coordinate vector obtained by evaluating
a fixed observable family on the reconstructed snapshot, together with its
moments. Table~\ref{tab:notation-shadow-output}
summarizes this single-shot output and covariance notation.

\begin{table}[t]
\centering
\footnotesize
\caption{Single-shot shadow-output and covariance notation.}
\label{tab:notation-shadow-output}
\begin{tabular}{@{}p{0.30\columnwidth}p{0.64\columnwidth}@{}}
\hline
\textbf{Notation} & \textbf{Meaning} \\
\hline

\(\mathcal O\)
&
Fixed finite family of Hermitian observables.
\\[1mm]

\(p\)
&
Number of observables in \(\mathcal O=\{O_1,\ldots,O_p\}\).
\\[1mm]

\(\alpha\)
&
Coordinate index, ranging over \(1,\ldots,p\).
\\[1mm]

\(O_\alpha\)
&
The \(\alpha\)-th observable in \(\mathcal O\).
\\[1mm]

\(\hat x_\alpha\)
&
Single-shot reconstructed output coordinate associated with \(O_\alpha\).
\\[1mm]

\(\hat x\)
&
Single-shot shadow-output vector in \(\mathbb R^p\).
\\[1mm]

\(m_\alpha\)
&
Mean of the single-shot coordinate \(\hat x_\alpha\).
\\[1mm]

\(m\)
&
Mean vector of the single-shot shadow-output vector \(\hat x\).
\\[1mm]

\(M^{(2)}(\rho)\)
&
Raw second-moment matrix of \(\hat x\).
\\[1mm]

\(\Sigma(\rho)\)
&
Covariance matrix of \(\hat x\).
\\[1mm]

\(\Sigma_{\alpha\beta}(\rho)\)
&
Covariance between \(\hat x_\alpha\) and \(\hat x_\beta\).
\\
\hline
\end{tabular}
\end{table}

\subsection{Independent samples and empirical covariance matrices}
\label{subsec:independent_samples_empirical_covariances}

We now pass from the single-shot shadow-output distribution to a finite
sample.  Fix a sample size \(N\).  For \(i=1,\ldots,N\), let
\(
  (\hat u_i,\hat b_i)
\)
be independent outcomes generated by applying the same randomized shadow
measurement protocol to independent copies of the state \(\rho\).  Thus
\(\hat u_i\) is the randomized measurement setting in the \(i\)-th shot, and
\(\hat b_i\) is the corresponding measurement outcome.

The reconstructed shadow snapshot in the \(i\)-th shot is
\begin{equation}
  \hat\rho_i
  :=
  \hat\rho_{\hat u_i,\hat b_i}
  =
  \mathcal M^{-1}(M_{\hat b_i|\hat u_i}).
  \label{eq:general_sample_reconstructed_snapshot}
\end{equation}
For the fixed observable family
\( \mathcal O=\{O_1,\ldots,O_p\}\), we 
define the associated shadow-output vector by
\begin{equation}
  \hat x_{i,\alpha}
  :=
  \operatorname{Tr}(\hat\rho_i O_\alpha),
  \qquad
  \alpha=1,\ldots,p,
  \label{eq:general_sample_shadow_output_coordinate}
\end{equation}
and write
\begin{equation}
  \hat x_i
  :=
  (\hat x_{i,1},\ldots,\hat x_{i,p})^\top
  \in\mathbb R^p .
  \label{eq:general_sample_shadow_output_vector}
\end{equation}
Then
\(
  \hat x_1,\ldots,\hat x_N
\)
are independent copies of the single-shot shadow-output vector \(\hat x\)
defined in the previous subsection.  In particular,
\begin{equation}
  \mathbb E_\rho[\hat x_i]=m,
  \qquad
  \operatorname{Cov}_\rho(\hat x_i)=\Sigma(\rho),
  \qquad
  i=1,\ldots,N.
  \label{eq:general_sample_mean_covariance}
\end{equation}

The empirical shadow-output mean is
\begin{equation}
  \bar{\hat x}_N
  :=
  \frac1N\sum_{i=1}^N \hat x_i .
  \label{eq:general_empirical_shadow_output_mean}
\end{equation}
We distinguish the true-centered empirical covariance matrix
\begin{equation}
  \widehat\Sigma_N^{\mathrm{tc}}
  :=
  \frac1N
  \sum_{i=1}^N
  (\hat x_i-m)(\hat x_i-m)^\top
  \in\mathbb R^{p\times p}
  \label{eq:true_centered_empirical_covariance_general}
\end{equation}
from the sample-centered empirical covariance matrix
\begin{equation}
  \widehat\Sigma_N^{\mathrm{sc}}
  :=
  \frac1N
  \sum_{i=1}^N
  (\hat x_i-\bar{\hat x}_N)
  (\hat x_i-\bar{\hat x}_N)^\top
  \in\mathbb R^{p\times p}.
  \label{eq:sample_centered_empirical_covariance_general}
\end{equation}
We use the $1/N$-normalized empirical covariance throughout; no unbiased $1/(N-1)$ correction is used.
The true-centered covariance is useful for concentration estimates because
its summands are centered with respect to the mean \(m\).  The
sample-centered covariance is the empirical covariance matrix computed
directly from data, since \(m\) is unknown in a tomography problem.

The two empirical covariance matrices are related by the exact identity
\begin{align}
  \widehat\Sigma_N^{\mathrm{sc}}
  -
  \widehat\Sigma_N^{\mathrm{tc}}
  =
  -
  (\bar{\hat x}_N-m)(\bar{\hat x}_N-m)^\top ,
  \label{eq:exact_rank_one_relation_true_sample_centered_general}
\end{align}
which is shown in Appendix \ref{AA1}.

Thus the sample-centered covariance is a rank-one negative perturbation of
the true-centered covariance.
This rank-one relation will be used later to transfer the finite-sample
operator-norm approximation first proved for the true-centered selected
covariance to the sample-centered selected covariance computed from the
measurement data.

\subsection{Selected covariance matrices and their spectral meaning}
\label{subsec:selected_covariance_spectral_meaning}

The covariance matrix \(\Sigma(\rho)\) defined above describes the joint
fluctuation structure of the full shadow-output vector
\(
  \hat x=(\hat x_1,\ldots,\hat x_p)^\top .
\)
In the finite-sample analysis below, however, we do not attempt to estimate
the full covariance matrix in operator norm.  Instead, we restrict attention
to a fixed finite set of selected output coordinates.

Let
\begin{equation}
  J_l
  =
  \{\alpha_1,\ldots,\alpha_l\}
  \subset\{1,\ldots,p\}.
  \label{eq:general_selected_coordinate_set}
\end{equation}
be a deterministic selected coordinate set, fixed independently of the
measurement data.  We assume that the indices in \(J_l\) are distinct.  Let
\begin{equation}
  R_l\in\{0,1\}^{l\times p}.
  \label{eq:general_selection_matrix_def}
\end{equation}
be the corresponding coordinate-selection matrix.  Thus the \(a\)-th row of
\(R_l\) has a single nonzero entry, equal to \(1\), in the column
\(\alpha_a\).  

Equivalently, the rows of \(R_l\) are orthonormal coordinate vectors, and hence
\begin{equation}
  R_lR_l^\top=I_l.
  \label{eq:general_selection_matrix_property}
\end{equation}
Thus \(R_l\) acts by retaining precisely the coordinates indexed by
\(J_l=\{\alpha_1,\ldots,\alpha_l\}\) and discarding all remaining coordinates.
More explicitly, for any vector \(x\in\mathbb R^p\),
\begin{equation}
  R_lx
  =
  \begin{pmatrix}
    x_{\alpha_1}\\
    \vdots\\
    x_{\alpha_l}
  \end{pmatrix}
  \in\mathbb R^l .
  \label{eq:general_selection_matrix_action}
\end{equation}
Applying this coordinate projection to the shadow-output vector gives the
selected shadow-output vector
\begin{equation}
  R_l\hat x
  =
  \begin{pmatrix}
    \hat x_{\alpha_1}\\
    \vdots\\
    \hat x_{\alpha_l}
  \end{pmatrix}.
  \label{eq:general_selected_shadow_output_vector}
\end{equation}
Its mean is obtained by applying the same selection matrix to the
full mean vector \(m\):
\begin{equation}
  R_lm
  =
  \begin{pmatrix}
    m_{\alpha_1}\\
    \vdots\\
    m_{\alpha_l}
  \end{pmatrix}.
  \label{eq:general_selected_mean_vector}
\end{equation}

We define the selected covariance matrix by compressing the full
shadow-output covariance matrix with the fixed selection matrix \(R_l\):
\begin{equation}
  \Sigma_l(\rho)
  :=
  R_l\Sigma(\rho)R_l^\top
  \in\mathbb R^{l\times l}.
  \label{eq:selected_covariance_general}
\end{equation}
Since \(R_l\) is deterministic, this compressed matrix is exactly the
covariance matrix of the selected shadow-output vector \(R_l\hat x\). Using
\(\mathbb E_\rho[\hat x]=m\), we have
\begin{equation}
  \Sigma_l(\rho)
  =
  \operatorname{Cov}_\rho(R_l\hat x)
  =
  \mathbb E_\rho
  \left[
    (R_l\hat x-R_lm)(R_l\hat x-R_lm)^\top
  \right].
  \label{eq:selected_covariance_cov_form_general}
\end{equation}
Equivalently, if \(J_l=\{\alpha_1,\ldots,\alpha_l\}\), then the
\((a,b)\)-entry of \(\Sigma_l(\rho)\) is
\begin{equation}
  \bigl(\Sigma_l(\rho)\bigr)_{ab}
  =
  \operatorname{Cov}_\rho
  \bigl(
    \hat x_{\alpha_a},
    \hat x_{\alpha_b}
  \bigr),
  \qquad
  1\le a,b\le l .
  \label{eq:selected_covariance_entry_general}
\end{equation}
Thus \(\Sigma_l(\rho)\) records the joint fluctuation structure of the fixed
selected coordinates
\(  \hat x_{\alpha_1},\ldots,\hat x_{\alpha_l}\),
rather than the covariance structure of the full ambient vector
\(\hat x\in\mathbb R^p\).

The spectral meaning of \(\Sigma_l(\rho)\) is immediate from its variational
characterization.  For any vector \(v\in\mathbb R^l\), the scalar random
variable
\(
  \langle v,R_l\hat x\rangle
\)
is the shadow-output estimate of the linear combination of selected
coordinates specified by \(v\).  
Its variance is
\begin{equation}
  \operatorname{Var}_\rho(\langle v,R_l\hat x\rangle)
  =
  v^\top \Sigma_l(\rho)v .
  \label{eq:selected_covariance_variance_form_general}
\end{equation}
Consequently, for unit vectors \(v\in\mathbb R^l\),
we have
\begin{equation}
  \max_{\|v\|_2=1}
  \operatorname{Var}_\rho(\langle v,R_l\hat x\rangle)
  =
  \max_{\|v\|_2=1}
  v^\top\Sigma_l(\rho)v
  =
  \lambda_{\max}(\Sigma_l(\rho)).
  \label{eq:selected_covariance_max_variance_general}
\end{equation}
Similarly,
\begin{equation}
  \min_{\|v\|_2=1}
  \operatorname{Var}_\rho(\langle v,R_l\hat x\rangle)
  =
  \lambda_{\min}(\Sigma_l(\rho)).
  \label{eq:selected_covariance_min_variance_general}
\end{equation}
Thus the eigenvalues of \(\Sigma_l(\rho)\) describe the extremal fluctuation
scales among all normalized linear combinations of the selected
shadow-output coordinates.  The corresponding eigenvectors identify the
linear combinations that realize these extremal variances.

This spectral interpretation is the reason for estimating selected
covariance matrices in operator norm.  If an empirical selected covariance
matrix \(\widehat\Sigma_{l,N}\) satisfies
\begin{equation}
  \|\widehat\Sigma_{l,N}-\Sigma_l(\rho)\|_{\mathrm{op}}
  \le
  \varepsilon ,
  \label{eq:selected_operator_norm_approx_assumption_general}
\end{equation}
then all selected variance scales are uniformly approximated:
\begin{equation}
  \left|
    v^\top\widehat\Sigma_{l,N}v
    -
    v^\top\Sigma_l(\rho)v
  \right|
  \le
  \varepsilon
  \qquad
  \text{for all } \|v\|_2=1.
  \label{eq:selected_operator_norm_variance_approx_general}
\end{equation}
In particular, the selected eigenvalues are stable under such an
operator-norm perturbation, and isolated selected spectral subspaces can be
controlled by standard spectral perturbation theory.  These consequences
will be made precise in the finite-sample results below.

Finally, we emphasize that the selection matrix \(R_l\) is fixed before the
data are observed.  The results below are therefore fixed-selection
statements.  They do not address data-dependent or adaptive choices of
selected coordinates, which would require additional post-selection or
uniform concentration arguments.

\section{Finite-Sample Spectral Approximation of Selected Shadow Covariances}
\label{sec:general_selected_covariance_spectral_recovery}

We now prove the finite-sample selected covariance-estimation theorem in a
form that does not rely on the local Pauli structure.  The goal is to control
the selected covariance matrix
\(\Sigma_l(\rho)=R_l\Sigma(\rho)R_l^\top\) from independent shadow samples,
and then to convert this operator-norm control into selected eigenvalue and
spectral-projector guarantees.

\subsection{Overview of the selected finite-sample argument}
\label{subsec:overview_selected_finite_sample_argument}

The input is the general shadow-output vector
\(\hat x=(\hat x_1,\ldots,\hat x_p)^\top\in\mathbb R^p\) introduced in
Section~\ref{sec:general_covariance}, together with a fixed
coordinate-selection matrix \(R_l\).  The covariance matrix is
\(\Sigma(\rho)=\operatorname{Cov}_\rho(\hat x)\), and the selected covariance
matrix is \(\Sigma_l(\rho)=R_l\Sigma(\rho)R_l^\top\).

The point of this section is deliberately selected rather than
full-dimensional.  We do not attempt to estimate the full \(p\times p\)
covariance matrix in operator norm.  Instead, all concentration takes place
after the fixed compression \(R_l\).  Thus the matrix dimension entering the
probability bounds is the selected dimension \(l\), not the ambient
dimension \(p\).

The probabilistic input is standard.  Once the selected centered vectors
\(R_l(\hat x_i-m)\) are uniformly bounded in Euclidean norm, the
true-centered empirical covariance error becomes an average of independent
centered self-adjoint matrices.  We control this average by the
self-adjoint matrix Bernstein inequality in the form of
Tropp~\cite[Theorem~6.1]{Tropp2012UserFriendly}.  This gives an
operator-norm perturbation bound for the selected covariance.  Standard
spectral perturbation results, namely Weyl's inequality and the
Davis--Kahan theorem, then convert this operator-norm control into selected
eigenvalue and spectral-projector bounds
\cite{HornJohnson2012,DavisKahan1970,StewartSun1990,Vershynin2018}.

There is one statistical complication: the mean \(m\) is unknown
in data analysis.  The covariance matrix computed from data is therefore
sample-centered, not true-centered.  We first prove concentration for the
true-centered selected covariance, and then use the exact rank-one centering
identity to pass to the sample-centered covariance.  The additional
empirical-mean term is controlled by a coordinatewise Hoeffding inequality
and a union bound over the selected coordinates
\cite{Hoeffding1963,Vershynin2018}.

The section is organized as follows.  We first introduce the selected
one-shot radius and state the sample-centered main theorem.  We then prove
the true-centered matrix concentration bound, derive its spectral
consequences, and finally transfer the result to the sample-centered
covariance using the rank-one centering identity.

\subsection{Selected one-shot radius and main theorem}
\label{subsec:selected_one_shot_radius_main_theorem}
This subsection states the main finite-sample result of the paper in the
form used in the concrete protocol sections.  The statement is intentionally
a constant-error theorem under a bounded selected radius.  It says that, if
the selected dimension and the selected centered one-shot radius remain
bounded independently of the ambient system size, then the selected
sample-centered covariance matrix can be estimated in operator norm with a
sample size independent of the ambient system size.

All protocol dependence enters through the selected one-shot radius.  The
later local Pauli and general local product sections are devoted to computing
or bounding this radius in concrete measurement models.  The theorem is not
tied to Pauli compatibility, locality, or an explicit covariance formula;
those structures are used only later to verify the bounded-radius hypothesis.

The theorem is stated for the sample-centered covariance because this is the
matrix computed directly from observed shadow data.  Its proof is given in
the rest of this section.  We first establish a concentration bound for the
true-centered selected empirical covariance, and then transfer it to the
sample-centered covariance by using the exact rank-one centering identity.
Along the way, we also obtain more detailed eigenvalue and spectral-projector
perturbation bounds.

Let \(R_l\in\{0,1\}^{l\times p}\) be the fixed coordinate-selection matrix
introduced in Section~\ref{subsec:selected_covariance_spectral_meaning}, so
that
\begin{equation}
  R_lR_l^\top=I_l.
  \label{eq:fs_general_selection_matrix_orthonormal_rows}
\end{equation}
We define the selected sample-centered and true-centered empirical covariance
matrices by
\begin{align}
  \widehat\Sigma_{l,N}^{\mathrm{sc}}
  &:=
  R_l\widehat\Sigma_N^{\mathrm{sc}}R_l^\top,
  \label{eq:fs_general_selected_sample_centered_def}
  \\
  \widehat\Sigma_{l,N}^{\mathrm{tc}}
  &:=
  R_l\widehat\Sigma_N^{\mathrm{tc}}R_l^\top .
  \label{eq:fs_general_selected_true_centered_def}
\end{align}
Equivalently, we have
\begin{align}
  \widehat\Sigma_{l,N}^{\mathrm{sc}}
  =&
  \frac1N\sum_{i=1}^N
  (R_l\hat x_i-R_l\bar{\hat x}_N)
  (R_l\hat x_i-R_l\bar{\hat x}_N)^\top ,
  \label{eq:fs_general_selected_sc_expanded} \\
  \widehat\Sigma_{l,N}^{\mathrm{tc}}
  =&
  \frac1N\sum_{i=1}^N
  (R_l\hat x_i-R_lm)
  (R_l\hat x_i-R_lm)^\top 
  \label{eq:fs_general_selected_tc_expanded}\\
  =&
  \frac1N\sum_{i=1}^N \hat z_i \hat z_i^\top, 
  \label{eq:selected_true_centered_covariance_z_form_general}
\end{align}
where  the selected centered vector $\hat z_i$ is defined as
\begin{equation}
\hat z_i:=R_l(\hat x_i-m)\in\mathbb R^l,
\qquad
i=1,\ldots,N.
\label{eq:selected_centered_vector_general}
\end{equation}

The next definition isolates the deterministic boundedness assumption needed
for the matrix Bernstein argument.  It should be read as a selected analogue
of the usual bounded-sample assumption in nonasymptotic covariance
estimation.  The full vector \(\hat x\) may live in a very large ambient
space, but the theorem uses only the Euclidean radius of the selected vector
\(R_l\hat x\).

\begin{table}[t]
\centering
\footnotesize
\caption{One-shot selected-output notation, covariance quantities, and radius parameters.}
\label{tab:notation-one-shot-selected}
\begin{tabular}{@{}p{0.32\columnwidth}p{0.62\columnwidth}@{}}
\hline
\textbf{Notation} & \textbf{Meaning} \\
\hline

\(\hat\rho_{\hat u,\hat b}\)
&
Single-shot reconstructed shadow snapshot.
\\[1mm]

\(\mathcal O\)
&
Fixed finite family of Hermitian observables defining the output coordinates.
\\[1mm]

\(O_\alpha\)
&
The \(\alpha\)-th observable in \(\mathcal O\).
\\[1mm]

\(\hat x_\alpha\)
&
Single-shot reconstructed output coordinate associated with \(O_\alpha\).
\\[1mm]

\(\hat x\)
&
Single-shot shadow-output vector.
\\[1mm]

\(m_\alpha\)
&
Mean of \(\hat x_\alpha\).
\\[1mm]

\(m\)
&
Mean vector of \(\hat x\).
\\[1mm]

\(M^{(2)}(\rho)\)
&
Raw second-moment matrix of the shadow-output vector.
\\[1mm]

\(\Sigma(\rho)\)
&
Covariance matrix of the full shadow-output vector.
\\[1mm]

\(\Sigma_{\alpha\beta}(\rho)\)
&
Covariance between the output coordinates \(\hat x_\alpha\) and
\(\hat x_\beta\).
\\[1mm]

\(J_l\)
&
Fixed selected coordinate set.
\\[1mm]

\(l\)
&
Number of selected coordinates.
\\[1mm]

\(R_l\)
&
Coordinate-selection matrix associated with \(J_l\).
\\[1mm]

\(R_l\hat x\)
&
Selected one-shot shadow-output vector.
\\[1mm]

\(R_lm\)
&
Selected mean vector.
\\[1mm]

\(\Sigma_l(\rho)\)
&
Selected covariance matrix \(R_l\Sigma(\rho)R_l^\top\).
\\[1mm]

\(B_l\)
&
Selected one-shot radius of \(R_l\hat x\).
\\[1mm]

\(A_l(\rho)\)
&
Centered selected one-shot radius of \(R_l(\hat x-m)\).
\\[1mm]

\(\widetilde A_l(\rho)\)
&
Deterministic upper bound \(B_l+\|R_lm\|_2\) on \(A_l(\rho)\).
\\
\hline
\end{tabular}
\end{table}

\begin{definition}[Selected one-shot radii]
\label{def:selected_one_shot_radius_general}
Fix the selection matrix \(R_l\).  The selected one-shot output is the
\(l\)-dimensional random vector \(R_l\hat x\).  We define its one-shot radius
to be the smallest deterministic radius of an origin-centered Euclidean ball
that contains this selected output almost surely:
\begin{align}
  B_l
  :=&
  \operatorname*{ess\,sup}
  \|R_l\hat x\|_2 .
  \label{eq:selected_one_shot_radius_general}
\end{align}
We also define the centered selected one-shot radius by
\begin{align}  A_l(\rho)
  :=&
  \operatorname*{ess\,sup}
  \|R_l(\hat x-m)\|_2 .
  \label{eq:selected_centered_radius_general}
\end{align}
Equivalently, \(B_l\) is the least number, up to null events, such that
\(\|R_l\hat x\|_2\le B_l\) almost surely.  The essential supremum is taken over
the one-shot randomness of the shadow protocol.  In the finite-outcome
setting, this is simply the maximum over all outcomes that can occur under
the randomized measurement protocol.  
Throughout this section we assume that
\begin{equation}
  B_l<\infty,
  \qquad
  A_l(\rho)<\infty .
  \label{eq:selected_one_shot_radii_finite_general}
\end{equation}
For later use, we also record the elementary deterministic upper bound
\begin{equation}
  \widetilde A_l(\rho)
  :=
  B_l+\|R_lm\|_2
  \ge
  A_l(\rho).
  \label{eq:selected_centered_radius_upper_bound_general}
\end{equation}
Indeed, this follows from the triangle inequality:
\(
  \|R_l(\hat x-m)\|_2
  \le
  \|R_l\hat x\|_2+\|R_lm\|_2 .
\)
Thus \(\widetilde A_l(\rho)\) is a convenient upper bound on the centered
selected one-shot radius.  In applications, one may use either the exact
radius \(A_l(\rho)\) or any deterministic upper bound such as
\(\widetilde A_l(\rho)\).
\end{definition}

The finite-sample estimates below are built from the one-shot selected output,
its covariance, and the corresponding selected-radius parameters.
Table~\ref{tab:notation-one-shot-selected} collects this notation before we
introduce the \(N\)-sample quantities and error levels.

The quantity \(B_l\) bounds the raw selected one-shot output, while
\(A_l(\rho)\) bounds the centered selected output.
For \(\delta\in(0,1)\), define the true-centered error level by
\begin{equation}
  \varepsilon_{l,N}^{\mathrm{tc}}(\delta)
  :=
  A_l(\rho)^2
  \left(
    \sqrt{\frac{8\log(2l/\delta)}{N}}
    +
    \frac{4\log(2l/\delta)}{3N}
  \right).
  \label{eq:epsilon_tc_general}
\end{equation}
Define also the selected empirical-mean correction by
\begin{equation}
  \mu_{l,N}(\delta)
  :=
  \frac{2lA_l(\rho)^2\log(2l/\delta)}{N}.
  \label{eq:mu_general}
\end{equation}
The sample-centered error level is
\begin{equation}
  \varepsilon_{l,N}^{\mathrm{sc}}(\delta)
  :=
  \varepsilon_{l,N}^{\mathrm{tc}}(\delta/2)
  +
  \mu_{l,N}(\delta/2).
  \label{eq:epsilon_sc_def_general}
\end{equation}
Equivalently, substituting
\eqref{eq:epsilon_tc_general} and \eqref{eq:mu_general} into
\eqref{eq:epsilon_sc_def_general} gives
\begin{equation}
  \varepsilon_{l,N}^{\mathrm{sc}}(\delta)
  =
  A_l(\rho)^2
  \left(
    \sqrt{\frac{8\log(4l/\delta)}{N}}
    +
    \left(\frac43+2l\right)
    \frac{\log(4l/\delta)}{N}
  \right).
  \label{eq:epsilon_sc_general}
\end{equation}

We next collect the notation that depends on the \(N\) independent samples
and the finite-sample error parameters used in the main theorem.
Table~\ref{tab:notation-n-sample-selected} separates these empirical objects
from the one-shot quantities listed in
Table~\ref{tab:notation-one-shot-selected}.

\begin{table}[t]
\centering
\footnotesize
\caption{Notation for \(N\) independent samples, empirical selected covariances,
and finite-sample error levels.}
\label{tab:notation-n-sample-selected}
\begin{tabular}{@{}p{0.36\columnwidth}p{0.58\columnwidth}@{}}
\hline
\textbf{Notation} & \textbf{Meaning} \\
\hline

\(N\)
&
Number of independent shadow samples.
\\[1mm]

\((\hat u_i,\hat b_i)\)
&
Measurement setting and outcome in the \(i\)-th shot.
\\[1mm]

\(\hat\rho_i\)
&
Reconstructed shadow snapshot in the \(i\)-th shot.
\\[1mm]

\(\hat x_{i,\alpha}\)
&
\(\alpha\)-th shadow-output coordinate in the \(i\)-th shot.
\\[1mm]

\(\hat x_i\)
&
Shadow-output vector obtained from the \(i\)-th shot.
\\[1mm]

\(\bar{\hat x}_N\)
&
Empirical mean of the shadow-output vectors.
\\[1mm]

\(\widehat\Sigma_N^{\mathrm{tc}}\)
&
Full true-centered empirical covariance matrix.
\\[1mm]

\(\widehat\Sigma_N^{\mathrm{sc}}\)
&
Full sample-centered empirical covariance matrix.
\\[1mm]

\(\widehat\Sigma_{l,N}^{\mathrm{tc}}\)
&
Selected true-centered empirical covariance matrix.
\\[1mm]

\(\widehat\Sigma_{l,N}^{\mathrm{sc}}\)
&
Selected sample-centered empirical covariance matrix.
\\[1mm]

\(\hat z_i\)
&
Selected centered sample vector \(R_l(\hat x_i-m)\).
\\[1mm]

\(\delta\)
&
Failure probability in the high-probability bounds.
\\[1mm]

\(\varepsilon_{l,N}^{\mathrm{tc}}(\delta)\)
&
True-centered selected covariance error level.
\\[1mm]

\(\mu_{l,N}(\delta)\)
&
Selected empirical-mean correction term.
\\[1mm]

\(\varepsilon_{l,N}^{\mathrm{sc}}(\delta)\)
&
Sample-centered selected covariance error level.
\\[1mm]

\(l_0\)
&
Uniform upper bound on the selected dimension \(l\).
\\[1mm]

\(A_0\)
&
Uniform upper bound on the centered selected radius \(A_l(\rho)\).
\\[1mm]

\(\epsilon\)
&
Target operator-norm accuracy in the constant-error theorem.
\\
\hline
\end{tabular}
\end{table}

We now state the main finite-sample theorem for the sample-centered selected
covariance.  This is the covariance matrix computed directly from data,
because the mean \(m\) is unknown.  The theorem gives a
high-probability operator-norm bound for its distance from the selected
covariance
\(
  \Sigma_l(\rho)=R_l\Sigma(\rho)R_l^\top .
\)
The more detailed selected eigenvalue and spectral-projector consequences
are stated below as refinements of the same operator-norm approximation.

The error bound depends on four quantities: the selected centered radius
\(A_l(\rho)\), the selected dimension \(l\), the sample size \(N\), and the
failure probability \(\delta\).  The ambient dimension \(p\) does not appear
in the logarithmic factor, because the covariance matrix is compressed by
the fixed selection matrix \(R_l\) before concentration is applied.

\begin{theorem}[Constant selected error under a bounded selected radius]
\label{thm:constant_selected_error_bounded_radius}
Let \(\hat x_1,\ldots,\hat x_N\) be independent copies of the single-shot
shadow-output vector \(\hat x\), and let \(R_l\) be a deterministic
coordinate-selection matrix fixed independently of the data.  Assume that
\begin{equation}
  l\le l_0,
  \qquad
  A_l(\rho)\le A_0,
  \label{eq:bounded_selected_radius_assumption_general}
\end{equation}
where \(l_0\) and \(A_0\) are constants independent of the ambient system
size.  Then
\begin{equation}
  \varepsilon_{l,N}^{\mathrm{sc}}(\delta)
  \le
  A_0^2
  \left(
    \sqrt{\frac{8\log(4l_0/\delta)}{N}}
    +
    \left(\frac43+2l_0\right)
    \frac{\log(4l_0/\delta)}{N}
  \right).
  \label{eq:epsilon_sc_constant_radius_bound_general}
\end{equation}
In particular, for any target accuracy \(\epsilon>0\), it is enough to take
\begin{equation}
  N
  \ge
  \max\left\{
    \frac{32A_0^4\log(4l_0/\delta)}{\epsilon^2},
    \frac{2A_0^2(\frac43+2l_0)\log(4l_0/\delta)}{\epsilon}
  \right\}
  \label{eq:N_sufficient_constant_radius_general}
\end{equation}
to ensure that, with probability at least \(1-\delta\),
\begin{equation}
  \left\|
    \widehat\Sigma_{l,N}^{\mathrm{sc}}-\Sigma_l(\rho)
  \right\|_{\mathrm{op}}
  \le
  \epsilon .
  \label{eq:operator_constant_radius_general}
\end{equation}
Thus, whenever \(l\) and \(A_l(\rho)\) are bounded independently of the
ambient system size, constant operator-norm accuracy for the selected
sample-centered covariance requires a number of samples independent of the
ambient system size.
\end{theorem}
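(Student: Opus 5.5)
The plan has three parts: a true-centered concentration bound, a transfer to the sample-centered covariance, and then the two elementary consequences stated in the theorem.

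\textbf{True-centered bound.} With $\hat z_i=R_l(\hat x_i-m)$ we have $\mathbb E[\hat z_i\hat z_i^\top]=\Sigma_l(\rho)$ by \eqref{eq:selected_covariance_cov_form_general}. Define the centered self-adjoint $l\times l$ summands
\[
X_i:=\tfrac1N\bigl(\hat z_i\hat z_i^\top-\Sigma_l(\rho)\bigr).
\]
By Definition~\ref{def:selected_one_shot_radius_general}, $\|\hat z_i\|_2\le A_l(\rho)$ almost surely. Both $\hat z_i\hat z_i^\top$ and $\Sigma_l(\rho)$ are positive semidefinite with operator norm at most $A_l(\rho)^2$, so $\|X_i\|_{\mathrm{op}}\le A_l(\rho)^2/N$.

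For the variance proxy, I would use
\[
\mathbb E X_i^2\preceq \tfrac1{N^2}\,\mathbb E\bigl[\|\hat z_i\|_2^2\,\hat z_i\hat z_i^\top\bigr]\preceq \tfrac{A_l(\rho)^2}{N^2}\,\Sigma_l(\rho).
\]
Summing gives $\sigma^2\le A_l(\rho)^4/N$.

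Tropp's matrix Bernstein inequality \cite[Theorem~6.1]{Tropp2012UserFriendly} in dimension $l$ gives $\mathbb P(\|\sum_i X_i\|\ge t)\le 2l\exp\!\bigl(-t^2/(2\sigma^2+2Lt/3)\bigr)$ with $L=A_l(\rho)^2/N$. Inverting at level $\delta$ gives, with probability at least $1-\delta$,
\[
\|\widehat\Sigma^{\mathrm{tc}}_{l,N}-\Sigma_l(\rho)\|_{\mathrm{op}}\le \varepsilon^{\mathrm{tc}}_{l,N}(\delta).
\]
The inversion uses $t\le\sqrt{2\sigma^2 s}+2Ls/3$ with $s=\log(2l/\delta)$; the constants in \eqref{eq:epsilon_tc_general} are generous enough that a crude inversion works.

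\textbf{Sample-centered transfer.} Compress the rank-one identity \eqref{eq:exact_rank_one_relation_true_sample_centered_general} by $R_l$:
\[
\widehat\Sigma^{\mathrm{sc}}_{l,N}-\widehat\Sigma^{\mathrm{tc}}_{l,N}=-\bar z\bar z^\top,\qquad \bar z=\tfrac1N\sum_i\hat z_i.
\]
Its operator norm is $\|\bar z\|_2^2$. Each coordinate $(\hat z_i)_a$ is centered and bounded by $A_l(\rho)$ in absolute value. Hoeffding's inequality \cite{Hoeffding1963} gives
\[
|\bar z_a|\le A_l(\rho)\sqrt{2\log(2l/\delta)/N}
\]
for one coordinate with failure probability at most $\delta/l$. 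A union bound over the $l$ coordinates then gives $\|\bar z\|_2^2\le \mu_{l,N}(\delta)$ with probability at least $1-\delta$. Splitting the failure budget as $\delta/2+\delta/2$ and applying the triangle inequality yields
\[
\|\widehat\Sigma^{\mathrm{sc}}_{l,N}-\Sigma_l(\rho)\|_{\mathrm{op}}\le\varepsilon^{\mathrm{sc}}_{l,N}(\delta)
\]
with probability at least $1-\delta$. Expanding the definitions gives the closed form \eqref{eq:epsilon_sc_general}.

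\textbf{The theorem's statements.} Inequality \eqref{eq:epsilon_sc_constant_radius_bound_general} follows because \eqref{eq:epsilon_sc_general} is monotone increasing in both $A_l(\rho)$ and $l$. For the sample size \eqref{eq:N_sufficient_constant_radius_general}, I would require each of the two terms to be at most $\epsilon/2$:
\[
A_0^2\sqrt{8\log(4l_0/\delta)/N}\le\epsilon/2\iff N\ge 32A_0^4\log(4l_0/\delta)/\epsilon^2,
\]
and the linear term gives the second entry of the maximum. Combining with the sample-centered bound gives \eqref{eq:operator_constant_radius_general}.

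\textbf{Main obstacle.} The delicate step is the Bernstein parameter bookkeeping. One must keep the variance proxy as $A^4/N$ via $\mathbb E[\|z\|^2zz^\top]\preceq A^2\Sigma_l(\rho)$, with $\|\Sigma_l(\rho)\|\le A^2$. The inverted tail must also match the stated constants $8$ and $4/3$. If the exact constant in the linear term comes out slightly differently, I would absorb it using the slack in \eqref{eq:epsilon_tc_general}.
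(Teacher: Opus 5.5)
Your proposal is correct and follows the paper's proof essentially step for step. The shared skeleton is: matrix Bernstein for the selected true-centered covariance, the compressed rank-one centering identity, coordinatewise Hoeffding with a union bound over the $l$ coordinates, a $\delta/2+\delta/2$ failure split, and then monotonicity of \eqref{eq:epsilon_sc_general} in $(l,A_l(\rho))$ together with the $\epsilon/2+\epsilon/2$ split for the sample size. The differences are only in bookkeeping. Your Bernstein inputs $\|X_i\|_{\mathrm{op}}\le A_l(\rho)^2/N$ and $\mathbb E X_i^2=N^{-2}\bigl(\mathbb E[(\hat z_i\hat z_i^\top)^2]-\Sigma_l(\rho)^2\bigr)\preceq A_l(\rho)^2\Sigma_l(\rho)/N^2$ are sharper than the paper's $\|\hat X_i\|_{\mathrm{op}}\le 2A_l(\rho)^2$ and $\mathbb E\hat X_i^2\preceq 4A_l(\rho)^4 I_l$, so the stated $\varepsilon^{\mathrm{tc}}_{l,N}(\delta)$ follows a fortiori. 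Your triangle-inequality transfer gives the same operator-norm bound as the paper's one-sided Loewner argument.
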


The assumptions \(B_l<\infty\) and \(A_l(\rho)\le A_0\) are stated
abstractly in this general section.  They should be understood as
protocol-dependent boundedness conditions on the selected one-shot output and
on its centered version.  
In applications, one may verify the second condition
either by computing the centered radius \(A_l(\rho)\) directly or by using the
deterministic upper bound
\(
  A_l(\rho)
  \le
  \widetilde A_l(\rho)
  =
  B_l+\|R_lm\|_2 .
\)
For the biased local Pauli protocol, the raw selected radius \(B_l\)
can be computed explicitly from the local basis-selection probabilities and
the supports of the selected Pauli coordinates.  Together with the elementary
bound \(\|R_lm\|_2\le \sqrt l\) for Pauli expectation values, this gives a
state-uniform bound on \(A_l(\rho)\) whenever the selected set size, the Pauli
weights, and the inverse local basis probabilities are uniformly bounded.

More generally, for local product shadow protocols with finite-weight
selected product observables, the raw selected radius \(B_l\), and hence the
centered radius \(A_l(\rho)\), can be bounded in terms of local reconstruction
coefficients and support sizes rather than the total number of tensor
factors.  These protocol-specific estimates are carried out in the later
sections and verify the bounded-radius hypothesis used in
Theorem~\ref{thm:constant_selected_error_bounded_radius}.

\subsection{True-centered selected covariance concentration}
\label{subsec:true_centered_selected_covariance_concentration}
We now begin the proof of
Theorem~\ref{thm:constant_selected_error_bounded_radius}.  The first step is
to analyze the true-centered selected covariance.  This is the random-matrix
core of the section.
After true centering, the covariance error
can be written as
\[
  \widehat\Sigma_{l,N}^{\mathrm{tc}}-\Sigma_l(\rho)
  =
  \frac1N\sum_{i=1}^N
  \left(\hat z_i\hat z_i^\top-\Sigma_l(\rho)\right),
\]
where the summands are independent, centered, self-adjoint \(l\times l\)
matrices.  This is exactly the setting of the self-adjoint matrix Bernstein
inequality.  We use Tropp's noncommutative Bernstein bound
\cite[Theorem~6.1]{Tropp2012UserFriendly}; the proof below verifies its two
inputs: an almost-sure operator-norm bound on each summand and a variance
proxy bound for the sum of squared summands.

The role of the selected radius is transparent here.  The bound
\(\|\hat z_i\|_2\le A_l(\rho)\) implies
\(
  \|\hat z_i\hat z_i^\top\|_{\mathrm{op}}\le A_l(\rho)^2,
\)
which in turn gives both the summand norm bound and the variance proxy.  Thus
the concentration rate is governed by \(A_l(\rho)^2\), and the matrix
dimension factor is \(l\).

Since \eqref{eq:selected_centered_vector_general} implies
\begin{equation}
  \mathbb E_\rho[\hat z_i]=0,
  \label{eq:selected_centered_vector_mean_zero_general}
\end{equation}
the definition of the selected covariance implies
\begin{equation}
  \Sigma_l(\rho)
  =
  \mathbb E_\rho[\hat z_i\hat z_i^\top].
  \label{eq:selected_covariance_z_second_moment_general}
\end{equation}
The following elementary bound is the only place where the centered selected
radius \(A_l(\rho)\) enters the true-centered concentration argument.

\begin{lemma}[Selected centered radius bound]
\label{lem:selected_centered_radius_general}
For each shot \(i=1,\ldots,N\), we have
\begin{equation}
  \|\hat z_i\|_2
  =
  \|R_l(\hat x_i-m)\|_2
  \le
  A_l(\rho)
  \label{eq:selected_centered_radius_bound_general}
\end{equation}
almost surely.
\end{lemma}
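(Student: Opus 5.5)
The plan is to reduce the statement to the definition of the centered selected one-shot radius in Definition~\ref{def:selected_one_shot_radius_general}, using only that each \(\hat x_i\) has the same law as the single-shot vector \(\hat x\).

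First, observe that \(\hat z_i=R_l(\hat x_i-m)\) is a fixed measurable function of \(\hat x_i\), because \(R_l\) and \(m\) are deterministic. Next, recall that by \eqref{eq:general_sample_shadow_output_vector} and the discussion after it, \(\hat x_i\) is obtained from the outcome \((\hat u_i,\hat b_i)\) of an independent run of the same protocol on a copy of \(\rho\). Hence \(\hat x_i\) has the same distribution as \(\hat x\), and so \(\|R_l(\hat x_i-m)\|_2\) has the same distribution as \(\|R_l(\hat x-m)\|_2\).

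Then invoke the definition of the essential supremum in \eqref{eq:selected_centered_radius_general}. It gives \(\mathbb P_\rho\bigl(\|R_l(\hat x-m)\|_2>A_l(\rho)\bigr)=0\), and equality in distribution transfers this null event to \(\hat z_i\). This is exactly \eqref{eq:selected_centered_radius_bound_general}. In the finite-outcome setting the argument is even more direct: every outcome \((u,b)\) with positive probability yields a vector whose centered selected norm is at most the maximum over such outcomes, and that maximum is \(A_l(\rho)\).

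There is no genuine obstacle. The only point needing care is the measure-theoretic one that an essential supremum bound holds up to a null set. This is why the conclusion is stated almost surely rather than for every conceivable outcome, and why the finiteness assumption \eqref{eq:selected_one_shot_radii_finite_general} is needed for the bound to be meaningful. If one wants the bound simultaneously for all \(i=1,\ldots,N\), a finite union of null events is still null, so the statement also holds jointly almost surely.
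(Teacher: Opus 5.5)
Your proposal is correct and matches the paper's proof: the bound follows directly from the definition of \(A_l(\rho)\) as an essential supremum, and it transfers to each shot because \(\hat x_i\) has the same one-shot distribution as \(\hat x\). Your remarks on the null event and on the bound holding jointly for \(i=1,\ldots,N\) are accurate refinements of that argument.
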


\begin{proof}
This is immediate from the definition
\[
  A_l(\rho)
  =
  \operatorname*{ess\,sup}
  \|R_l(\hat x-m)\|_2 .
\]
Since \(\hat x_i\) has the same one-shot distribution as \(\hat x\), the same
almost-sure bound holds for every \(i=1,\ldots,N\).
\end{proof}

The next proposition is the random-matrix concentration input for the
true-centered covariance.  We use the self-adjoint matrix Bernstein inequality
of Tropp~\cite[Theorem~6.1]{Tropp2012UserFriendly}, which controls the
operator norm of a sum of independent centered self-adjoint random matrices
through a uniform summand bound and a matrix variance proxy.  In the present
setting, the summands are
\(\hat X_i=\hat z_i\hat z_i^\top-\Sigma_l(\rho)\), where
\(\hat z_i=R_l(\hat x_i-m)\).  The selected centered-radius bound
\(\|\hat z_i\|_2\le A_l(\rho)\) supplies both Bernstein inputs: it gives the
almost-sure summand bound and the variance-proxy bound used in the proof
below.  No coordinatewise independence of \(R_l\hat x_i\) is assumed; only
independence across shadow shots is used.

\begin{proposition}[True-centered concentration for selected shadow covariance]
\label{prop:true_centered_selected_shadow_covariance_concentration}
For every \(t>0\),
\begin{align}
&  \mathbb P\!\left(
  \left\|
    \widehat\Sigma_{l,N}^{\mathrm{tc}}-\Sigma_l(\rho)
  \right\|_{\mathrm{op}}
  \ge t
  \right) \notag\\
  \le &
  2l
  \exp\!\left(
  -
  \frac{Nt^2}{
    8A_l(\rho)^4+\frac43A_l(\rho)^2t
  }
  \right).
  \label{eq:true_centered_selected_tail_general}
\end{align}
Consequently, for every \(\delta\in(0,1)\), with probability at least
\(1-\delta\),
\begin{equation}
  \left\|
    \widehat\Sigma_{l,N}^{\mathrm{tc}}-\Sigma_l(\rho)
  \right\|_{\mathrm{op}}
  \le
  \varepsilon_{l,N}^{\mathrm{tc}}(\delta),
  \label{eq:true_centered_selected_high_probability_general}
\end{equation}
where \(\varepsilon_{l,N}^{\mathrm{tc}}(\delta)\) is defined in
\eqref{eq:epsilon_tc_general}.
\end{proposition}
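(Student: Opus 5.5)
We apply Tropp's self-adjoint matrix Bernstein inequality \cite[Theorem~6.1]{Tropp2012UserFriendly} to the sum
\[
  \sum_{i=1}^N \hat X_i,
  \qquad
  \hat X_i:=\hat z_i\hat z_i^\top-\Sigma_l(\rho),
\]
so that \(\widehat\Sigma_{l,N}^{\mathrm{tc}}-\Sigma_l(\rho)=\frac1N\sum_i\hat X_i\). The \(\hat X_i\) are independent because the shots are independent. They are self-adjoint \(l\times l\) matrices. They are centered by \eqref{eq:selected_covariance_z_second_moment_general}. The event \(\|\frac1N\sum_i\hat X_i\|_{\mathrm{op}}\ge t\) coincides with \(\|\sum_i\hat X_i\|_{\mathrm{op}}\ge Nt\). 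Tropp's bound then gives the tail \(2l\exp\bigl(-\frac{(Nt)^2/2}{\sigma^2+LNt/3}\bigr)\), where \(L\) is an almost-sure bound on \(\|\hat X_i\|_{\mathrm{op}}\) and \(\sigma^2=\|\sum_i\mathbb E\hat X_i^2\|_{\mathrm{op}}\).

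\textbf{Summand bound.} By Lemma~\ref{lem:selected_centered_radius_general}, \(0\preceq\hat z_i\hat z_i^\top\preceq A_l(\rho)^2 I_l\). Taking expectations gives \(0\preceq\Sigma_l(\rho)\preceq A_l(\rho)^2I_l\). The difference of two PSD matrices, each with norm at most \(A^2:=A_l(\rho)^2\), has operator norm at most \(A^2\). For safety we may simply use the triangle bound \(L=2A^2\), which is what the constants in \eqref{eq:true_centered_selected_tail_general} suggest: \(\frac{L}{3}\cdot 2=\frac43A^2\) after clearing the factor \(1/2\).

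\textbf{Variance proxy.} We have \(\mathbb E\hat X_i^2=\mathbb E[(\hat z_i\hat z_i^\top)^2]-\Sigma_l^2\preceq \mathbb E[\|\hat z_i\|_2^2\hat z_i\hat z_i^\top]\preceq A^2\Sigma_l\preceq A^4 I_l\). This gives \(\sigma^2\le NA^4\), and the crude bound \(\sigma^2\le 4NA^4\) (via \(\|\hat X_i\|\le 2A^2\)) would also suffice. Substituting into Bernstein gives an exponent of the form \(-\frac{N t^2/2}{cA^4+\frac{2}{3}A^2t}\). With \(c=4\) this equals \(-\frac{Nt^2}{8A^4+\frac43A^2t}\), which is \eqref{eq:true_centered_selected_tail_general}. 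The stated bound therefore follows, with room to spare, from the crude proxy. The main care point is keeping the constants consistent with \eqref{eq:epsilon_tc_general}. This is the step to check against Tropp's normalization, and any sharper proxy only strengthens the inequality.

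\textbf{High-probability form.} We set the right-hand side equal to \(\delta\) and write \(\ell:=\log(2l/\delta)\). Then \(Nt^2=\ell(8A^4+\frac43A^2t)\) is a quadratic in \(t\). Its positive root satisfies
\[
  t\le \sqrt{8A^4\ell/N}+\frac{4A^2\ell}{3N},
\]
using \(\sqrt{a+b}\le\sqrt a+\sqrt b\), or equivalently by checking that this \(t\) makes \(Nt^2\ge \ell(8A^4+\frac43A^2t)\). The right-hand side is exactly \(\varepsilon_{l,N}^{\mathrm{tc}}(\delta)\). Since the tail is monotone decreasing in \(t\), the probability of exceeding \(\varepsilon_{l,N}^{\mathrm{tc}}(\delta)\) is at most \(\delta\), which gives \eqref{eq:true_centered_selected_high_probability_general}.
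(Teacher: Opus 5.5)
Your proposal is correct and follows essentially the same route as the paper. Both apply matrix Bernstein to $\hat X_i=\hat z_i\hat z_i^\top-\Sigma_l(\rho)$ with $L=2A_l(\rho)^2$ and $\sigma^2\le 4NA_l(\rho)^4$, take a two-sided bound to get the factor $2l$, and then invert at $\varepsilon_{l,N}^{\mathrm{tc}}(\delta)$; your ``direct check'' $Nt^2\ge \ell\bigl(8A^4+\frac43A^2t\bigr)$ is exactly the paper's Step~7, and your sharper values $L=A_l(\rho)^2$ and $\sigma^2\le NA_l(\rho)^4$ are valid but not needed, since you fall back on the paper's constants.
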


\begin{proof}
\noindent\textbf{Step 1: Centered matrix summands.}
Define
\begin{equation}
  \hat X_i
  :=
  \hat z_i\hat z_i^\top-\Sigma_l(\rho),
  \qquad
  i=1,\ldots,N.
  \label{eq:tc_general_Xi_def}
\end{equation}
Using \eqref{eq:selected_covariance_z_second_moment_general}, we have
\begin{equation}
  \mathbb E_\rho[\hat X_i]=0.
  \label{eq:tc_general_Xi_centered}
\end{equation}
The matrices \(\hat X_1,\ldots,\hat X_N\) are independent self-adjoint
\(l\times l\) random matrices.  Combining
\eqref{eq:tc_general_Xi_def} with
\eqref{eq:selected_true_centered_covariance_z_form_general}, we obtain
\begin{equation}
  \widehat\Sigma_{l,N}^{\mathrm{tc}}
  -
  \Sigma_l(\rho)
  =
  \frac1N
  \sum_{i=1}^N
  \hat X_i .
  \label{eq:tc_general_error_average_Xi}
\end{equation}

\noindent\textbf{Step 2: Almost-sure operator-norm bound.}
The relation \eqref{eq:selected_centered_radius_bound_general} implies 
\begin{equation}
  \|\hat z_i\hat z_i^\top\|_{\mathrm{op}}
  =
  \|\hat z_i\|_2^2
  \le
  A_l(\rho)^2 .
  \label{eq:tc_general_rank_one_norm_bound}
\end{equation}
Using \eqref{eq:selected_covariance_z_second_moment_general} and
\eqref{eq:tc_general_rank_one_norm_bound}, we also have
\begin{align}
  \|\Sigma_l(\rho)\|_{\mathrm{op}}
  &=
  \left\|
    \mathbb E_\rho[\hat z_i\hat z_i^\top]
  \right\|_{\mathrm{op}}
  \notag\\
  &\le
  \mathbb E_\rho
  \left[
    \|\hat z_i\hat z_i^\top\|_{\mathrm{op}}
  \right]
  \le
  A_l(\rho)^2 .
  \label{eq:tc_general_Sigma_l_op_bound}
\end{align}
Therefore, the combination of \eqref{eq:tc_general_Xi_def},
\eqref{eq:tc_general_rank_one_norm_bound}, and
\eqref{eq:tc_general_Sigma_l_op_bound} yields
\begin{equation}
  \|\hat X_i\|_{\mathrm{op}}
  \le
  2A_l(\rho)^2
  \label{eq:tc_general_Xi_norm_bound}
\end{equation}
almost surely.

\noindent\textbf{Step 3: Matrix variance proxy.}
We use the elementary self-adjoint inequality
\begin{equation}
  (A-B)^2
  \preceq
  2A^2+2B^2,
  \label{eq:tc_general_square_ineq}
\end{equation}
valid for self-adjoint matrices \(A\) and \(B\).  Applying
\eqref{eq:tc_general_square_ineq} with
$  A=\hat z_i\hat z_i^\top$ and $  B=\Sigma_l(\rho)$,
we obtain
\begin{equation}
  \hat X_i^2
  \preceq
  2(\hat z_i\hat z_i^\top)^2
  +
  2\Sigma_l(\rho)^2 .
  \label{eq:tc_general_Xi_square_bound}
\end{equation}
The combination of \eqref{eq:selected_centered_radius_bound_general} and
the relation $  (\hat z_i\hat z_i^\top)^2 =  \|\hat z_i\|_2^2\hat z_i\hat z_i^\top $
gives
\begin{equation}
  (\hat z_i\hat z_i^\top)^2
  \preceq
  A_l(\rho)^2
  \hat z_i\hat z_i^\top .
  \label{eq:tc_general_rank_one_square_bound}
\end{equation}
Taking expectation in \eqref{eq:tc_general_Xi_square_bound} and using
\eqref{eq:tc_general_rank_one_square_bound} together with
\eqref{eq:selected_covariance_z_second_moment_general}, we get
\begin{equation}
  \mathbb E_\rho[\hat X_i^2]
  \preceq
  2A_l(\rho)^2\Sigma_l(\rho)
  +
  2\Sigma_l(\rho)^2 .
  \label{eq:tc_general_EXi_square_first_bound}
\end{equation}

Since \(\Sigma_l(\rho)\) is positive semidefinite and
\eqref{eq:selected_centered_radius_bound_general} holds almost surely, we have
\begin{equation}
  0
  \preceq
  \Sigma_l(\rho)
  =
  \mathbb E_\rho[\hat z_i\hat z_i^\top]
  \preceq
  A_l(\rho)^2 I_l,
  \label{eq:tc_general_Sigma_l_loewner_radius_bound}
\end{equation}
which implies
\begin{equation}
  \Sigma_l(\rho)^2
  \preceq
  A_l(\rho)^2\Sigma_l(\rho)
  \preceq
  A_l(\rho)^4 I_l.
  \label{eq:tc_general_Sigma_l_square_bound}
\end{equation}
Combining \eqref{eq:tc_general_EXi_square_first_bound}, 
\eqref{eq:tc_general_Sigma_l_loewner_radius_bound}, and
\eqref{eq:tc_general_Sigma_l_square_bound}, 
we obtain
\begin{equation}
  \mathbb E_\rho[\hat X_i^2]
  \preceq
  4A_l(\rho)^4 I_l .
  \label{eq:tc_general_EXi_square_final_bound}
\end{equation}
Summing \eqref{eq:tc_general_EXi_square_final_bound} over
\(i=1,\ldots,N\), we have
\begin{equation}
  \sum_{i=1}^N
  \mathbb E_\rho[\hat X_i^2]
  \preceq
  4N A_l(\rho)^4 I_l,
  \label{eq:tc_general_variance_loewner_bound}
\end{equation}
and hence
\begin{equation}
  \left\|
    \sum_{i=1}^N
    \mathbb E_\rho[\hat X_i^2]
  \right\|_{\mathrm{op}}
  \le
  4N A_l(\rho)^4.
  \label{eq:tc_general_variance_proxy_bound}
\end{equation}

\noindent\textbf{Step 4: Matrix Bernstein inequality.}
We now recall the self-adjoint matrix Bernstein inequality
\cite[(ii) of Theorem~6.1]{Tropp2012UserFriendly}.  Let
\(\hat Y_1,\ldots,\hat Y_N\) be independent self-adjoint random matrices
of dimension \(d\) satisfying
\begin{equation}
  \mathbb E[\hat Y_i]=0,
  \qquad
  \|\hat Y_i\|_{\mathrm{op}}\le L
  \quad\text{almost surely}.
  \label{eq:tc_general_matrix_bernstein_assumptions}
\end{equation}
Define the variance parameter
\begin{equation}
  \sigma^2
  :=
  \left\|
    \sum_{i=1}^N
    \mathbb E[\hat Y_i^2]
  \right\|_{\mathrm{op}}.
  \label{eq:tc_general_matrix_bernstein_variance_parameter}
\end{equation}
Then, for every \(s\ge0\),
\begin{equation}
  \mathbb P\!\left(
    \lambda_{\max}\!\left(
      \sum_{i=1}^N \hat Y_i
    \right)
    \ge s
  \right)
  \le
  d\,
  \exp\!\left(
    -
    \frac{s^2}{
      2\sigma^2+\frac{2Ls}{3}
    }
  \right).
  \label{eq:tc_general_matrix_bernstein_statement}
\end{equation}

We apply \eqref{eq:tc_general_matrix_bernstein_statement} to the centered
self-adjoint matrices
\(
  \hat X_1,\ldots,\hat X_N .
\)
By \eqref{eq:tc_general_Xi_centered}, these matrices are centered.
Since the matrices have dimension \(l\), substituting
\eqref{eq:tc_general_Xi_norm_bound} and
\eqref{eq:tc_general_variance_proxy_bound} into
\eqref{eq:tc_general_matrix_bernstein_statement} gives, for every \(s>0\),
\begin{align}
&  \mathbb P\!\left(
    \lambda_{\max}\!\left(
      \sum_{i=1}^N \hat X_i
    \right)
    \ge s
  \right) \notag\\
  \le &
  l
  \exp\!\left(
    -
    \frac{s^2}{
      2\cdot 4N A_l(\rho)^4
      +
      \frac{2}{3}\cdot 2A_l(\rho)^2 s
    }
  \right)
  \notag\\
  =&
  l
  \exp\!\left(
    -
    \frac{s^2}{
      8N A_l(\rho)^4
      +
      \frac{4}{3}A_l(\rho)^2s
    }
  \right).
  \label{eq:tc_general_Bernstein_upper_tail}
\end{align}

Applying the same bound to the centered self-adjoint matrices
\(-\hat X_1,\ldots,-\hat X_N\), we obtain
\begin{align}
&  \mathbb P\!\left(
    \lambda_{\max}\!\left(
      -\sum_{i=1}^N \hat X_i
    \right)
    \ge s
  \right)\notag\\
  \le&
  l
  \exp\!\left(
    -
    \frac{s^2}{
      8N A_l(\rho)^4
      +
      \frac{4}{3}A_l(\rho)^2s
    }
  \right).
  \label{eq:tc_general_Bernstein_lower_tail}
\end{align}
Therefore, by the union bound,
\begin{align}
&  \mathbb P\!\left(
    \left\|
      \sum_{i=1}^N \hat X_i
    \right\|_{\mathrm{op}}
    \ge s
  \right) \notag\\
  \le &
  \mathbb P\!\left(
    \lambda_{\max}\!\left(
      \sum_{i=1}^N \hat X_i
    \right)
    \ge s
  \right)
  +
  \mathbb P\!\left(
    \lambda_{\max}\!\left(
      -\sum_{i=1}^N \hat X_i
    \right)
    \ge s
  \right)
  \notag\\
  \le &
  2l
  \exp\!\left(
    -
    \frac{s^2}{
      8N A_l(\rho)^4
      +
      \frac{4}{3}A_l(\rho)^2s
    }
  \right).
  \label{eq:tc_general_Bernstein_two_sided_sum}
\end{align}

\noindent\textbf{Step 5: Tail bound for the empirical covariance.}
By the averaging identity \eqref{eq:tc_general_error_average_Xi}, the event
\(
  \left\|
    \widehat\Sigma_{l,N}^{\mathrm{tc}}
    -
    \Sigma_l(\rho)
  \right\|_{\mathrm{op}}
  \ge t
\)
is equivalent to
\(
  \left\|
    \sum_{i=1}^N \hat X_i
  \right\|_{\mathrm{op}}
  \ge Nt.
\)
Thus, substituting
$s=Nt$
into \eqref{eq:tc_general_Bernstein_two_sided_sum}, we obtain
\begin{align}
&  \mathbb P\!\left(
  \left\|
    \widehat\Sigma_{l,N}^{\mathrm{tc}}
    -
    \Sigma_l(\rho)
  \right\|_{\mathrm{op}}
  \ge t
  \right) \notag\\
  \le &
  2l
  \exp\!\left(
    -
    \frac{Nt^2}{
      8A_l(\rho)^4+\frac43A_l(\rho)^2t
    }
  \right).
  \label{eq:tc_general_tail_bound_final}
\end{align}
This proves the tail estimate
\eqref{eq:true_centered_selected_tail_general}.

\noindent{\bf Step 6:} Proof of \eqref{eq:true_centered_selected_high_probability_general}

Set
\begin{align*}
u &:= \log\!\left(\frac{2l}{\delta}\right), \\
t &:=
 \varepsilon_{l,N}^{\mathrm{tc}}(\delta)
=
 A_l(\rho)^2
 \left(
 \sqrt{\frac{8\log(2l/\delta)}{N}}
 +
 \frac{4\log(2l/\delta)}{3N}
 \right)\\
& =
A_l(\rho)^2
\left(
\sqrt{\frac{8u}{N}}
+
\frac{4u}{3N}
\right).
\end{align*}
We first record the following elementary estimate, whose verification is given below:
\begin{align}
\frac{Nt^2}{8A_l(\rho)^4+\frac43A_l(\rho)^2 t}
\ge u. \label{SAA1}
\end{align}

Substituting \eqref{SAA1} into \eqref{eq:tc_general_tail_bound_final} yields
\begin{equation}
\mathbb P\!\left(
\left\|
\widehat\Sigma_{l,N}^{\mathrm{tc}}
-
\Sigma_l(\rho)
\right\|_{\mathrm{op}}
\ge t
\right)
\le
2l e^{-u}
=
\delta.
\end{equation}
Hence, with probability at least \(1-\delta\), we have
\begin{equation}
\left\|
\widehat\Sigma_{l,N}^{\mathrm{tc}}
-
\Sigma_l(\rho)
\right\|_{\mathrm{op}}
\le
\varepsilon_{l,N}^{\mathrm{tc}}(\delta),
\label{eq:tc_bound_start}
\end{equation}
which proves \eqref{eq:true_centered_selected_high_probability_general}.

\noindent{\bf Step 7:} Proof of \eqref{SAA1}.

We put $a:=A_l(\rho)^2,
x:=\sqrt{\frac{8u}{N}},
y:=\frac{4u}{3N}$.
Then \(t=a(x+y)\). Hence
\[
\frac{Nt^2}{8A_l(\rho)^4+\frac43A_l(\rho)^2 t}
=
\frac{N a^2(x+y)^2}{8a^2+\frac43a^2(x+y)}
=
\frac{N(x+y)^2}{8+\frac43(x+y)}.
\]
Therefore, to prove \eqref{SAA1}, it suffices to
show
\begin{align}
N(x+y)^2
\ge
u\left(8+\frac43(x+y)\right).\label{BAS1}
\end{align}
Now, by the definitions of \(x\) and \(y\), we have
$Nx^2=8u, Ny=\frac{4u}{3}$.
Thus
\[
N(x+y)^2
=
Nx^2+2Nxy+Ny^2
=
8u+2Nxy+Ny^2,
\]
while
\[
u\left(8+\frac43(x+y)\right)
=
8u+\frac{4u}{3}x+\frac{4u}{3}y
=
8u+Nxy+Ny^2.
\]
Consequently,
\begin{align*}
N(x+y)^2
=&
8u+2Nxy+Ny^2
\ge
8u+Nxy+Ny^2\notag\\
=&
u\left(8+\frac43(x+y)\right).
\end{align*}
This proves \eqref{BAS1}, which implies
\eqref{SAA1}.
\end{proof}

\subsection{True-centered selected spectral approximation}

The concentration proposition gives an operator-norm perturbation bound for
the selected covariance matrix.  The next theorem packages the deterministic
spectral consequences of this perturbation.  
The two-sided matrix inequality and the operator-norm bound are immediate
once the deviation matrix is controlled in operator norm.
Eigenvalue stability follows
from Weyl's inequality for Hermitian matrices
\cite[Chapters~4 and~6]{HornJohnson2012}.  Stability of isolated spectral subspaces
is obtained from the Davis--Kahan theorem, in a standard gap-dependent form
\cite{DavisKahan1970,StewartSun1990,Vershynin2018,YuWangSamworth2015}.

This theorem is stated separately because it clarifies which part of the
argument is probabilistic and which part is deterministic.  Probability enters
only through the event on which
\(
  \|\widehat\Sigma_{l,N}^{\mathrm{tc}}-\Sigma_l(\rho)\|_{\mathrm{op}}
\)
is small.  Once that event holds, the eigenvalue and projector estimates are
ordinary perturbation theory.

\begin{theorem}[True-centered spectral approximation for selected shadow covariance]
\label{thm:selected_shadow_true_centered_spectral_recovery}
With probability at least \(1-\delta\), the following hold.

\begin{enumerate}
\item[(i)] \emph{True-centered Loewner approximation:}
\begin{equation}
  -\varepsilon_{l,N}^{\mathrm{tc}}(\delta)I_l
  \preceq
  \widehat\Sigma_{l,N}^{\mathrm{tc}}-\Sigma_l(\rho)
  \preceq
  \varepsilon_{l,N}^{\mathrm{tc}}(\delta)I_l .
  \label{eq:true_centered_loewner_general}
\end{equation}

\item[(ii)] \emph{True-centered operator-norm approximation:}
\begin{equation}
  \left\|
    \widehat\Sigma_{l,N}^{\mathrm{tc}}-\Sigma_l(\rho)
  \right\|_{\mathrm{op}}
  \le
  \varepsilon_{l,N}^{\mathrm{tc}}(\delta).
  \label{eq:true_centered_operator_general}
\end{equation}

\item[(iii)] \emph{True-centered selected eigenvalue approximation:}
For every \(j=1,\ldots,l\),
\begin{equation}
  \left|
    \lambda_j(\widehat\Sigma_{l,N}^{\mathrm{tc}})
    -
    \lambda_j(\Sigma_l(\rho))
  \right|
  \le
  \varepsilon_{l,N}^{\mathrm{tc}}(\delta).
  \label{eq:true_centered_eigenvalue_general}
\end{equation}

\item[(iv)] \emph{True-centered selected spectral-projector approximation:}
Let \(S\) be an isolated spectral cluster of \(\Sigma_l(\rho)\), and define
\begin{equation}
  \gamma_{S,l}
  :=
  \operatorname{dist}
  \bigl(
    S,
    \operatorname{spec}(\Sigma_l(\rho))\setminus S
  \bigr)>0 .
  \label{eq:true_centered_gap_definition_general}
\end{equation}
Let \(\Pi_{S,l}\) be the spectral projector of \(\Sigma_l(\rho)\) associated
with \(S\), and let \(\widehat\Pi_{S,l}^{\mathrm{tc}}\) be the spectral
projector of \(\widehat\Sigma_{l,N}^{\mathrm{tc}}\) associated with the
empirical eigenvalues corresponding to \(S\).  If
\begin{equation}
  \gamma_{S,l}
  >
  2\varepsilon_{l,N}^{\mathrm{tc}}(\delta),
  \label{eq:true_centered_gap_condition_general}
\end{equation}
then
\begin{equation}
  \left\|
    \widehat\Pi_{S,l}^{\mathrm{tc}}-\Pi_{S,l}
  \right\|_{\mathrm{op}}
  \le
  \frac{
    2\varepsilon_{l,N}^{\mathrm{tc}}(\delta)
  }{
    \gamma_{S,l}
  } .
  \label{eq:true_centered_projector_general}
\end{equation}
\end{enumerate}
\end{theorem}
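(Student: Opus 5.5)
The plan is to work on the single event
\[
  \mathcal E:=\Bigl\{\bigl\|\widehat\Sigma_{l,N}^{\mathrm{tc}}-\Sigma_l(\rho)\bigr\|_{\mathrm{op}}\le\varepsilon_{l,N}^{\mathrm{tc}}(\delta)\Bigr\}.
\]
By Proposition~\ref{prop:true_centered_selected_shadow_covariance_concentration}, this event has probability at least \(1-\delta\). All four items are then deterministic consequences of \(\mathcal E\), so no further union bound is needed.

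\textbf{Items (i) and (ii).} Item (ii) is exactly \(\mathcal E\). For item (i), set \(\Delta:=\widehat\Sigma_{l,N}^{\mathrm{tc}}-\Sigma_l(\rho)\), which is real symmetric. For every unit vector \(v\) we have \(|v^\top\Delta v|\le\|\Delta\|_{\mathrm{op}}\le\varepsilon\). This is equivalent to \(-\varepsilon I_l\preceq\Delta\preceq\varepsilon I_l\).

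\textbf{Item (iii).} Weyl's inequality \cite{HornJohnson2012} gives \(|\lambda_j(\Sigma_l+\Delta)-\lambda_j(\Sigma_l)|\le\|\Delta\|_{\mathrm{op}}\) when both spectra are ordered the same way. One can also derive this from the Courant--Fischer min-max characterization combined with the Loewner bounds in (i): the bounds give \(\lambda_j(\Sigma_l)-\varepsilon\le\lambda_j(\widehat\Sigma)\le\lambda_j(\Sigma_l)+\varepsilon\).

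\textbf{Item (iv).} This is the main step, and most of the care goes into making ``the empirical eigenvalues corresponding to \(S\)'' well defined. Say \(S\) consists of the eigenvalues with indices in a set \(\mathcal J\), counted with multiplicity. By (iii), each empirical eigenvalue \(\lambda_j(\widehat\Sigma)\) with \(j\in\mathcal J\) lies within \(\varepsilon\) of \(S\). Each eigenvalue with \(j\notin\mathcal J\) lies within \(\varepsilon\) of \(\operatorname{spec}(\Sigma_l)\setminus S\). Since \(\gamma_{S,l}>2\varepsilon\), these two groups are separated, so \(\widehat\Pi_{S,l}^{\mathrm{tc}}\) is the projector onto the eigenvectors with indices in \(\mathcal J\) and has the same rank as \(\Pi_{S,l}\).

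Next I would apply a Davis--Kahan \(\sin\Theta\) bound. I would use the version with the separation taken between \(S\) and the perturbed complementary eigenvalues, which is at least \(\gamma_{S,l}-\varepsilon\). This gives
\[
  \|\widehat\Pi-\Pi\|_{\mathrm{op}}=\|\sin\Theta\|_{\mathrm{op}}\le\frac{\|\Delta\|_{\mathrm{op}}}{\gamma_{S,l}-\varepsilon}.
\]
The gap condition \(\gamma_{S,l}>2\varepsilon\) implies \(\gamma_{S,l}-\varepsilon>\gamma_{S,l}/2\), so the right-hand side is at most \(2\varepsilon/\gamma_{S,l}\), which is \eqref{eq:true_centered_projector_general}. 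The identity \(\|\widehat\Pi-\Pi\|_{\mathrm{op}}=\|\sin\Theta\|_{\mathrm{op}}\) uses that the two projectors have equal rank, which is why the rank argument above is needed.

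If I prefer to avoid the one-sided-gap subtleties of Davis--Kahan, there are two fallbacks. The first is the Yu--Wang--Samworth variant \cite{YuWangSamworth2015}, which uses only the population gap. The second is a direct Sylvester-equation argument: write \(\widehat\Pi\Delta(I-\Pi)=\widehat\Pi\widehat\Sigma(I-\Pi)-\widehat\Pi\Sigma_l(I-\Pi)\) and lower bound this using the spectral separation. Either route gives the constant \(2\) under the stated gap condition.
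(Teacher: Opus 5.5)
Your proposal is correct and follows the paper's proof: you work on the single event from Proposition~\ref{prop:true_centered_selected_shadow_covariance_concentration}, read off (ii) and then (i) by self-adjointness, obtain (iii) from Weyl's inequality, and obtain (iv) from a Davis--Kahan bound. The differences are minor: you make the rank matching of $\widehat\Pi^{\mathrm{tc}}_{S,l}$ and $\Pi_{S,l}$ explicit and derive $2\varepsilon/\gamma_{S,l}$ from the perturbed-gap estimate $\varepsilon/(\gamma_{S,l}-\varepsilon)$, whereas the paper quotes the population-gap form $2\|\widehat A-A\|_{\mathrm{op}}/\gamma$ directly (both routes tacitly treat $S$ as an interval-type cluster, which is the setting where the constant-one operator-norm $\sin\Theta$ theorem applies).
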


\begin{proof}
\par

\noindent\textbf{Step 1: Operator-norm and two-sided matrix bounds.}
By Proposition~\ref{prop:true_centered_selected_shadow_covariance_concentration},
and in particular by the high-probability estimate
\eqref{eq:true_centered_selected_high_probability_general}, there exists an
event \(\mathcal E_{\mathrm{tc}}\) with
\(
  \mathbb P(\mathcal E_{\mathrm{tc}})\ge 1-\delta
\)
such that, on \(\mathcal E_{\mathrm{tc}}\),
\begin{align}
  \left\|
    \widehat\Sigma_{l,N}^{\mathrm{tc}}-\Sigma_l(\rho)
  \right\|_{\mathrm{op}}
  \le
  \varepsilon_{l,N}^{\mathrm{tc}}(\delta).\label{eq:true_centered_operator_bound_from_concentration_general}
\end{align}
This proves the operator-norm statement \eqref{eq:true_centered_operator_general}.  
Since
\(\widehat\Sigma_{l,N}^{\mathrm{tc}}-\Sigma_l(\rho)\) is self-adjoint, the same
operator-norm bound \eqref{eq:true_centered_operator_bound_from_concentration_general}
implies the two-sided matrix inequality
\begin{equation}
  -\varepsilon_{l,N}^{\mathrm{tc}}(\delta)I_l
  \preceq
  \widehat\Sigma_{l,N}^{\mathrm{tc}}-\Sigma_l(\rho)
  \preceq
  \varepsilon_{l,N}^{\mathrm{tc}}(\delta)I_l.
  \label{eq:true_centered_loewner_bound_from_operator_general}
\end{equation}
This proves \eqref{eq:true_centered_loewner_general}.

\noindent\textbf{Step 2: Eigenvalue perturbation.}
We recall Weyl's eigenvalue perturbation inequality for self-adjoint
matrices.  If \(A\) and \(B\) are \(l\times l\) self-adjoint matrices and
\(\lambda_1(\cdot)\le\cdots\le\lambda_l(\cdot)\) denote their ordered
eigenvalues, then
\begin{equation}
  \left|
    \lambda_j(A)-\lambda_j(B)
  \right|
  \le
  \|A-B\|_{\mathrm{op}},
  \qquad
  j=1,\ldots,l.
  \label{eq:weyl_inequality_recalled_general}
\end{equation}
See, for example, \cite[Chapters~4 and~6]{HornJohnson2012}.  Applying
\eqref{eq:weyl_inequality_recalled_general} with
$A=\widehat\Sigma_{l,N}^{\mathrm{tc}}$ and 
$B=\Sigma_l(\rho)$,
and using \eqref{eq:true_centered_operator_bound_from_concentration_general},
we obtain
\begin{equation}
  \left|
    \lambda_j(\widehat\Sigma_{l,N}^{\mathrm{tc}})
    -
    \lambda_j(\Sigma_l(\rho))
  \right|
  \le
  \varepsilon_{l,N}^{\mathrm{tc}}(\delta),
  \qquad
  j=1,\ldots,l.
  \label{eq:true_centered_eigenvalue_bound_from_weyl_general}
\end{equation}
This proves \eqref{eq:true_centered_eigenvalue_general}.

\noindent\textbf{Step 3: Spectral-projector perturbation.}
We use a standard finite-dimensional Davis--Kahan perturbation bound; see
Vershynin~\cite[Theorem~4.5.5]{Vershynin2018} or
Yu--Wang--Samworth~\cite[Theorem~1]{YuWangSamworth2015}.
The original source is Davis--Kahan~\cite{DavisKahan1970}.
Let \(A\) and
\(\widehat A\) be self-adjoint matrices.  Let \(S\) be an isolated spectral
cluster of \(A\), and let
\begin{equation}
  \gamma  :=
  \operatorname{dist}
  \bigl(
    S,
    \operatorname{spec}(A)\setminus S
  \bigr)>0.
  \label{eq:davis_kahan_gap_recalled_general}
\end{equation}
Let \(\Pi_S\) and \(\widehat\Pi_S\) be the spectral projectors of \(A\) and
\(\widehat A\) associated with the corresponding spectral clusters.  If
\begin{equation}
  \|\widehat A-A\|_{\mathrm{op}}<\frac{\gamma}{2},
  \label{eq:davis_kahan_gap_condition_recalled_general}
\end{equation}
then
\begin{equation}
  \|\widehat\Pi_S-\Pi_S\|_{\mathrm{op}}
  \le
  \frac{2\|\widehat A-A\|_{\mathrm{op}}}{\gamma}.
  \label{eq:davis_kahan_projector_bound_recalled_general}
\end{equation}

We apply \eqref{eq:davis_kahan_projector_bound_recalled_general} with
\begin{equation}
  A=\Sigma_l(\rho),
  \qquad
  \widehat A=\widehat\Sigma_{l,N}^{\mathrm{tc}},
  \qquad
  \gamma=\gamma_{S,l}.
  \label{eq:davis_kahan_application_true_centered_general}
\end{equation}
The gap condition
\eqref{eq:true_centered_gap_condition_general}, together with
\eqref{eq:true_centered_operator_general}, implies
\(
  \|\widehat A-A\|_{\mathrm{op}}
  \le
  \varepsilon_{l,N}^{\mathrm{tc}}(\delta)
  <
  \frac{\gamma_{S,l}}{2}.
\)
Therefore, by \eqref{eq:davis_kahan_projector_bound_recalled_general},
\(
  \left\|
    \widehat\Pi_{S,l}^{\mathrm{tc}}-\Pi_{S,l}
  \right\|_{\mathrm{op}}
  \le
  \frac{
    2\varepsilon_{l,N}^{\mathrm{tc}}(\delta)
  }{
    \gamma_{S,l}
  },
\)
which is exactly \eqref{eq:true_centered_projector_general}.
\end{proof}

\begin{remark}[Scope of the true-centered selected theorem]
\label{rem:true_centered_selected_scope_general}
The theorem controls only the selected covariance matrix
\begin{equation}
  \Sigma_l(\rho)
  =
  R_l\Sigma(\rho)R_l^\top
  \label{eq:true_centered_scope_selected_covariance_general}
\end{equation}
and its true-centered empirical approximation.  It does not assert
operator-norm recovery of the full \(p\times p\) covariance matrix
\(\Sigma(\rho)\).  The dimension factor in the concentration estimate is the
selected dimension \(l\), and the deterministic radius is the selected
centered radius \(A_l(\rho)\).
\end{remark}

The true-centered covariance is an intermediate object.  The next subsection
transfers the preceding bounds to the sample-centered selected covariance by
using the exact rank-one centering relation and a concentration bound for the
selected empirical mean.

\subsection{Centering correction for selected empirical covariance}
\label{subsec:centering_correction_selected_covariance}

We first record the exact algebraic relation between the true-centered and
sample-centered selected empirical covariance matrices.  This step is
independent of the shadow protocol.  It only uses the definitions of empirical
centering and the fixed selection matrix \(R_l\).

Recall that
$  \hat z_i
  =
  R_l(\hat x_i-m)$ for 
  $i=1,\ldots,N$,
as defined in \eqref{eq:selected_centered_vector_general}.  The selected
empirical mean of the centered outputs is
\begin{equation}
  \bar{\hat z}_N
  :=
  R_l(\bar{\hat x}_N-m)
  =
  \frac1N\sum_{i=1}^N \hat z_i .
  \label{eq:selected_empirical_mean_z_general}
\end{equation}
The point of this subsection is that replacing the unknown mean
\(m\) by the empirical mean \(\bar{\hat x}_N\) introduces a correction that is
not arbitrary: after selection, it is exactly the negative rank-one matrix
\(-\bar{\hat z}_N\bar{\hat z}_N^\top\).

\begin{lemma}[Compressed centering identity]
\label{lem:compressed_centering_identity_general}
The selected true-centered and sample-centered empirical covariance matrices
satisfy
\begin{equation}
  \widehat\Sigma_{l,N}^{\mathrm{sc}}
  -
  \widehat\Sigma_{l,N}^{\mathrm{tc}}
  =
  -
  \bar{\hat z}_N\bar{\hat z}_N^\top .
  \label{eq:compressed_centering_identity_general}
\end{equation}
Consequently,
\begin{equation}
  \widehat\Sigma_{l,N}^{\mathrm{sc}}
  -
  \Sigma_l(\rho)
  =
  \left(
  \widehat\Sigma_{l,N}^{\mathrm{tc}}
  -
  \Sigma_l(\rho)
  \right)
  -
  \bar{\hat z}_N\bar{\hat z}_N^\top .
  \label{eq:compressed_sample_centered_error_decomposition_general}
\end{equation}
\end{lemma}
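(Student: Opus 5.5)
The identity is purely algebraic. The plan is to write both selected empirical covariances in terms of the selected centered vectors \(\hat z_i=R_l(\hat x_i-m)\) and expand. The true-centered form is already available from \eqref{eq:selected_true_centered_covariance_z_form_general}:
\[
\widehat\Sigma_{l,N}^{\mathrm{tc}}=\frac1N\sum_{i=1}^N\hat z_i\hat z_i^\top .
\]
For the sample-centered form, start from \eqref{eq:fs_general_selected_sample_centered_def}. Using linearity of \(R_l\), we have
\[
R_l\hat x_i-R_l\bar{\hat x}_N=R_l(\hat x_i-m)-R_l(\bar{\hat x}_N-m)=\hat z_i-\bar{\hat z}_N ,
\]
where \(\bar{\hat z}_N\) is as in \eqref{eq:selected_empirical_mean_z_general}. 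It then follows from \eqref{eq:fs_general_selected_sc_expanded} that
\[
\widehat\Sigma_{l,N}^{\mathrm{sc}}=\frac1N\sum_{i=1}^N(\hat z_i-\bar{\hat z}_N)(\hat z_i-\bar{\hat z}_N)^\top .
\]

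Next, expand the product into four terms:
\[
\frac1N\sum_i\hat z_i\hat z_i^\top-\Bigl(\frac1N\sum_i\hat z_i\Bigr)\bar{\hat z}_N^\top-\bar{\hat z}_N\Bigl(\frac1N\sum_i\hat z_i\Bigr)^\top+\bar{\hat z}_N\bar{\hat z}_N^\top .
\]
Since \(\frac1N\sum_i\hat z_i=\bar{\hat z}_N\), the two cross terms each equal \(-\bar{\hat z}_N\bar{\hat z}_N^\top\). Together with the last term they sum to \(-\bar{\hat z}_N\bar{\hat z}_N^\top\), which gives \eqref{eq:compressed_centering_identity_general}.

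As an alternative route, one could compress the full identity \eqref{eq:exact_rank_one_relation_true_sample_centered_general} by conjugating with \(R_l\), using \(R_l(\bar{\hat x}_N-m)=\bar{\hat z}_N\). That route relies on the appendix identity, so the direct expansion is preferable because it is self-contained.

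Finally, the decomposition \eqref{eq:compressed_sample_centered_error_decomposition_general} follows by subtracting \(\Sigma_l(\rho)\) from both sides of \eqref{eq:compressed_centering_identity_general} and regrouping. I do not expect any genuine obstacle here. The only care needed is to confirm that \(\bar{\hat z}_N\) as defined really equals the average of the \(\hat z_i\), which is immediate from linearity of \(R_l\) and the definition of \(\bar{\hat x}_N\).
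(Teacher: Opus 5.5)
Your proof is correct: expanding $(\hat z_i-\bar{\hat z}_N)(\hat z_i-\bar{\hat z}_N)^\top$ and using $\frac1N\sum_i\hat z_i=\bar{\hat z}_N$ gives the rank-one identity, and subtracting $\Sigma_l(\rho)$ gives the decomposition. The paper's proof is the route you list as the alternative: it conjugates the full-space identity \eqref{eq:exact_rank_one_relation_true_sample_centered_general} by $R_l$ and $R_l^\top$, and that identity is proved in Appendix~\ref{AA1} by the same four-term expansion, so the two arguments are the same computation done after versus before compression.
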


\begin{proof}
The full empirical covariance matrices satisfy the exact rank-one identity
\begin{equation}
  \widehat\Sigma_N^{\mathrm{sc}}
  -
  \widehat\Sigma_N^{\mathrm{tc}}
  =
  -
  (\bar{\hat x}_N-m)(\bar{\hat x}_N-m)^\top ,
  \label{eq:compressed_centering_full_identity_recalled_general}
\end{equation}
which is the identity stated in
\eqref{eq:exact_rank_one_relation_true_sample_centered_general}.  Multiplying
\eqref{eq:compressed_centering_full_identity_recalled_general} by \(R_l\) on
the left and by \(R_l^\top\) on the right, and using the definitions
\eqref{eq:fs_general_selected_sample_centered_def} and
\eqref{eq:fs_general_selected_true_centered_def}, gives
\begin{align}
&  \widehat\Sigma_{l,N}^{\mathrm{sc}}
  -
  \widehat\Sigma_{l,N}^{\mathrm{tc}}
  =
  R_l
  \left(
  \widehat\Sigma_N^{\mathrm{sc}}
  -
  \widehat\Sigma_N^{\mathrm{tc}}
  \right)
  R_l^\top
  \notag\\
  =&
  -
  R_l(\bar{\hat x}_N-m)(\bar{\hat x}_N-m)^\top R_l^\top
  \notag\\
  =&
  -
  \bigl(R_l(\bar{\hat x}_N-m)\bigr)
  \bigl(R_l(\bar{\hat x}_N-m)\bigr)^\top
  =
  -
  \bar{\hat z}_N\bar{\hat z}_N^\top ,
  \label{eq:compressed_centering_identity_proof_general}
\end{align}
where the last equality uses \eqref{eq:selected_empirical_mean_z_general}.
This proves \eqref{eq:compressed_centering_identity_general}.  Subtracting
\(\Sigma_l(\rho)\) from both sides of
\eqref{eq:compressed_centering_identity_general} gives
\eqref{eq:compressed_sample_centered_error_decomposition_general}.
\end{proof}

It remains to control the rank-one correction in
\eqref{eq:compressed_sample_centered_error_decomposition_general}.  
This evaluation reduces to bounding the selected empirical mean
because 
\(
  \left\|
  \bar{\hat z}_N\bar{\hat z}_N^\top
  \right\|_{\mathrm{op}}
  =
  \|\bar{\hat z}_N\|_2^2\).
This remaining issue is done in the next subsection.

\subsection{Empirical mean control and sample-centering transfer}
\label{subsec:empirical_mean_transfer_sample_centering}

The centering identity in
Lemma~\ref{lem:compressed_centering_identity_general} reduces the passage
from true-centered to sample-centered covariance to a bound on
\(\|\bar{\hat z}_N\|_2^2\).  We control this quantity using the selected
centered-radius bound
\eqref{eq:selected_centered_radius_bound_general}, together with coordinatewise Hoeffding
inequalities and a union bound over the \(l\) selected coordinates.

We now bound the selected empirical mean.  The rank-one correction is
controlled by its only nonzero eigenvalue,
\(
  \|\bar{\hat z}_N\bar{\hat z}_N^\top\|_{\mathrm{op}}
  =
  \|\bar{\hat z}_N\|_2^2 .
\)
The coordinates of \(\hat z_i\) need not be independent within a single shot.
We therefore use only independence across shots: each selected coordinate
average is a bounded scalar average.  Hoeffding's inequality is applied
coordinatewise, and a union bound over the \(l\) selected coordinates gives
the Euclidean-norm bound.

\begin{lemma}[Selected empirical-mean bound]
\label{lem:selected_empirical_mean_bound_general}
For every \(\delta\in(0,1)\), with probability at least \(1-\delta\),
\begin{equation}
  \|\bar{\hat z}_N\|_2^2
  \le
  \mu_{l,N}(\delta),
  \label{eq:selected_empirical_mean_bound_general}
\end{equation}
where \(\mu_{l,N}(\delta)\) is defined in \eqref{eq:mu_general}.  Equivalently,
the same event satisfies the relation
\begin{equation}
  \bar{\hat z}_N\bar{\hat z}_N^\top
  \preceq
  \mu_{l,N}(\delta) I_l .
  \label{eq:selected_empirical_mean_loewner_bound_general}
\end{equation}
\end{lemma}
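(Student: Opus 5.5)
We bound each selected coordinate of \(\bar{\hat z}_N\) separately with Hoeffding's inequality and combine the \(l\) coordinates by a union bound. Write \(\hat z_{i,a}\) for the \(a\)-th coordinate of \(\hat z_i\), so that \(\bar{\hat z}_{N,a}=\frac1N\sum_{i=1}^N\hat z_{i,a}\). For fixed \(a\), the variables \(\hat z_{1,a},\ldots,\hat z_{N,a}\) are independent across shots and have mean zero by \eqref{eq:selected_centered_vector_mean_zero_general}. By Lemma~\ref{lem:selected_centered_radius_general}, \(|\hat z_{i,a}|\le\|\hat z_i\|_2\le A_l(\rho)\) almost surely. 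Hence each \(\hat z_{i,a}\) lies in an interval of length \(2A_l(\rho)\). No independence between coordinates within a shot is needed.

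Hoeffding's inequality \cite{Hoeffding1963} then gives, for every \(s>0\),
\[
  \mathbb P\bigl(|\bar{\hat z}_{N,a}|\ge s\bigr)
  \le
  2\exp\!\left(-\frac{2N^2s^2}{N(2A_l(\rho))^2}\right)
  =
  2\exp\!\left(-\frac{Ns^2}{2A_l(\rho)^2}\right).
\]
We choose \(s^2=2A_l(\rho)^2\log(2l/\delta)/N\), which makes the right-hand side equal to \(\delta/l\). A union bound over \(a=1,\ldots,l\) shows that, with probability at least \(1-\delta\), every coordinate satisfies \(\bar{\hat z}_{N,a}^2\le 2A_l(\rho)^2\log(2l/\delta)/N\). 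Summing over the \(l\) coordinates yields
\[
  \|\bar{\hat z}_N\|_2^2
  \le
  \frac{2lA_l(\rho)^2\log(2l/\delta)}{N}
  =
  \mu_{l,N}(\delta),
\]
which is \eqref{eq:selected_empirical_mean_bound_general}.

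The Loewner form \eqref{eq:selected_empirical_mean_loewner_bound_general} follows from the fact that \(\bar{\hat z}_N\bar{\hat z}_N^\top\) is positive semidefinite with largest eigenvalue \(\|\bar{\hat z}_N\|_2^2\). On the same event this gives \(\bar{\hat z}_N\bar{\hat z}_N^\top\preceq\|\bar{\hat z}_N\|_2^2I_l\preceq\mu_{l,N}(\delta)I_l\).

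The only delicate point is the constant bookkeeping. The range bound must come from the Euclidean radius \(A_l(\rho)\), which dominates each coordinate, and Hoeffding must be applied with range width \(2A_l(\rho)\) rather than \(A_l(\rho)\); this is what produces the factor \(2\) in \(\mu_{l,N}\). The coordinatewise union bound costs the extra factor \(l\) relative to a vector Bernstein bound, but it matches the stated \(\mu_{l,N}(\delta)\) exactly.
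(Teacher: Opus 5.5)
Your proposal is correct and follows the paper's proof essentially step for step: coordinatewise Hoeffding with range width $2A_l(\rho)$ taken from the Euclidean radius, the choice $s=A_l(\rho)\sqrt{2\log(2l/\delta)/N}$, a union bound over the $l$ selected coordinates, and summation of the squared coordinates to obtain $\mu_{l,N}(\delta)$. Your derivation of the Loewner form from the top eigenvalue $\|\bar{\hat z}_N\|_2^2$ of the rank-one matrix is equivalent to the paper's Cauchy--Schwarz argument.
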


\begin{proof}
Write
\begin{equation}
  \hat z_i
  =
  (\hat z_{i,1},\ldots,\hat z_{i,l})^\top,
  \qquad
  \bar{\hat z}_N
  =
  (\bar{\hat z}_{N,1},\ldots,\bar{\hat z}_{N,l})^\top .
  \label{eq:selected_empirical_mean_coordinates_general}
\end{equation}
The relation \eqref{eq:selected_centered_vector_mean_zero_general}
implies
\begin{equation}
  \mathbb E_\rho[\hat z_{i,a}]=0,
  \qquad
  a=1,\ldots,l.
  \label{eq:selected_empirical_mean_coordinate_centered_general}
\end{equation}
Moreover, the relation \eqref{eq:selected_centered_radius_bound_general} in Lemma~\ref{lem:selected_centered_radius_general} implies
\begin{equation}
  |\hat z_{i,a}|
  \le
  \|\hat z_i\|_2
  =
  \|R_l(\hat x_i-m)\|_2
  \le
  A_l(\rho),
  \qquad
  a=1,\ldots,l.
  \label{eq:selected_empirical_mean_coordinate_bound_general}
\end{equation}

We recall Hoeffding's inequality
\cite[Theorem 2]{Hoeffding1963},\cite{Vershynin2018}.  If
\(\hat Y_1,\ldots,\hat Y_N\) are independent real random variables satisfying
\(a_i\le \hat Y_i\le b_i\) almost surely, then for every \(t>0\),
\begin{equation}
  \mathbb P\!\left(
  \left|
  \sum_{i=1}^N
  \bigl(\hat Y_i-\mathbb E[\hat Y_i]\bigr)
  \right|
  \ge t
  \right)
  \le
  2\exp\!\left(
  -
  \frac{2t^2}{
  \sum_{i=1}^N (b_i-a_i)^2
  }
  \right).
  \label{eq:hoeffding_inequality_recalled_general}
\end{equation}
Apply \eqref{eq:hoeffding_inequality_recalled_general} to
\(\hat Y_i=\hat z_{i,a}\).  Due to
\eqref{eq:selected_empirical_mean_coordinate_centered_general} and
\eqref{eq:selected_empirical_mean_coordinate_bound_general}, 
every \(s>0\) satisfies
\begin{align}
  \mathbb P\!\left(
  |\bar{\hat z}_{N,a}|
  \ge s
  \right)
  &=
  \mathbb P\!\left(
  \left|
  \frac1N\sum_{i=1}^N \hat z_{i,a}
  \right|
  \ge s
  \right)
  \notag\\
  &\le
  2\exp\!\left(
  -
  \frac{Ns^2}{2A_l(\rho)^2}
  \right).
  \label{eq:selected_empirical_mean_coordinate_tail_general}
\end{align}
Taking a union bound over \(a=1,\ldots,l\), we obtain
\begin{equation}
  \mathbb P\!\left(
  \max_{1\le a\le l}
  |\bar{\hat z}_{N,a}|
  \ge s
  \right)
  \le
  2l
  \exp\!\left(
  -
  \frac{Ns^2}{2A_l(\rho)^2}
  \right).
  \label{eq:selected_empirical_mean_union_bound_general}
\end{equation}
Choose
$  s
  :=
  A_l(\rho)
  \sqrt{
  \frac{2\log(2l/\delta)}{N}
  }$.
Then the right-hand side of
\eqref{eq:selected_empirical_mean_union_bound_general} is equal to
\(\delta\).  Hence, with probability at least \(1-\delta\), 
the relation
$  |\bar{\hat z}_{N,a}|
  \le
  A_l(\rho)
  \sqrt{
  \frac{2\log(2l/\delta)}{N}
  }$ holds for 
  $a=1,\ldots,l$.
This event satisfies
\begin{align}
  \|\bar{\hat z}_N\|_2^2
  &=
  \sum_{a=1}^l
  \bar{\hat z}_{N,a}^2
  \le
  l\,
  A_l(\rho)^2
  \frac{2\log(2l/\delta)}{N}
  \notag\\
  &=
  \frac{
  2lA_l(\rho)^2\log(2l/\delta)
  }{N}
  =
  \mu_{l,N}(\delta),
  \label{eq:selected_empirical_mean_bound_proof_general}
\end{align}
where the last equality is exactly the definition
\eqref{eq:mu_general}, which proves
\eqref{eq:selected_empirical_mean_bound_general}.

Finally, any \(v\in\mathbb R^l\) satisfies
\begin{align*}
  v^\top
  \bar{\hat z}_N\bar{\hat z}_N^\top
  v
  =
  \langle v,\bar{\hat z}_N\rangle^2
  \le
  \|v\|_2^2\|\bar{\hat z}_N\|_2^2
  \le
  \mu_{l,N}(\delta)\|v\|_2^2.
\end{align*}
Thus, we have
$  \bar{\hat z}_N\bar{\hat z}_N^\top
  \preceq
  \mu_{l,N}(\delta)I_l$,
which proves \eqref{eq:selected_empirical_mean_loewner_bound_general}.
\end{proof}

The next lemma records how the true-centered covariance bound is converted
into a sample-centered covariance bound.  The conversion uses only the
rank-one centering identity from
Lemma~\ref{lem:compressed_centering_identity_general}.  Since sample centering
subtracts the positive semidefinite matrix
\(\bar{\hat z}_N\bar{\hat z}_N^\top\), the upper matrix inequality is
unchanged, while the lower matrix inequality loses the additional
empirical-mean term \(\mu\).  This asymmetry is later absorbed into the
symmetric operator-norm bound.

\begin{lemma}[Transfer from true-centered to sample-centered covariance]
\label{lem:loewner_transfer_true_to_sample_general}
Assume that there exist real numbers \(\varepsilon,\mu\ge0\) such that
\begin{align}
  -\varepsilon I_l
  \preceq &
  \widehat\Sigma_{l,N}^{\mathrm{tc}}-\Sigma_l(\rho)
  \preceq
  \varepsilon I_l
  \label{eq:loewner_transfer_tc_assumption_general} \\
  \bar{\hat z}_N\bar{\hat z}_N^\top
  \preceq &
  \mu I_l .
  \label{eq:loewner_transfer_mean_assumption_general}
\end{align}
Then, the relation
\begin{equation}
  -(\varepsilon+\mu)I_l
  \preceq
  \widehat\Sigma_{l,N}^{\mathrm{sc}}-\Sigma_l(\rho)
  \preceq
  \varepsilon I_l 
  \label{eq:loewner_transfer_sc_bound_general}
\end{equation}
holds.
Consequently, the inequality
\begin{equation}
  \left\|
  \widehat\Sigma_{l,N}^{\mathrm{sc}}-\Sigma_l(\rho)
  \right\|_{\mathrm{op}}
  \le
  \varepsilon+\mu 
  \label{eq:loewner_transfer_operator_bound_general}
\end{equation}
holds.
\end{lemma}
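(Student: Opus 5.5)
The plan is to start from the compressed decomposition \eqref{eq:compressed_sample_centered_error_decomposition_general} of Lemma~\ref{lem:compressed_centering_identity_general}, which writes
\[
  \widehat\Sigma_{l,N}^{\mathrm{sc}}-\Sigma_l(\rho)
  =
  \bigl(\widehat\Sigma_{l,N}^{\mathrm{tc}}-\Sigma_l(\rho)\bigr)
  -
  \bar{\hat z}_N\bar{\hat z}_N^\top .
\]
The result then reduces to adding Loewner inequalities, using only the facts that \(\bar{\hat z}_N\bar{\hat z}_N^\top\) is positive semidefinite and is bounded above by \(\mu I_l\). No probabilistic argument is needed, because both hypotheses \eqref{eq:loewner_transfer_tc_assumption_general} and \eqref{eq:loewner_transfer_mean_assumption_general} are assumed deterministically.

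For the upper bound, I would use \(\bar{\hat z}_N\bar{\hat z}_N^\top\succeq0\), since \(v^\top\bar{\hat z}_N\bar{\hat z}_N^\top v=\langle v,\bar{\hat z}_N\rangle^2\ge0\). Subtracting a positive semidefinite matrix can only lower the matrix, so the right half of \eqref{eq:loewner_transfer_tc_assumption_general} gives
\[
\widehat\Sigma_{l,N}^{\mathrm{sc}}-\Sigma_l(\rho)\preceq \widehat\Sigma_{l,N}^{\mathrm{tc}}-\Sigma_l(\rho)\preceq\varepsilon I_l .
\]
For the lower bound, I would negate \eqref{eq:loewner_transfer_mean_assumption_general} to get \(-\bar{\hat z}_N\bar{\hat z}_N^\top\succeq-\mu I_l\). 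Adding this to \(\widehat\Sigma_{l,N}^{\mathrm{tc}}-\Sigma_l(\rho)\succeq-\varepsilon I_l\), which is allowed because Loewner order is preserved under addition, yields \(\widehat\Sigma_{l,N}^{\mathrm{sc}}-\Sigma_l(\rho)\succeq-(\varepsilon+\mu)I_l\). Together these establish \eqref{eq:loewner_transfer_sc_bound_general}.

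For the operator-norm consequence, write \(D:=\widehat\Sigma_{l,N}^{\mathrm{sc}}-\Sigma_l(\rho)\), which is real symmetric. By \eqref{eq:loewner_transfer_sc_bound_general}, every eigenvalue of \(D\) lies in \([-(\varepsilon+\mu),\varepsilon]\subseteq[-(\varepsilon+\mu),\varepsilon+\mu]\), using \(\mu\ge0\). Since \(\|D\|_{\mathrm{op}}=\max_j|\lambda_j(D)|\) for a self-adjoint matrix, this gives \eqref{eq:loewner_transfer_operator_bound_general}.

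There is no genuine obstacle in this argument. The only point needing care is the sign asymmetry: the rank-one correction is subtracted, so it affects only the lower side. That is why the upper bound stays at \(\varepsilon\) while the lower bound becomes \(\varepsilon+\mu\).
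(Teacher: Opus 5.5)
Your proposal is correct and follows the paper's proof essentially step for step. You use the same decomposition from Lemma~\ref{lem:compressed_centering_identity_general}, the same positive-semidefiniteness argument for the upper side, the same addition of Loewner inequalities for the lower side, and the same eigenvalue-interval argument for the operator-norm bound.
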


\begin{proof}
The sample-centered error decomposition
\eqref{eq:compressed_sample_centered_error_decomposition_general} yields
\begin{equation}
  \widehat\Sigma_{l,N}^{\mathrm{sc}}
  -
  \Sigma_l(\rho)
  =
  \left(
  \widehat\Sigma_{l,N}^{\mathrm{tc}}
  -
  \Sigma_l(\rho)
  \right)
  -
  \bar{\hat z}_N\bar{\hat z}_N^\top .
  \label{eq:loewner_transfer_decomposition_recalled_general}
\end{equation}
Since
\(\bar{\hat z}_N\bar{\hat z}_N^\top\succeq0\), the upper bound in
\eqref{eq:loewner_transfer_tc_assumption_general} gives
\begin{align}
  \widehat\Sigma_{l,N}^{\mathrm{sc}}
  -
  \Sigma_l(\rho)
  &=
  \left(
  \widehat\Sigma_{l,N}^{\mathrm{tc}}
  -
  \Sigma_l(\rho)
  \right)
  -
  \bar{\hat z}_N\bar{\hat z}_N^\top
  \notag\\
  &\preceq
  \widehat\Sigma_{l,N}^{\mathrm{tc}}
  -
  \Sigma_l(\rho)
  \preceq
  \varepsilon I_l,
  \label{eq:loewner_transfer_upper_bound_general}
\end{align}
which proves the upper side of
\eqref{eq:loewner_transfer_sc_bound_general}.

For the lower bound, the lower side of
\eqref{eq:loewner_transfer_tc_assumption_general} gives
\(
  \widehat\Sigma_{l,N}^{\mathrm{tc}}
  -
  \Sigma_l(\rho)
  \succeq
  -\varepsilon I_l,
\)
while \eqref{eq:loewner_transfer_mean_assumption_general} implies
\(
  -\bar{\hat z}_N\bar{\hat z}_N^\top
  \succeq
  -\mu I_l.
\)
Using \eqref{eq:loewner_transfer_decomposition_recalled_general}, we therefore
obtain
\begin{equation}
  \widehat\Sigma_{l,N}^{\mathrm{sc}}
  -
  \Sigma_l(\rho)
  \succeq
  -(\varepsilon+\mu)I_l.
  \label{eq:loewner_transfer_lower_bound_general}
\end{equation}
Combining
\eqref{eq:loewner_transfer_upper_bound_general} and
\eqref{eq:loewner_transfer_lower_bound_general} proves
\eqref{eq:loewner_transfer_sc_bound_general}.

Finally, \eqref{eq:loewner_transfer_sc_bound_general} implies that every
eigenvalue of
\(\widehat\Sigma_{l,N}^{\mathrm{sc}}-\Sigma_l(\rho)\) lies in
\([-(\varepsilon+\mu),\varepsilon]\).  Since
\(\varepsilon\le \varepsilon+\mu\), we get the relation
\(
  \left\|
  \widehat\Sigma_{l,N}^{\mathrm{sc}}-\Sigma_l(\rho)
  \right\|_{\mathrm{op}}
  \le
  \varepsilon+\mu,
\)
which proves \eqref{eq:loewner_transfer_operator_bound_general}.
\end{proof}

\subsection{Sample-centered selected spectral approximation}
\label{subsec:sample_centered_selected_spectral_approximation}

We now combine the true-centered spectral approximation with the centering
correction estimates.  The resulting theorem gives the detailed
sample-centered operator-norm, eigenvalue, and spectral-projector bounds.
This result is stronger than what is needed for the constant-radius main
theorem, but it is useful because it makes explicit the spectral consequences
of the selected covariance approximation.

\begin{theorem}[Sample-centered spectral approximation for selected shadow covariance]
\label{thm:selected_shadow_sample_centered_spectral_recovery}
Let \(\hat x_1,\ldots,\hat x_N\) be independent copies of the single-shot
shadow-output vector \(\hat x\in\mathbb R^p\).  Let
\(R_l\in\{0,1\}^{l\times p}\) be a deterministic coordinate-selection matrix
with \(R_lR_l^\top=I_l\), fixed independently of the measurement data.
Assume that the selected one-shot radius \(B_l\) in
\eqref{eq:selected_one_shot_radius_general} is finite, and define
\(A_l(\rho)\) by \eqref{eq:selected_centered_radius_general}.  Then, with
probability at least \(1-\delta\), the following hold.
\begin{enumerate}
\item[(i)] \emph{Two-sided semidefinite-order approximation:}
The relation
\begin{equation}
  -\varepsilon_{l,N}^{\mathrm{sc}}(\delta)I_l
  \preceq
  \widehat\Sigma_{l,N}^{\mathrm{sc}}-\Sigma_l(\rho)
  \preceq
  \varepsilon_{l,N}^{\mathrm{tc}}(\delta/2)I_l 
  \label{eq:selected_sc_loewner_general}
\end{equation}
holds.
\item[(ii)] \emph{Sample-centered operator-norm approximation:}
The inequality
\begin{equation}
  \left\|
    \widehat\Sigma_{l,N}^{\mathrm{sc}}-\Sigma_l(\rho)
  \right\|_{\mathrm{op}}
  \le
  \varepsilon_{l,N}^{\mathrm{sc}}(\delta)
  \label{eq:selected_sc_operator_general}
\end{equation}
holds.
\item[(iii)] \emph{Sample-centered selected eigenvalue approximation:}
Let
\begin{equation}
  \lambda_1(M)\le\cdots\le\lambda_l(M)
  \label{eq:ordered_eigenvalues_convention_general}
\end{equation}
denote the ordered eigenvalues of an \(l\times l\) self-adjoint matrix
\(M\).  Then, for every \(j=1,\ldots,l\),
\begin{equation}
  \left|
    \lambda_j(\widehat\Sigma_{l,N}^{\mathrm{sc}})
    -
    \lambda_j(\Sigma_l(\rho))
  \right|
  \le
  \varepsilon_{l,N}^{\mathrm{sc}}(\delta).
  \label{eq:selected_sc_eigenvalue_general}
\end{equation}

\item[(iv)] \emph{Sample-centered selected spectral-projector approximation:}
Let \(S\) be an isolated spectral cluster of \(\Sigma_l(\rho)\), and let
\begin{equation}
  \gamma_{S,l}
  :=
  \operatorname{dist}
  \bigl(
    S,
    \operatorname{spec}(\Sigma_l(\rho))\setminus S
  \bigr)>0 .
  \label{eq:selected_sc_gap_definition_general}
\end{equation}
Let \(\Pi_{S,l}\) be the spectral projector of \(\Sigma_l(\rho)\) associated
with \(S\), and let \(\widehat\Pi_{S,l}^{\mathrm{sc}}\) be the spectral
projector of \(\widehat\Sigma_{l,N}^{\mathrm{sc}}\) associated with the
empirical eigenvalues corresponding to \(S\).  If
\begin{equation}
  \gamma_{S,l}
  >
  2\varepsilon_{l,N}^{\mathrm{sc}}(\delta),
  \label{eq:selected_sc_gap_condition_general}
\end{equation}
then
\begin{equation}
  \left\|
    \widehat\Pi_{S,l}^{\mathrm{sc}}-\Pi_{S,l}
  \right\|_{\mathrm{op}}
  \le
  \frac{
    2\varepsilon_{l,N}^{\mathrm{sc}}(\delta)
  }{
    \gamma_{S,l}
  } .
  \label{eq:selected_sc_projector_general}
\end{equation}
\end{enumerate}
\end{theorem}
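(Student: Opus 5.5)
The plan is to combine the two probabilistic ingredients already established, each at confidence level \(\delta/2\), and then apply the deterministic transfer lemma followed by standard perturbation theory.

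\textbf{Step 1: a good event.} Apply Theorem~\ref{thm:selected_shadow_true_centered_spectral_recovery}(i) with \(\delta/2\) in place of \(\delta\). This gives an event \(\mathcal E_1\) with \(\mathbb P(\mathcal E_1)\ge 1-\delta/2\) on which the Loewner sandwich \eqref{eq:loewner_transfer_tc_assumption_general} holds with \(\varepsilon=\varepsilon_{l,N}^{\mathrm{tc}}(\delta/2)\). Apply Lemma~\ref{lem:selected_empirical_mean_bound_general}, also with \(\delta/2\). This gives an event \(\mathcal E_2\) with \(\mathbb P(\mathcal E_2)\ge1-\delta/2\) on which \eqref{eq:loewner_transfer_mean_assumption_general} holds with \(\mu=\mu_{l,N}(\delta/2)\). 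A union bound gives \(\mathbb P(\mathcal E_1\cap\mathcal E_2)\ge1-\delta\). No independence between the two events is needed. Finiteness of \(B_l\), together with boundedness of \(m\), which holds because \(\hat x\) is bounded and so integrable, ensures \(A_l(\rho)<\infty\), so both ingredients apply.

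\textbf{Step 2: (i) and (ii).} On \(\mathcal E_1\cap\mathcal E_2\), Lemma~\ref{lem:loewner_transfer_true_to_sample_general} directly yields
\[
-(\varepsilon+\mu)I_l\preceq\widehat\Sigma_{l,N}^{\mathrm{sc}}-\Sigma_l(\rho)\preceq\varepsilon I_l .
\]
By definition \eqref{eq:epsilon_sc_def_general}, \(\varepsilon+\mu=\varepsilon_{l,N}^{\mathrm{sc}}(\delta)\). This is exactly (i), and the operator-norm bound \eqref{eq:loewner_transfer_operator_bound_general} of the same lemma gives (ii).

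\textbf{Step 3: (iii) and (iv).} These are deterministic consequences of (ii) on the same event. For (iii), apply Weyl's inequality \eqref{eq:weyl_inequality_recalled_general} with \(A=\widehat\Sigma_{l,N}^{\mathrm{sc}}\) and \(B=\Sigma_l(\rho)\). For (iv), apply the Davis--Kahan bound \eqref{eq:davis_kahan_projector_bound_recalled_general} with \(\widehat A=\widehat\Sigma_{l,N}^{\mathrm{sc}}\), \(A=\Sigma_l(\rho)\) and \(\gamma=\gamma_{S,l}\). The gap hypothesis \eqref{eq:selected_sc_gap_condition_general} combined with (ii) gives \(\|\widehat A-A\|_{\mathrm{op}}<\gamma_{S,l}/2\), as required. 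Because of this gap condition, Weyl's inequality also guarantees that the empirical eigenvalues corresponding to \(S\) stay separated from the rest, so \(\widehat\Pi^{\mathrm{sc}}_{S,l}\) is well defined.

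\textbf{Main obstacle.} There is no real technical obstacle. The only point needing care is bookkeeping: both ingredients must be invoked at level \(\delta/2\), so that the union bound closes at \(1-\delta\) and the constants match \eqref{eq:epsilon_sc_def_general}. The upper side of (i) must be kept as \(\varepsilon_{l,N}^{\mathrm{tc}}(\delta/2)\) rather than symmetrized, which comes from the positivity of \(\bar{\hat z}_N\bar{\hat z}_N^\top\).
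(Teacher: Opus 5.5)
Your proposal is correct and follows the paper's proof essentially step for step. Both invoke the true-centered theorem and the empirical-mean lemma at level \(\delta/2\), take a union bound, and apply the Loewner transfer lemma with \(\varepsilon+\mu=\varepsilon_{l,N}^{\mathrm{sc}}(\delta)\) to obtain (i) and (ii), then deduce (iii) and (iv) from Weyl and Davis--Kahan. Your extra remark that Weyl keeps the empirical cluster separated, so that \(\widehat\Pi^{\mathrm{sc}}_{S,l}\) is well defined, is a small addition not spelled out in the paper.
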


The proof combines the three estimates established above: the true-centered
spectral approximation, the Hoeffding bound for the selected empirical mean,
and the exact rank-one identity relating the true-centered and
sample-centered covariances.  A union bound allocates half of the failure
probability to the true-centered event and half to the empirical-mean event.
Once the operator-norm bound for the sample-centered covariance is obtained,
the eigenvalue and spectral-projector statements again follow from Weyl's
inequality~\cite[Chapters~4 and~6]{HornJohnson2012} and the Davis--Kahan
perturbation theorem~\cite{DavisKahan1970,StewartSun1990,Vershynin2018,YuWangSamworth2015}.

\begin{proof}[Proof of Theorem~\ref{thm:selected_shadow_sample_centered_spectral_recovery}]
\noindent\textbf{Step 1: True-centered event.}
Apply Theorem~\ref{thm:selected_shadow_true_centered_spectral_recovery} with
failure probability \(\delta/2\).  With probability at least \(1-\delta/2\), we have
\begin{equation}
  -\varepsilon_{l,N}^{\mathrm{tc}}(\delta/2)I_l
  \preceq
  \widehat\Sigma_{l,N}^{\mathrm{tc}}-\Sigma_l(\rho)
  \preceq
  \varepsilon_{l,N}^{\mathrm{tc}}(\delta/2)I_l.
  \label{eq:sample_centered_true_event_general}
\end{equation}
This is \eqref{eq:true_centered_loewner_general} with \(\delta\) replaced by
\(\delta/2\).

\noindent\textbf{Step 2: Selected empirical-mean event.}
Apply Lemma~\ref{lem:selected_empirical_mean_bound_general} with failure
probability \(\delta/2\).  With probability at least \(1-\delta/2\), we have
\begin{equation}
  \bar{\hat z}_N\bar{\hat z}_N^\top
  \preceq
  \mu_{l,N}(\delta/2)I_l.
  \label{eq:sample_centered_mean_event_general}
\end{equation}

\noindent\textbf{Step 3: Union bound.}
By the union bound, the events
\eqref{eq:sample_centered_true_event_general} and
\eqref{eq:sample_centered_mean_event_general} hold simultaneously with
probability at least
\begin{equation}
  1-\frac{\delta}{2}-\frac{\delta}{2}
  =
  1-\delta.
  \label{eq:sample_centered_union_probability_general}
\end{equation}
We work on this simultaneous event.

\noindent\textbf{Step 4: Sample-centered Loewner approximation.}
Apply Lemma~\ref{lem:loewner_transfer_true_to_sample_general} with
\begin{equation}
  \varepsilon
  =
  \varepsilon_{l,N}^{\mathrm{tc}}(\delta/2),
  \qquad
  \mu
  =
  \mu_{l,N}(\delta/2).
  \label{eq:sample_centered_transfer_parameters_general}
\end{equation}
Using \eqref{eq:epsilon_sc_def_general}, we have
\begin{equation}
  \varepsilon+\mu
  =
  \varepsilon_{l,N}^{\mathrm{tc}}(\delta/2)
  +
  \mu_{l,N}(\delta/2)
  =
  \varepsilon_{l,N}^{\mathrm{sc}}(\delta).
  \label{eq:sample_centered_error_identity_general}
\end{equation}
Therefore, \eqref{eq:loewner_transfer_sc_bound_general} gives
\begin{equation}
  -\varepsilon_{l,N}^{\mathrm{sc}}(\delta)I_l
  \preceq
  \widehat\Sigma_{l,N}^{\mathrm{sc}}-\Sigma_l(\rho)
  \preceq
  \varepsilon_{l,N}^{\mathrm{tc}}(\delta/2)I_l.
  \label{eq:sample_centered_loewner_bound_proof_general}
\end{equation}
This proves the sample-centered Loewner approximation
\eqref{eq:selected_sc_loewner_general}.  Notice that the upper side remains
the true-centered error level because the centering correction in
\eqref{eq:compressed_sample_centered_error_decomposition_general} is negative
semidefinite.

\noindent\textbf{Step 5: Sample-centered operator-norm approximation.}
Since the relation \eqref{eq:epsilon_sc_def_general} implies
\[
  \varepsilon_{l,N}^{\mathrm{tc}}(\delta/2)
  \le
  \varepsilon_{l,N}^{\mathrm{sc}}(\delta),
\]
the two-sided Loewner bound
\eqref{eq:sample_centered_loewner_bound_proof_general} implies
\begin{equation}
  \left\|
    \widehat\Sigma_{l,N}^{\mathrm{sc}}-\Sigma_l(\rho)
  \right\|_{\mathrm{op}}
  \le
  \varepsilon_{l,N}^{\mathrm{sc}}(\delta).
  \label{eq:sample_centered_operator_bound_proof_general}
\end{equation}
This proves \eqref{eq:selected_sc_operator_general}.

\noindent\textbf{Step 6: Sample-centered eigenvalue approximation.}
We recall Weyl's eigenvalue perturbation inequality for self-adjoint matrices:
if \(A\) and \(B\) are \(l\times l\) self-adjoint matrices and
\(\lambda_1(\cdot)\le\cdots\le\lambda_l(\cdot)\) denote their ordered
eigenvalues, then
\begin{equation}
  \left|
    \lambda_j(A)-\lambda_j(B)
  \right|
  \le
  \|A-B\|_{\mathrm{op}},
  \qquad
  j=1,\ldots,l.
  \label{eq:sample_centered_weyl_inequality_recalled_general}
\end{equation}
See, for example, \cite[Chapters~4 and~6]{HornJohnson2012}.  Applying
\eqref{eq:sample_centered_weyl_inequality_recalled_general} with
\begin{equation}
  A=\widehat\Sigma_{l,N}^{\mathrm{sc}},
  \qquad
  B=\Sigma_l(\rho),
  \label{eq:sample_centered_weyl_application_general}
\end{equation}
and using \eqref{eq:sample_centered_operator_bound_proof_general}, we obtain
\begin{equation}
  \left|
    \lambda_j(\widehat\Sigma_{l,N}^{\mathrm{sc}})
    -
    \lambda_j(\Sigma_l(\rho))
  \right|
  \le
  \varepsilon_{l,N}^{\mathrm{sc}}(\delta),
  \qquad
  j=1,\ldots,l.
  \label{eq:sample_centered_eigenvalue_bound_proof_general}
\end{equation}
This proves \eqref{eq:selected_sc_eigenvalue_general}.

\noindent\textbf{Step 7: Sample-centered spectral-projector approximation.}
We recall the Davis--Kahan spectral perturbation theorem
\cite{DavisKahan1970,StewartSun1990,Vershynin2018,YuWangSamworth2015}
stated as \eqref{eq:davis_kahan_projector_bound_recalled_general}.
We apply
\eqref{eq:davis_kahan_projector_bound_recalled_general}
with
\begin{equation}
  A=\Sigma_l(\rho),
  \qquad
  \widehat A=\widehat\Sigma_{l,N}^{\mathrm{sc}},
  \qquad
  \gamma=\gamma_{S,l}.
  \label{eq:sample_centered_davis_kahan_application_general}
\end{equation}
The gap condition \eqref{eq:selected_sc_gap_condition_general}, together with
\eqref{eq:sample_centered_operator_bound_proof_general}, implies
\begin{equation}
  \|\widehat A-A\|_{\mathrm{op}}
  \le
  \varepsilon_{l,N}^{\mathrm{sc}}(\delta)
  <
  \frac{\gamma_{S,l}}{2}.
  \label{eq:sample_centered_gap_verified_general}
\end{equation}
Therefore, by
\eqref{eq:davis_kahan_projector_bound_recalled_general},
\begin{equation}
  \left\|
    \widehat\Pi_{S,l}^{\mathrm{sc}}-\Pi_{S,l}
  \right\|_{\mathrm{op}}
  \le
  \frac{
    2\varepsilon_{l,N}^{\mathrm{sc}}(\delta)
  }{
    \gamma_{S,l}
  }.
  \label{eq:sample_centered_projector_bound_proof_general}
\end{equation}
This proves \eqref{eq:selected_sc_projector_general}.
\end{proof}

\begin{remark}[What the abstract theorem does and does not use]
\label{rem:abstract_selected_theorem_inputs}
The finite-sample theorem in this section uses only three inputs: independent
shadow shots, fixed coordinate selection, and a deterministic bound on the
selected one-shot radius.  It does not use Pauli compatibility, locality, or
an explicit formula for the covariance entries.  These additional
structures enter only when the abstract radius \(B_l\) is evaluated in a
concrete protocol.  In the local Pauli specialization below, bounded Pauli
weight and nondegenerate local basis probabilities give a dimension-independent
bound on \(B_l\), which is what turns the abstract theorem into a
constant-selected-error statement.
\end{remark}

\begin{proof}[Proof of Theorem~\ref{thm:constant_selected_error_bounded_radius}]

The bound \eqref{eq:epsilon_sc_constant_radius_bound_general} follows from
the explicit sample-centered error formula
\eqref{eq:epsilon_sc_general} by using the two assumptions in
\eqref{eq:bounded_selected_radius_assumption_general}.  Indeed,
\(l\le l_0\) implies
\begin{equation}
  \log(4l/\delta)\le \log(4l_0/\delta),
  \qquad
  \frac43+2l\le \frac43+2l_0,
  \label{eq:constant_radius_l_monotonicity_general}
\end{equation}
and \(A_l(\rho)\le A_0\) gives
\begin{equation}
  A_l(\rho)^2\le A_0^2.
  \label{eq:constant_radius_A_monotonicity_general}
\end{equation}
Substituting \eqref{eq:constant_radius_l_monotonicity_general} and
\eqref{eq:constant_radius_A_monotonicity_general} into
\eqref{eq:epsilon_sc_general} proves
\eqref{eq:epsilon_sc_constant_radius_bound_general}.

Now assume \eqref{eq:N_sufficient_constant_radius_general}.  
To show \eqref{eq:operator_constant_radius_general}, we employ item (ii) of
Theorem~\ref{thm:selected_shadow_sample_centered_spectral_recovery}.
The first lower
bound on \(N\) in \eqref{eq:N_sufficient_constant_radius_general} gives
\begin{equation}
  A_0^2
  \sqrt{\frac{8\log(4l_0/\delta)}{N}}
  \le
  \frac{\epsilon}{2}.
  \label{eq:constant_radius_first_term_epsilon_half_general}
\end{equation}
The second lower bound on \(N\) in
\eqref{eq:N_sufficient_constant_radius_general} gives
\begin{equation}
  A_0^2
  \left(\frac43+2l_0\right)
  \frac{\log(4l_0/\delta)}{N}
  \le
  \frac{\epsilon}{2}.
  \label{eq:constant_radius_second_term_epsilon_half_general}
\end{equation}
Combining \eqref{eq:epsilon_sc_constant_radius_bound_general},
\eqref{eq:constant_radius_first_term_epsilon_half_general}, and
\eqref{eq:constant_radius_second_term_epsilon_half_general}, we obtain
\begin{equation}
  \varepsilon_{l,N}^{\mathrm{sc}}(\delta)\le \epsilon.
  \label{eq:constant_radius_epsilon_sc_le_epsilon_general}
\end{equation}
The operator-norm estimate \eqref{eq:operator_constant_radius_general} then
follows from
\eqref{eq:selected_sc_operator_general}
of item (ii) of 
Theorem~\ref{thm:selected_shadow_sample_centered_spectral_recovery} and
\eqref{eq:constant_radius_epsilon_sc_le_epsilon_general}.
\end{proof}

\section{Local Pauli shadows}
\label{sec:local_pauli_exact_covariance}
In this section we specialize the general selected-covariance framework to
the \(k\)-qubit biased local Pauli shadow protocol.  The main purpose
of this section is first to evaluate the selected one-shot radius \(B_l\) for
a fixed selected Pauli coordinate set and thereby to instantiate the general
finite-sample selected covariance theorem in the local Pauli setting.

After this finite-sample specialization, we turn to an analytic calculation
of the covariance entries for the same protocol.  This latter
calculation uses structural features that are specific to qubit Pauli
measurements: each non-identity local Pauli observable is one of \(X,Y,Z\),
local measurement outcomes take values in \(\{\pm1\}\), and Pauli strings
compatible with the same local basis pattern have a simple cancellation
structure.  Thus the closed covariance formulas below should be understood
as additional local-Pauli structure, not as assumptions needed for the
abstract finite-sample theorem.

Biased local Pauli measurements have been studied previously as a
variance-reduction tool for Hamiltonian expectation estimation
\cite{HadfieldBravyiRaymondMezzacapo2022}. Here we use the same biased
basis-selection mechanism to analyze the full covariance structure of
selected reconstructed Pauli coordinates.

\subsection{Formulation of biased qubit local Pauli shadows}
\label{subsec:biased_qubit_local_pauli_formulation}

We now specialize the abstract shadow-output framework of
Section~\ref{sec:general_covariance} to the \(k\)-qubit local Pauli setting.
There are two distinct choices to be fixed.

First, we fix the physical quantities whose expectation values are represented
by the coordinates of the shadow-output vector.  In the general framework,
these quantities were specified by an observable family
\(
  \mathcal O=\{O_1,\ldots,O_p\}.
\)
In the present section, we take this observable family to be the set of all
non-identity \(k\)-qubit Pauli strings.  Let
$  \mathcal P_k:=
  \{I,X,Y,Z\}^{\otimes k}
$ be the \(k\)-qubit Pauli string set, and let
$  \mathcal P_k^\times
  :=
  \mathcal P_k\setminus\{I^{\otimes k}\}
$ be the set of non-identity Pauli strings.  Thus, in the notation of
Section~\ref{subsec:shadow_output_vectors_covariance_matrices}, we specialize
\begin{equation}
  \mathcal O
  =
  \{O_1,\ldots,O_p\}
  =
  \mathcal P_k^\times,
  \qquad
  p=4^k-1.
  \label{eq:local_pauli_observable_family_specialization}
\end{equation}
Each coordinate of the general shadow-output vector
\(\hat x\in\mathbb R^p\) in
\eqref{eq:general_shadow_output_vector} is therefore indexed by a
non-identity Pauli string \(P\in\mathcal P_k^\times\).  We write this
coordinate as
\begin{equation}
  \hat x_P
  :=
  \operatorname{Tr}(\hat\rho\,P),
  \qquad
  P\in\mathcal P_k^\times,
  \label{eq:local_pauli_shadow_output_coordinate_specialization}
\end{equation}
where \(\hat\rho\) is the single-shot reconstructed shadow snapshot.  The
corresponding coordinate is the Pauli expectation value
\begin{equation}
  m_P
  :=
  \operatorname{Tr}(\rho P),
  \qquad
  P\in\mathcal P_k^\times.
  \label{eq:local_pauli_mean_coordinate_specialization}
\end{equation}
At this stage, we have only fixed the target observables whose reconstructed
coefficients form the vector \(\hat x\); no physical covariance between
observables is being assumed or directly measured.

Second, we fix the measurement mechanism that generates these reconstructed
coefficients.  In the general framework, this was an arbitrary randomized
measurement protocol with a corresponding reconstruction map.  Here, we use
the biased local Pauli shadow protocol.  At each site
\(j=1,\ldots,k\), a local Pauli basis
\(
  \hat r_j\in\{X,Y,Z\}
\)
is chosen with probabilities
\begin{align}
&  \mathbb P(\hat r_j=r)
  =
  p_{j,r},
  \qquad
  r\in\{X,Y,Z\},\notag\\
 & p_{j,X},p_{j,Y},p_{j,Z}>0,
  \quad
  p_{j,X}+p_{j,Y}+p_{j,Z}=1.
  \label{eq:local_pauli_basis_probabilities}
\end{align}
We write
\begin{equation}
  \hat r
  =
  (\hat r_1,\ldots,\hat r_k)
  \in
  \{X,Y,Z\}^k
  \label{eq:local_pauli_basis_pattern}
\end{equation}
for the resulting local basis pattern.  Conditional on \(\hat r\), each site
is measured in the corresponding Pauli basis, and the outcome at site \(j\)
is
$  \hat s_j\in\{\pm1\}$.
Thus one shot of the measurement protocol produces
\begin{equation}
  (\hat r,\hat s)
  =
  \bigl((\hat r_1,\ldots,\hat r_k),(\hat s_1,\ldots,\hat s_k)\bigr).
  \label{eq:local_pauli_single_shot_data}
\end{equation}

The full $k$-qubit snapshot $\hat\rho$ is given as
\begin{align}
\hat\rho=\bigotimes_{j=1}^k 
\frac12\Bigl(I+\frac{\hat s_j}{p_{j,\hat r_j}}\,\hat r_j\Bigr),
\label{KJ13}
\end{align}
which is shown in Appendix \ref{KJ13P}.
For \(P\in\mathcal P_k^\times\), we write
\(\hat x_P=\Tr(\hat\rho P)\) and \(m_P=\Tr(\rho P)\),
identifying each element of \(\mathcal P_k^\times\) with its coordinate index.
Once these two components have been fixed, the covariance matrix considered
in this section is the covariance of the reconstructed Pauli coefficients
\((\hat x_P)_{P\in\mathcal P_k^\times}\).  Thus the entries of
\(\Sigma(\rho)\) are
\begin{equation}
  \Sigma_{PQ}(\rho)
  =
  \operatorname{Cov}_\rho(\hat x_P,\hat x_Q)
  =
  \mathbb E_\rho[\hat x_P\hat x_Q]-m_Pm_Q
  \label{eq:local_pauli_covariance_entry_specialization}
\end{equation}
for $ P,Q\in\mathcal P_k^\times$.
This is a covariance of the random reconstructed coefficients produced by
the shadow protocol.  It should not be confused with directly measuring a
physical product observable such as \(PQ\).

For a Pauli string \(P\in\mathcal P_k\), we define its support 
and its weight as
\begin{equation}
  \supp(P)
  :=
  \{j:\,P_j\neq I\}, \quad
  \wt(P):=|  \supp(P)|.
  \label{eq:local_pauli_support_def}
\end{equation}
We say that a local basis pattern
\(r=(r_1,\ldots,r_k)\in\{X,Y,Z\}^k\) is compatible with \(P\), and write
$  r\succeq P$,
if
$  r_j=P_j$ 
for every $ j\in\supp(P)$.
This compatibility condition connects the target observable \(P\) with the
measurement mechanism: the reconstructed coefficient \(\hat x_P\) can be
nonzero only when the chosen basis pattern is compatible with \(P\).

For $P,Q\in\mathcal P_k$, we say that $P$ and $Q$ are compatible, and write $P\sim Q$, if on every site where both are non-identity, they carry the same Pauli.
If $P\sim Q$, define $P\ominus Q\in\mathcal P_k$ by canceling the common non-identity factors and retaining the remaining ones.
For later use, define the overlap bias factor
\begin{align}
 \beta(P,Q):=\prod_{j\in\supp(P)\cap\supp(Q)} p_{j,P_j}^{-1},
 \qquad (P\sim Q),
\label{BSHJ3}
\end{align}
which is well defined because compatibility implies $P_j=Q_j$ on the overlap.
The coefficient rule below is the basic local Pauli input for both parts of
this section.  It is used first to compute the selected one-shot radius, and
later to derive the exact covariance formula.

\begin{lemma}[Local Pauli shadow-output coefficient rule]
\label{lem:biased_local_pauli_coefficient_rule}
For every \(P\in\mathcal P_k^\times\), the biased local Pauli
snapshot satisfies
\begin{equation}
  \hat x_P
  :=
  \operatorname{Tr}(\hat\rho P)
  =
  \left(
    \prod_{j\in\supp(P)}
    p_{j,P_j}^{-1}
  \right)
  \mathbf 1_{\{\hat r\succeq P\}}
  \prod_{j\in\supp(P)}
  \hat s_j .
  \label{eq:local_pauli_shadow_output_coefficient_rule}
\end{equation}
\end{lemma}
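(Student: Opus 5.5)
The proof will be a direct computation from the explicit product form \eqref{KJ13} of the snapshot. The plan is to use multiplicativity of the trace over tensor products and then evaluate each single-qubit factor separately. Since both $\hat\rho$ and $P=\bigotimes_j P_j$ are tensor products, we have
\[
  \hat x_P=\operatorname{Tr}(\hat\rho P)
  =\prod_{j=1}^k
  \operatorname{Tr}\!\left(
    \tfrac12\Bigl(I+\tfrac{\hat s_j}{p_{j,\hat r_j}}\hat r_j\Bigr)P_j
  \right).
\]
The statement then reduces to a one-qubit trace calculation.

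Next I split the sites into $j\notin\supp(P)$ and $j\in\supp(P)$. For $j\notin\supp(P)$ we have $P_j=I$, so the factor is $\tfrac12\bigl(\operatorname{Tr} I+\tfrac{\hat s_j}{p_{j,\hat r_j}}\operatorname{Tr}\hat r_j\bigr)=1$, because single-qubit Paulis are traceless. For $j\in\supp(P)$ we have $P_j\in\{X,Y,Z\}$. Here I use the Hilbert--Schmidt orthogonality $\operatorname{Tr}(\sigma\tau)=2\delta_{\sigma\tau}$ for $\sigma,\tau\in\{X,Y,Z\}$, together with $\operatorname{Tr}P_j=0$. The factor is therefore $\tfrac12\cdot\tfrac{\hat s_j}{p_{j,\hat r_j}}\cdot 2\,\delta_{\hat r_j,P_j}$. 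On the event $\hat r_j=P_j$ this equals $\hat s_j\,p_{j,P_j}^{-1}$, and otherwise it is zero.

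Finally, I take the product over $j\in\supp(P)$. The product of the Kronecker deltas $\prod_{j\in\supp(P)}\delta_{\hat r_j,P_j}$ is exactly the indicator $\mathbf 1_{\{\hat r\succeq P\}}$, by the definition of compatibility. On this event $p_{j,\hat r_j}=p_{j,P_j}$, so the probability weights can be pulled out as the deterministic prefactor $\prod_{j\in\supp(P)}p_{j,P_j}^{-1}$. This gives \eqref{eq:local_pauli_shadow_output_coefficient_rule}.

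There is no real obstacle, since the snapshot formula \eqref{KJ13} is taken as given (its derivation is deferred to the appendix). The only point needing care is the bookkeeping that turns the random weight $p_{j,\hat r_j}$ into the deterministic $p_{j,P_j}$. This replacement is valid only because it is multiplied by the compatibility indicator, and I will state it explicitly.
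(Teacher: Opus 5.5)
Your proposal is correct and follows the paper's proof essentially step for step. Both factorize the trace over the tensor product, evaluate each single-qubit factor using tracelessness and Hilbert--Schmidt orthogonality of the Paulis, and collect the sitewise indicators into $\mathbf 1_{\{\hat r\succeq P\}}$, with $p_{j,\hat r_j}$ replaced by $p_{j,P_j}$ only under that indicator.
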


\begin{proof}
Using the product form of the biased local Pauli snapshot,
\(
  \hat\rho
  =
  \bigotimes_{j=1}^k
  \frac12
  \left(
    I+\frac{\hat s_j}{p_{j,\hat r_j}}\hat r_j
  \right),
\)
we compute, for \(P=P_1\otimes\cdots\otimes P_k\),
\begin{align}
  \hat x_P
  =
  \operatorname{Tr}(\hat\rho P)
  =
  \prod_{j=1}^k
  \operatorname{Tr}\!\left[
    \frac12
    \left(
      I+\frac{\hat s_j}{p_{j,\hat r_j}}\hat r_j
    \right)
    P_j
  \right].
  \label{eq:local_pauli_output_factorization_proof}
\end{align}
We evaluate the local factor.  If \(P_j=I\), then
\[
  \operatorname{Tr}\!\left[
    \frac12
    \left(
      I+\frac{\hat s_j}{p_{j,\hat r_j}}\hat r_j
    \right)
    I
  \right]
  =
  1,
\]
because \(\operatorname{Tr}(I)=2\) and
\(\operatorname{Tr}(\hat r_j)=0\).  If \(P_j\neq I\), then
\(P_j\in\{X,Y,Z\}\), and
\begin{align}
  \operatorname{Tr}\!\left[
    \frac12
    \left(
      I+\frac{\hat s_j}{p_{j,\hat r_j}}\hat r_j
    \right)
    P_j
  \right]
  =
  \frac{\hat s_j}{2p_{j,\hat r_j}}
  \operatorname{Tr}(\hat r_jP_j).
  \label{eq:local_pauli_nonidentity_local_factor}
\end{align}
By orthogonality of the single-qubit Pauli matrices,
\[
  \operatorname{Tr}(\hat r_jP_j)
  =
  \begin{cases}
  2, & \hbox{when }\hat r_j=P_j,\\
  0, & \hbox{when }\hat r_j\neq P_j.
  \end{cases}
\]
Therefore, for \(P_j\neq I\),
\[
  \operatorname{Tr}\!\left[
    \frac12
    \left(
      I+\frac{\hat s_j}{p_{j,\hat r_j}}\hat r_j
    \right)
    P_j
  \right]
  =
  p_{j,P_j}^{-1}
  \mathbf 1_{\{\hat r_j=P_j\}}
  \hat s_j .
\]
Multiplying the local factors over all sites, the sites with \(P_j=I\)
contribute \(1\), while the sites in \(\supp(P)\) contribute the factors
above.  Hence, we have
\(
  \hat x_P
  =
  \prod_{j\in\supp(P)}
  \left(
    p_{j,P_j}^{-1}
    {\mathbf 1}_{\{\hat r_j=P_j\}}
    \hat s_j
  \right).
\)
Finally, it follows that
\[
  \prod_{j\in\supp(P)}
  \mathbf 1_{\{\hat r_j=P_j\}}
  =
  \mathbf 1_{\{\hat r\succeq P\}}
\]
from the definition of compatibility of the basis pattern \(\hat r\) with
\(P\).  This proves
\eqref{eq:local_pauli_shadow_output_coefficient_rule}.
\end{proof}

This explicit coefficient rule is the point at which the present subsection
becomes specific to qubit local Pauli shadows.  The exact covariance formula
below is obtained by analyzing products \(\hat x_P\hat x_Q\) using
Lemma~\ref{lem:biased_local_pauli_coefficient_rule}.

\subsection{Selected radius and finite-sample consequences}
\label{subsec:local_pauli_selected_radius_finite_sample_consequences}

We now connect the preceding local Pauli formulation with the general
finite-sample selected covariance theorem of
Section~\ref{sec:general_selected_covariance_spectral_recovery}.  That
theorem applies to arbitrary shadow protocols once the protocol-dependent
selected one-shot radius is identified.  Thus, in the biased local
Pauli setting, the remaining task is to compute the selected radius \(B_l\)
appearing in Definition~\ref{def:selected_one_shot_radius_general} and then
substitute it into the sample-centered selected covariance theorem.

Let
\(
  J_l=\{P_1,\ldots,P_l\}
  \subset \mathcal P_k^\times
\)
be a fixed selected set of distinct non-identity Pauli strings, and let
\(R_l\) be the corresponding coordinate-selection matrix.  Define
\begin{equation}
  B_{l,\mathrm{LP}}^2
  :=
  \max_{r\in\{X,Y,Z\}^k}
  \sum_{a=1}^l
  \left(
    \prod_{j\in\supp(P_a)}
    p_{j,P_{a,j}}^{-2}
  \right)
  \mathbf 1_{\{r\succeq P_a\}}.
  \label{eq:local_pauli_Bl_LP_def}
\end{equation}
This is the deterministic selected one-shot radius predicted by the biased local Pauli coefficient rule.

\begin{lemma}[Selected one-shot radius for biased local Pauli shadows]
\label{lem:local_pauli_Bl_equals_BlLP}
For the biased local Pauli shadow protocol,
\begin{equation}
  \|R_l\hat x\|_2^2
  =
  \sum_{a=1}^l
  \left(
    \prod_{j\in\supp(P_a)}
    p_{j,P_{a,j}}^{-2}
  \right)
  \mathbf 1_{\{\hat r\succeq P_a\}}.
  \label{eq:local_pauli_selected_norm_identity}
\end{equation}
Consequently, the selected one-shot radius \(B_l\) in
Definition~\ref{def:selected_one_shot_radius_general} is
\begin{equation}
  B_l=B_{l,\mathrm{LP}}.
  \label{eq:local_pauli_Bl_equals_BlLP}
\end{equation}
\end{lemma}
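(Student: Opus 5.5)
The identity for \(\|R_l\hat x\|_2^2\) is a direct consequence of Lemma~\ref{lem:biased_local_pauli_coefficient_rule}; the identification \(B_l=B_{l,\mathrm{LP}}\) then needs one extra observation, namely that every basis pattern occurs with positive probability.

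\textbf{Step 1: the norm identity.} Write \(\|R_l\hat x\|_2^2=\sum_{a=1}^l \hat x_{P_a}^2\). Substitute the coefficient rule \eqref{eq:local_pauli_shadow_output_coefficient_rule} for each \(P_a\) and square it. Three facts then do all the work.
\begin{itemize}
\item The outcomes satisfy \(\hat s_j\in\{\pm1\}\), so \(\prod_{j\in\supp(P_a)}\hat s_j^2=1\).
\item The indicator is idempotent, \(\mathbf 1_{\{\hat r\succeq P_a\}}^2=\mathbf 1_{\{\hat r\succeq P_a\}}\).
\item The prefactor squares to \(\prod_{j\in\supp(P_a)}p_{j,P_{a,j}}^{-2}\).
\end{itemize}
Summing over \(a\) gives \eqref{eq:local_pauli_selected_norm_identity} pointwise, for every realization of \((\hat r,\hat s)\). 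The key point is that the right-hand side depends only on the basis pattern \(\hat r\), not on the outcomes \(\hat s\).

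\textbf{Step 2: from the identity to \(B_l=B_{l,\mathrm{LP}}\).} Define
\[
F(r):=\sum_{a=1}^l\Bigl(\prod_{j\in\supp(P_a)}p_{j,P_{a,j}}^{-2}\Bigr)\mathbf 1_{\{r\succeq P_a\}},
\]
so that Step 1 reads \(\|R_l\hat x\|_2^2=F(\hat r)\), and \(B_{l,\mathrm{LP}}^2=\max_r F(r)\) by \eqref{eq:local_pauli_Bl_LP_def}.

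Since the outcome space is finite, the essential supremum in Definition~\ref{def:selected_one_shot_radius_general} is the maximum of \(\|R_l\hat x\|_2\) over outcomes \((r,s)\) that occur with positive probability. The upper bound \(B_l^2\le \max_r F(r)\) is then immediate from Step 1.

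For the reverse inequality, let \(r^\ast\) attain \(\max_r F(r)\). Because the sites choose bases independently with \(p_{j,r}>0\) by \eqref{eq:local_pauli_basis_probabilities}, the pattern \(r^\ast\) has probability \(\prod_j p_{j,r^\ast_j}>0\). Conditional on \(\hat r=r^\ast\), some outcome string \(s\) has positive probability, since the conditional outcome probabilities sum to one. The event \(\{\hat r=r^\ast,\hat s=s\}\) is therefore non-null, and on it \(\|R_l\hat x\|_2^2=F(r^\ast)\). Hence \(B_l^2\ge F(r^\ast)=B_{l,\mathrm{LP}}^2\). Taking nonnegative square roots gives \eqref{eq:local_pauli_Bl_equals_BlLP}.

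Step 1 is pure algebra once the coefficient rule is available; the only step needing care is the lower bound in Step 2, because the essential supremum ignores null events. Both parts of that argument are necessary: strict positivity of the basis probabilities ensures the maximizing pattern \(r^\ast\) is actually sampled, and the fact that \(F\) does not depend on \(s\) means the state \(\rho\) cannot suppress the maximum through its outcome statistics.
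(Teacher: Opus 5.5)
Your proof is correct and follows the paper's argument: square the coefficient rule of Lemma~\ref{lem:biased_local_pauli_coefficient_rule} using \(\hat s_j^2=1\) and idempotence of the indicator, note that the result depends only on \(\hat r\), and use strict positivity of the \(p_{j,r}\) to identify the essential supremum with the maximum over basis patterns. Your Step~2 writes out the two inequalities that the paper compresses into one sentence; the detour through a positive-probability outcome string \(s\) is harmless but unnecessary, since \(\{\hat r=r^\ast\}\) is already a non-null event on which \(\|R_l\hat x\|_2^2=F(r^\ast)\).
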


\begin{proof}
For each selected Pauli string \(P_a\), the biased local Pauli
coefficient rule given in Lemma \ref{lem:biased_local_pauli_coefficient_rule}
gives
\begin{equation}
  \hat x_{P_a}
  =
  \left(
    \prod_{j\in\supp(P_a)}
    p_{j,P_{a,j}}^{-1}
  \right)
  \mathbf 1_{\{\hat r\succeq P_a\}}
  \prod_{j\in\supp(P_a)}\hat s_j .
  \label{eq:local_pauli_radius_coefficient_rule}
\end{equation}
Since \((\hat s_j)^2=1\) and
\(\mathbf 1_{\{\hat r\succeq P_a\}}^2
=\mathbf 1_{\{\hat r\succeq P_a\}}\), we obtain
\begin{equation}
  \hat x_{P_a}^2
  =
  \left(
    \prod_{j\in\supp(P_a)}
    p_{j,P_{a,j}}^{-2}
  \right)
  \mathbf 1_{\{\hat r\succeq P_a\}}.
  \label{eq:local_pauli_radius_coefficient_square}
\end{equation}
Summing \eqref{eq:local_pauli_radius_coefficient_square} over
\(a=1,\ldots,l\) gives
\eqref{eq:local_pauli_selected_norm_identity}.  The right-hand side depends
only on the basis pattern \(\hat r\), not on the signs \(\hat s_j\).  Since all
basis probabilities are strictly positive, the essential supremum of
\(\|R_l\hat x\|_2\) over one-shot outcomes is the maximum over
\(r\in\{X,Y,Z\}^k\).  By the definition
\eqref{eq:local_pauli_Bl_LP_def}, this proves
\eqref{eq:local_pauli_Bl_equals_BlLP}.
\end{proof}

Define the following quantity:
\begin{equation}
\widetilde A_{l,\mathrm{LP}}(\rho)
  :=
  B_{l,\mathrm{LP}}+\|R_lm\|_2 .
  \label{eq:local_pauli_Al_LP_def}
\end{equation}
By Lemma~\ref{lem:local_pauli_Bl_equals_BlLP}, this is exactly 
the upper bound \(\widetilde A_l(\rho)\) of the radius
\(A_l(\rho)\)
defined in Definition \ref{def:selected_one_shot_radius_general}
for the present biased local Pauli protocol.

\begin{corollary}[Sample-centered selected spectral approximation for biased local Pauli shadows]
\label{cor:local_pauli_sample_centered_consequence}
Let \(J_l=\{P_1,\ldots,P_l\}\subset\mathcal P_k^\times\) be fixed
independently of the measurement data, and let \(R_l\) be the corresponding
selection matrix.  Define \(B_{l,\mathrm{LP}}\) by
\eqref{eq:local_pauli_Bl_LP_def} and \(A_{l,\mathrm{LP}}(\rho)\) by
\eqref{eq:local_pauli_Al_LP_def}.  Then the conclusions of
Theorem~\ref{thm:selected_shadow_sample_centered_spectral_recovery} hold for
the biased local Pauli protocol with
\(
\widetilde A_l(\rho)=\widetilde A_{l,\mathrm{LP}}(\rho).
\)
In particular, with probability at least \(1-\delta\),
\begin{align}
&  \left\|
    \widehat\Sigma_{l,N}^{\mathrm{sc}}-\Sigma_l(\rho)
  \right\|_{\mathrm{op}} \notag\\
  \le &
 \widetilde A_{l,\mathrm{LP}}(\rho)^2
  \left(
    \sqrt{\frac{8\log(4l/\delta)}{N}}
    +
    \left(\frac43+2l\right)
    \frac{\log(4l/\delta)}{N}
  \right).
  \label{eq:local_pauli_sample_centered_operator_bound}
\end{align}
The corresponding Loewner, eigenvalue, and spectral-projector estimates are
the ones stated in
Theorem~\ref{thm:selected_shadow_sample_centered_spectral_recovery}.
\end{corollary}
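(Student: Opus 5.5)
The corollary follows from Theorem~\ref{thm:selected_shadow_sample_centered_spectral_recovery} applied to the biased local Pauli protocol. The only work is to check the finite-radius hypothesis and to replace the exact centered radius \(A_l(\rho)\) by the explicit upper bound \(\widetilde A_{l,\mathrm{LP}}(\rho)\).

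\textbf{Step 1: finiteness of the radii.} By Lemma~\ref{lem:local_pauli_Bl_equals_BlLP}, \(B_l=B_{l,\mathrm{LP}}\). This is a maximum over the finite set \(\{X,Y,Z\}^k\) of finite sums. Each summand is finite because all \(p_{j,r}>0\). Hence \(B_l<\infty\).

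By Definition~\ref{def:selected_one_shot_radius_general}, the triangle inequality then gives
\[
A_l(\rho)\le B_l+\|R_lm\|_2=\widetilde A_{l,\mathrm{LP}}(\rho)<\infty .
\]
This step also uses \(|m_P|=|\operatorname{Tr}(\rho P)|\le 1\), so \(\|R_lm\|_2\le\sqrt l\). This verifies the standing assumption \eqref{eq:selected_one_shot_radii_finite_general}, so Theorem~\ref{thm:selected_shadow_sample_centered_spectral_recovery} applies verbatim with \(A_l(\rho)\).

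\textbf{Step 2: monotonicity in the radius.} The only point needing care is that the theorem is stated with the exact radius \(A_l(\rho)\), whereas the corollary uses the upper bound \(\widetilde A_{l,\mathrm{LP}}(\rho)\). I would handle this by monotonicity. Every error level \(\varepsilon^{\mathrm{tc}}_{l,N}\), \(\mu_{l,N}\), and \(\varepsilon^{\mathrm{sc}}_{l,N}\) in \eqref{eq:epsilon_tc_general}, \eqref{eq:mu_general}, and \eqref{eq:epsilon_sc_general} is \(A_l(\rho)^2\) times a factor depending only on \((l,N,\delta)\). Each is therefore nondecreasing in the radius.

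Alternatively, one can note that the entire proof (Lemma~\ref{lem:selected_centered_radius_general}, Proposition~\ref{prop:true_centered_selected_shadow_covariance_concentration}, and Lemma~\ref{lem:selected_empirical_mean_bound_general}) uses only the almost-sure bound \(\|\hat z_i\|_2\le A\). It therefore holds with any deterministic \(A\ge A_l(\rho)\) in place of \(A_l(\rho)\).

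\textbf{Step 3: conclusions.} On the probability-\((1-\delta)\) event of the theorem, the operator-norm bound \eqref{eq:selected_sc_operator_general} combined with \(A_l(\rho)^2\le\widetilde A_{l,\mathrm{LP}}(\rho)^2\) yields \eqref{eq:local_pauli_sample_centered_operator_bound}. The Loewner, Weyl eigenvalue, and Davis--Kahan projector statements carry over unchanged, with the enlarged error level. The Davis--Kahan statement is read with the gap condition stated for the enlarged error level, which is a stronger assumption and so remains valid.
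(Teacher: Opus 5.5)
Your proposal is correct and follows the paper's proof: use Lemma~\ref{lem:local_pauli_Bl_equals_BlLP} to get $B_l=B_{l,\mathrm{LP}}$, apply the triangle inequality to get $A_l(\rho)\le\widetilde A_{l,\mathrm{LP}}(\rho)$, and invoke Theorem~\ref{thm:selected_shadow_sample_centered_spectral_recovery}. Your explicit monotonicity step, together with the remark that the Davis--Kahan gap condition must be read at the enlarged error level, just spells out what the paper compresses into ``substituting $\widetilde A_{l,\mathrm{LP}}(\rho)$ into \eqref{eq:epsilon_sc_general}''.
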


\begin{proof}
By Lemma~\ref{lem:local_pauli_Bl_equals_BlLP}, the selected one-shot radius
in Definition~\ref{def:selected_one_shot_radius_general} is
\(B_l=B_{l,\mathrm{LP}}\).  Hence the centered radius in the general theorem is
\[A_l(\rho)\le 
\widetilde A_l(\rho)
  =
  B_l+\|R_lm\|_2
  =
  B_{l,\mathrm{LP}}+\|R_lm\|_2
  =
\widetilde A_{l,\mathrm{LP}}(\rho).
\]
Thus the assumptions of
Theorem~\ref{thm:selected_shadow_sample_centered_spectral_recovery} are
satisfied.  Substituting \(\widetilde A_{l,\mathrm{LP}}(\rho)\) into the explicit
sample-centered error formula \eqref{eq:epsilon_sc_general} gives
\eqref{eq:local_pauli_sample_centered_operator_bound}.  The remaining claims
are exactly the corresponding conclusions of
Theorem~\ref{thm:selected_shadow_sample_centered_spectral_recovery}.
\end{proof}

\begin{corollary}[Constant selected error for bounded-weight local Pauli coordinates]
\label{cor:local_pauli_constant_selected_error}
Assume that there exists a constant \(p_0>0\), independent of \(k\), such
that
\begin{equation}
  p_{j,r}\ge p_0,
  \qquad
  j=1,\ldots,k,\quad r\in\{X,Y,Z\}.
  \label{eq:local_pauli_probability_lower_bound_constant_error}
\end{equation}
Assume also that the selected Pauli coordinates satisfy
\begin{equation}
  l\le l_0,
  \qquad
  \max_{1\le a\le l}\wt(P_a)\le w_0,
  \label{eq:local_pauli_selected_size_weight_bound_constant_error}
\end{equation}
where \(l_0\) and \(w_0\) are constants independent of \(k\).  Set
\begin{equation}
  A_0
  :=
  \sqrt{l_0}\,\bigl(p_0^{-w_0}+1\bigr).
  \label{eq:local_pauli_A0_constant_error_def}
\end{equation}
Then
\begin{equation}
\widetilde A_{l,\mathrm{LP}}(\rho)\le A_0.
  \label{eq:local_pauli_Al_LP_constant_bound}
\end{equation}
Consequently, the constant-radius conclusion of
Theorem~\ref{thm:constant_selected_error_bounded_radius} applies.  In
particular, with probability at least \(1-\delta\),
\begin{align}
&  \left\|
    \widehat\Sigma_{l,N}^{\mathrm{sc}}-\Sigma_l(\rho)
  \right\|_{\mathrm{op}}\notag\\
  \le &
  A_0^2
  \left(
    \sqrt{\frac{8\log(4l_0/\delta)}{N}}
    +
    \left(\frac43+2l_0\right)
    \frac{\log(4l_0/\delta)}{N}
  \right).
  \label{eq:local_pauli_constant_error_bound}
\end{align}
Moreover, for any target accuracy \(\epsilon>0\), it is enough to take
\begin{equation}
  N
  \ge
  \max\left\{
    \frac{32A_0^4\log(4l_0/\delta)}{\epsilon^2},
    \frac{2A_0^2(\frac43+2l_0)\log(4l_0/\delta)}{\epsilon}
  \right\}
  \label{eq:local_pauli_N_sufficient_constant_error}
\end{equation}
to ensure that, with probability at least \(1-\delta\),
\begin{equation}
  \left\|
    \widehat\Sigma_{l,N}^{\mathrm{sc}}-\Sigma_l(\rho)
  \right\|_{\mathrm{op}}
  \le
  \epsilon .
  \label{eq:local_pauli_operator_constant_error}
\end{equation}
Thus, under bounded selected set size, bounded selected Pauli weight, and
basis probabilities bounded below, constant operator-norm accuracy for the
selected sample-centered covariance requires a number of samples independent
of the total number of qubits.
\end{corollary}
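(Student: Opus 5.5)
The plan is to bound the two pieces of \(\widetilde A_{l,\mathrm{LP}}(\rho)=B_{l,\mathrm{LP}}+\|R_lm\|_2\) separately, each by \(\sqrt{l_0}\) times a constant. Then \(A_l(\rho)\le\widetilde A_l(\rho)=\widetilde A_{l,\mathrm{LP}}(\rho)\le A_0\) verifies the hypothesis \eqref{eq:bounded_selected_radius_assumption_general} of Theorem~\ref{thm:constant_selected_error_bounded_radius}, and that theorem yields the rest.

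\textbf{Step 1: the raw radius.} Start from the definition \eqref{eq:local_pauli_Bl_LP_def}, which by Lemma~\ref{lem:local_pauli_Bl_equals_BlLP} equals \(B_l^2\). For each \(a\), the product \(\prod_{j\in\supp(P_a)}p_{j,P_{a,j}}^{-2}\) has \(\wt(P_a)\le w_0\) factors. Each factor is at most \(p_0^{-2}\), and since \(p_0\le 1/3<1\) we have \(p_0^{-2}\ge1\). Hence the product is at most \(p_0^{-2w_0}\). Bound each indicator by \(1\) and sum over at most \(l\le l_0\) terms to get \(B_{l,\mathrm{LP}}^2\le l_0p_0^{-2w_0}\) for every pattern \(r\). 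Therefore
\begin{equation*}
  B_{l,\mathrm{LP}}\le \sqrt{l_0}\,p_0^{-w_0}.
\end{equation*}

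\textbf{Step 2: the mean term.} Each \(m_{P_a}=\operatorname{Tr}(\rho P_a)\) satisfies \(|m_{P_a}|\le\|P_a\|_{\mathrm{op}}\operatorname{Tr}\rho=1\). This gives \(\|R_lm\|_2\le\sqrt l\le\sqrt{l_0}\), uniformly in \(\rho\). Adding the two steps gives \(\widetilde A_{l,\mathrm{LP}}(\rho)\le\sqrt{l_0}(p_0^{-w_0}+1)=A_0\), which is \eqref{eq:local_pauli_Al_LP_constant_bound}.

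\textbf{Step 3: conclusions.} Using Definition~\ref{def:selected_one_shot_radius_general} and Corollary~\ref{cor:local_pauli_sample_centered_consequence}, we get \(A_l(\rho)\le A_0\) together with \(l\le l_0\). The bound \eqref{eq:local_pauli_constant_error_bound} is then \eqref{eq:epsilon_sc_constant_radius_bound_general} combined with item (ii) of Theorem~\ref{thm:selected_shadow_sample_centered_spectral_recovery}. The sample-size statement \eqref{eq:local_pauli_N_sufficient_constant_error}--\eqref{eq:local_pauli_operator_constant_error} is the second half of Theorem~\ref{thm:constant_selected_error_bounded_radius} verbatim.

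There is no real obstacle. The only point needing care is the monotonicity in Step 1, namely that \(p_0^{-2}\ge1\), so that using fewer than \(w_0\) support sites can only decrease the product. This holds because the three probabilities sum to one, so \(p_0\le1/3\).
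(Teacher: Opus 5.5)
Your proof is correct and follows the paper's argument essentially step for step. You drop the indicators and bound each inverse-probability product by \(p_0^{-2w_0}\), giving \(B_{l,\mathrm{LP}}\le\sqrt{l_0}\,p_0^{-w_0}\). You bound \(\|R_lm\|_2\le\sqrt{l_0}\) using \(|\operatorname{Tr}(\rho P_a)|\le 1\), and then invoke Theorem~\ref{thm:constant_selected_error_bounded_radius}. Your remark that \(p_0\le 1/3\), so \(p_0^{-2}\ge 1\), only makes explicit the monotonicity step the paper uses silently when passing from \(p_0^{-2\wt(P_a)}\) to \(p_0^{-2w_0}\).
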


\begin{proof}
By the definition of \(B_{l,\mathrm{LP}}\),
\begin{align}
  B_{l,\mathrm{LP}}^2
  &=
  \max_{r\in\{X,Y,Z\}^k}
  \sum_{a=1}^l
  \left(
    \prod_{j\in\supp(P_a)}
    p_{j,P_{a,j}}^{-2}
  \right)
  \mathbf 1_{\{r\succeq P_a\}}
  \notag\\
  &\le
  \sum_{a=1}^l
  \prod_{j\in\supp(P_a)}
  p_{j,P_{a,j}}^{-2}.
  \label{eq:local_pauli_Bl_LP_first_bound}
\end{align}
Using the lower bound
\eqref{eq:local_pauli_probability_lower_bound_constant_error}, we have
\begin{equation}
  \prod_{j\in\supp(P_a)}
  p_{j,P_{a,j}}^{-2}
  \le
  p_0^{-2\wt(P_a)}
  \le
  p_0^{-2w_0}.
  \label{eq:local_pauli_inverse_probability_weight_bound}
\end{equation}
Together with \(l\le l_0\), this gives
\begin{equation}
  B_{l,\mathrm{LP}}^2
  \le
  l_0p_0^{-2w_0},
  \qquad
  B_{l,\mathrm{LP}}
  \le
  \sqrt{l_0}\,p_0^{-w_0}.
  \label{eq:local_pauli_Bl_LP_constant_bound}
\end{equation}
On the other hand, 
each Pauli string \(P_a\) satisfies
$  |m_{P_a}|=|\operatorname{Tr}(\rho P_a)|\le 1$,
because \(P_a\) has eigenvalues in \(\{\pm1\}\).  Hence
\begin{equation}
  \|R_lm\|_2^2
  =
  \sum_{a=1}^l m_{P_a}^2
  \le
  l
  \le
  l_0,
  \qquad
  \|R_lm\|_2\le\sqrt{l_0}.
  \label{eq:local_pauli_Rlm_constant_bound}
\end{equation}
Combining
\eqref{eq:local_pauli_Al_LP_def},
\eqref{eq:local_pauli_Bl_LP_constant_bound}, and
\eqref{eq:local_pauli_Rlm_constant_bound}, we obtain
\[
  \widetilde  A_{l,\mathrm{LP}}(\rho)
  =
  B_{l,\mathrm{LP}}+\|R_lm\|_2
  \le
  \sqrt{l_0}\,\bigl(p_0^{-w_0}+1\bigr)
  =
  A_0,
\]
which proves \eqref{eq:local_pauli_Al_LP_constant_bound}.  The bound
\eqref{eq:local_pauli_constant_error_bound} and the sufficient sample size
condition \eqref{eq:local_pauli_N_sufficient_constant_error} then follow
directly from
Theorem~\ref{thm:constant_selected_error_bounded_radius}, with
\(\widetilde A_l(\rho)=\widetilde A_{l,\mathrm{LP}}(\rho)\le A_0\).
\end{proof}

This selected-radius calculation also identifies the structural features of
the local Pauli protocol that make finite-sample selected covariance
estimation effective.  First, locality ensures that a bounded-weight Pauli
coordinate depends only on a bounded number of local measurement choices.
Second, the bounded-weight assumption prevents the inverse-probability
factors in the reconstructed coefficients from growing with the total number
of qubits.  Third, local basis compatibility determines which selected Pauli
coordinates can be simultaneously active under a single basis pattern, and
therefore controls the size of the selected one-shot vector.  Finally, the
inverse-probability structure shows explicitly how small local basis
probabilities amplify both the covariance entries and the selected radius.
Thus the dimension-independent selected covariance guarantee is not a
black-box consequence of matrix concentration alone.  It follows because the
local Pauli measurement structure turns the abstract selected-radius
condition into a concrete bounded quantity for bounded-size, bounded-weight
selected coordinate sets.  This point will be useful in the comparison below:
the abstract finite-sample theorem applies to any shadow protocol once the
selected radius is controlled, but the availability of such control is a
protocol-dependent structural feature.

\subsection{Comparison with global Clifford shadows}
\label{subsec:comparison_with_global_clifford_shadows}

The preceding local Pauli result illustrates how locality and bounded
observable weight can keep the selected one-shot radius independent of the
number of qubits.  We now contrast this behavior with global Clifford
shadows.  The purpose of this comparison is not to derive a full covariance
formula for global Clifford shadows, 
but only to show how
the selected-radius quantity
\(
  B_l
  =
  \operatorname*{ess\,sup}\|R_l\hat x\|_2
\)
behaves when the measurement mechanism is changed from locally chosen Pauli
bases to a global Clifford measurement.

The comparison is made within the same abstract framework as before.  The
target observables are unchanged: we still take the output coordinates to be
the non-identity Pauli strings
\[
  \mathcal O=\mathcal P_k^\times,
  \qquad
  p=|\mathcal P_k^\times|=d^2-1,
  \qquad
  d=2^k.
\]
Only the measurement mechanism is changed.  In the global Clifford protocol,
a Clifford unitary \(C\) is drawn, the state is measured in the computational
basis after applying \(C\), and a measurement outcome \(b\in\{0,1\}^k\) gives
the rank-one stabilizer snapshot
\begin{equation}
  \hat\sigma_{C,b}
  :=
  C^\dagger |b\rangle\!\langle b| C .
  \label{eq:global_clifford_raw_snapshot}
\end{equation}
The reconstructed shadow snapshot is
\begin{equation}
  \hat\rho_{C,b}
  :=
  \mathcal M_{\mathrm{Cl}}^{-1}(\hat\sigma_{C,b}),
  \label{eq:global_clifford_reconstructed_snapshot}
\end{equation}
and the Pauli output coordinate is
\begin{equation}
  \hat x_P
  :=
  \operatorname{Tr}(\hat\rho_{C,b}P),
  \qquad
  P\in\mathcal P_k^\times .
  \label{eq:global_clifford_pauli_output_coordinate}
\end{equation}

The key difference from the local Pauli protocol is the action of the shadow
channel on the non-identity Pauli sector.  For the global Clifford ensemble,
the shadow channel acts as a scalar on the
traceless Pauli sector.  Equivalently, using the standard Clifford-shadow
inverse map, or the Pauli-invariant reconstruction formula of
Bu--Koh--Garcia--Jaffe~\cite[Theorem~1]{BuKohGarciaJaffe2024}, one has
$  \mathcal M_{\mathrm{Cl}}[P]
  =
  \frac{1}{d+1}P$
for
$ P\in\mathcal P_k^\times$,
and hence
\begin{equation}
  \mathcal M_{\mathrm{Cl}}^{-1}[P]
  =
  (d+1)P.
  \label{eq:global_clifford_inverse_scalar_action}
\end{equation}
Therefore, for a single global Clifford snapshot \(\hat\sigma=\hat\sigma_{C,b}\),
\begin{equation}
  \hat x_P
  =
  \operatorname{Tr}\!\left(
    \mathcal M_{\mathrm{Cl}}^{-1}(\hat\sigma)P
  \right)
  =
  (d+1)\operatorname{Tr}(\hat\sigma P).
  \label{eq:global_clifford_coefficient_rule}
\end{equation}

For a stabilizer snapshot \(\hat\sigma\), define its non-identity stabilizer
support by
\begin{equation}
  S(\hat\sigma)
  :=
  \{P\in\mathcal P_k^\times:
    \operatorname{Tr}(\hat\sigma P)\neq 0
  \}.
  \label{eq:global_clifford_stabilizer_support_def}
\end{equation}
For a rank-one stabilizer state, the Pauli expectation
\(\operatorname{Tr}(\hat\sigma P)\) is equal to \(\pm1\) for
\(P\in S(\hat\sigma)\), and is zero otherwise.  Thus, we have
\begin{equation}
  \hat x_P^2
  =
  (d+1)^2
  \mathbf 1_{\{P\in S(\hat\sigma)\}}.
  \label{eq:global_clifford_coefficient_square}
\end{equation}

Let
$  J_l=\{P_1,\ldots,P_l\}\subset\mathcal P_k^\times$
be a fixed selected set of distinct non-identity Pauli strings, and let
\(R_l\) be the corresponding selection matrix.  Define
\begin{equation}
  \kappa_{\mathrm{Cl}}(J_l)
  :=
  \max_{\hat\sigma}
  |J_l\cap S(\hat\sigma)|,
  \label{eq:global_clifford_kappa_selected_def}
\end{equation}
where the maximum is over rank-one stabilizer snapshots that can occur in the
global Clifford protocol.

\begin{proposition}[Selected one-shot radius for global Clifford shadows]
\label{prop:global_clifford_selected_radius}
For the global Clifford shadow protocol,
\begin{equation}
  \|R_l\hat x\|_2^2
  =
  (d+1)^2
  |J_l\cap S(\hat\sigma)|.
  \label{eq:global_clifford_selected_norm_identity}
\end{equation}
Consequently, the following state-uniform selected one-shot radius is valid:
$  B_{l,\mathrm{Cl}}^{\mathrm{unif}}
  :=
  (d+1)\sqrt{\kappa_{\mathrm{Cl}}(J_l)}$.
That is, for every state \(\rho\), the selected radius \(B_l(\rho)\) in
Definition~\ref{def:selected_one_shot_radius_general} satisfies
\(
  B_l(\rho)\le B_{l,\mathrm{Cl}}^{\mathrm{unif}}.
\)
\end{proposition}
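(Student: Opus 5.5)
The proof has two steps: first the pointwise identity \eqref{eq:global_clifford_selected_norm_identity} for every possible snapshot, then a maximization over snapshots to get the state-uniform radius.

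For the identity, fix an outcome \((C,b)\) and write \(\hat\sigma=\hat\sigma_{C,b}\). By the coefficient rule \eqref{eq:global_clifford_coefficient_rule}, each selected coordinate is \(\hat x_{P_a}=(d+1)\operatorname{Tr}(\hat\sigma P_a)\). Because \(\hat\sigma\) is a rank-one stabilizer state, \eqref{eq:global_clifford_coefficient_square} gives \(\hat x_{P_a}^2=(d+1)^2\mathbf 1_{\{P_a\in S(\hat\sigma)\}}\). Summing over \(a=1,\ldots,l\) and using \eqref{eq:general_selection_matrix_action}, we get
\[
  \|R_l\hat x\|_2^2=\sum_{a=1}^l\hat x_{P_a}^2=(d+1)^2\,|J_l\cap S(\hat\sigma)|,
\]
since the \(P_a\) are distinct. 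This is the claimed identity, and it holds for every outcome, not just almost surely.

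The only input here that is not already stated in the paper is the fact behind \eqref{eq:global_clifford_coefficient_square}: \(\operatorname{Tr}(\hat\sigma P)\in\{0,\pm1\}\) for a stabilizer pure state. I would justify it briefly. Write \(\hat\sigma=d^{-1}\sum_{g\in G}g\) for the signed stabilizer group \(G\) of size \(d\). Then \(\operatorname{Tr}(\hat\sigma P)\) is \(\pm1\) if \(\pm P\in G\), and \(0\) otherwise, by orthogonality of Pauli strings under the trace. The same formula also confirms that \(\hat\sigma\) is a genuine stabilizer state, because \(C^\dagger|b\rangle\) is a Clifford image of a computational basis state.

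For the radius bound, the right-hand side above depends only on \(\hat\sigma\), and \(\hat\sigma\) ranges over the snapshots that can occur in the protocol. Its maximum is therefore \((d+1)^2\kappa_{\mathrm{Cl}}(J_l)\) by the definition \eqref{eq:global_clifford_kappa_selected_def}. Hence \(\|R_l\hat x\|_2\le (d+1)\sqrt{\kappa_{\mathrm{Cl}}(J_l)}\) for every outcome, whatever \(\rho\) is. The essential supremum in Definition~\ref{def:selected_one_shot_radius_general} is at most the pointwise maximum, so \(B_l(\rho)\le B_{l,\mathrm{Cl}}^{\mathrm{unif}}\) for every state.

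Equality need not hold for a particular \(\rho\), because some outcomes may have probability zero under that state. This is exactly why the proposition is stated as an upper bound rather than an equality, and why the bound is labelled state-uniform.

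I do not expect a serious obstacle. The only step needing care is the stabilizer-expectation fact, and I would take the scalar inverse action \eqref{eq:global_clifford_inverse_scalar_action} as given, as the paper does. With those two facts in hand, the rest is summation and taking a maximum.
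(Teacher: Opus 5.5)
Your proposal is correct and follows the paper's proof essentially step for step. You square the coefficient rule \eqref{eq:global_clifford_coefficient_square}, sum over the distinct selected strings to obtain \((d+1)^2|J_l\cap S(\hat\sigma)|\), and bound the essential supremum by the maximum over realizable stabilizer snapshots. Your brief justification of \(\operatorname{Tr}(\hat\sigma P)\in\{0,\pm1\}\) via \(\hat\sigma=d^{-1}\sum_{g\in G}g\) is a useful addition, since the paper only asserts that fact.
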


\begin{proof}
By \eqref{eq:global_clifford_coefficient_square}, 
each selected Pauli
string \(P_a\in J_l\) satisfies
$  \hat x_{P_a}^2
  =
  (d+1)^2\mathbf 1_{\{P_a\in S(\hat\sigma)\}}$.
Summing this identity over \(a=1,\ldots,l\) gives
\begin{align*}
  \|R_l\hat x\|_2^2
  =&
  \sum_{a=1}^l \hat x_{P_a}^2
  =
  (d+1)^2
  \sum_{a=1}^l
  \mathbf 1_{\{P_a\in S(\hat\sigma)\}}\notag\\
  =&
  (d+1)^2|J_l\cap S(\hat\sigma)|,
\end{align*}
which proves \eqref{eq:global_clifford_selected_norm_identity}.  
Taking the supremum over all stabilizer snapshots that can occur in the
global Clifford measurement model gives the state-uniform bound
\begin{align*}
  B_l(\rho)\le &
(d+1)\sqrt{
  \max_{\hat\sigma}|J_l\cap S(\hat\sigma)|}
=
(d+1)\sqrt{\kappa_{\mathrm{Cl}}(J_l)} \\
=  &
B_{l,\mathrm{Cl}}^{\mathrm{unif}}.
\end{align*}
\end{proof}

\begin{remark}[Selected-radius obstruction for global Clifford shadows]
\label{rem:global_clifford_selected_radius_obstruction}
The contrast with the biased local Pauli protocol is sharp at the
level of the selected one-shot radius.  In the local Pauli case, if
\(l\le l_0\), the selected Pauli strings have weight at most \(w_0\), and
the local basis probabilities satisfy \(p_{j,r}\ge p_0>0\), then
\(
  B_{l,\mathrm{LP}}
  \le
  \sqrt{l_0}\,p_0^{-w_0},
\)
which is independent of the total number of qubits.  This is the radius
input behind the constant selected-error corollary above.  

By contrast, Proposition~\ref{prop:global_clifford_selected_radius} shows
that the state-uniform selected radius for global Clifford shadows scales as
\((d+1)\sqrt{\kappa_{\mathrm{Cl}}(J_l)}\).  For any nonempty selected Pauli set,
\(\kappa_{\mathrm{Cl}}(J_l)\ge1\), and hence this radius is at least of order
\(d\).
Thus the bounded-radius mechanism used in the selected finite-sample theorem
does not yield a dimension-independent constant selected-error regime for
global Clifford shadows, even for a singleton selected coordinate.  This does
not assert an information-theoretic impossibility result for all possible
estimators; rather, it shows that the selected-radius route used in this
paper is intrinsically favorable to local Pauli measurements with bounded
selected Pauli weight, and not to the global Clifford protocol.
\end{remark}

\subsection{Exact covariance formula for qubit local Pauli shadows}
\label{subsec:exact_covariance_formula_qubit_local_pauli}

We now record an exact covariance formula for the same biased local Pauli protocol.  This calculation is not needed for the
abstract finite-sample theorem or for the selected-radius bound above.  Its role is instead structural: it identifies how local Pauli compatibility
and local basis-selection probabilities determine the raw second moments of
the reconstructed Pauli coefficients and, after subtracting the products of
the corresponding means, the entries of the covariance
matrix.

For the uniform local Pauli protocol, the underlying second-moment mechanism
already appears in the observable-wise variance analysis of
Huang--Kueng--Preskill~\cite[Lemma~4]{HuangKuengPreskill2020}.  We first
recall the corresponding full covariance-matrix form in the uniform case,
and then extend it to biased basis probabilities.

\begin{proposition}[Full covariance-matrix form of the local Pauli second moments]
\label{prop:local_pauli_exact_covariance}
Assume that $p_{j,X}=p_{j,Y}=p_{j,Z}=1/3$.
Let $P,Q\in\mathcal P_k^\times$.
Then
\[
\mathbb E[\hat x_P\hat x_Q]
=
\begin{cases}
0, & \hbox{when }P
\not\sim Q,\\[1mm]
3^{|\supp(P)\cap\supp(Q)|}\,m_{P\ominus Q}, & \hbox{when } P\sim Q.
\end{cases}
\]
Consequently,
\begin{align*}
&\Sigma_{PQ}(\rho)=
\Cov(\hat x_P,\hat x_Q)\notag\\
=&
\begin{cases}
-m_Pm_Q, & \hbox{when } P
\not\sim Q,\\[1mm]
3^{|\supp(P)\cap\supp(Q)|}\,m_{P\ominus Q}-m_Pm_Q, 
& \hbox{when }P\sim Q.
\end{cases}
\end{align*}
In particular, we have
\(
\Sigma_{PP}(\rho)=3^{\wt(P)}-m_P^2
\)
\end{proposition}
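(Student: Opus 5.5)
The plan is to compute the raw second moment directly from the coefficient rule of Lemma~\ref{lem:biased_local_pauli_coefficient_rule} with $p_{j,r}=1/3$, so that
\[
\hat x_P=3^{\wt(P)}\mathbf 1_{\{\hat r\succeq P\}}\prod_{j\in\supp(P)}\hat s_j .
\]
The product is then
\[
\hat x_P\hat x_Q=3^{\wt(P)+\wt(Q)}\,\mathbf 1_{\{\hat r\succeq P\}}\mathbf 1_{\{\hat r\succeq Q\}}\prod_{j\in\supp(P)}\hat s_j\prod_{j\in\supp(Q)}\hat s_j .
\]
The covariance formula follows by subtracting $m_Pm_Q$, using \eqref{eq:general_covariance_entries} and the mean identity of Proposition~\ref{prop:covariance_object_general}. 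The diagonal case is the special case $Q=P$.

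\textbf{Incompatible pairs.} First I treat $P\not\sim Q$. Then some site $j$ has $P_j,Q_j\in\{X,Y,Z\}$ with $P_j\neq Q_j$. No basis pattern $r$ can satisfy both $r_j=P_j$ and $r_j=Q_j$, so the product of indicators vanishes identically. Hence $\mathbb E[\hat x_P\hat x_Q]=0$.

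\textbf{Compatible pairs.} For $P\sim Q$, the joint event $\{\hat r\succeq P\}\cap\{\hat r\succeq Q\}$ is $\{\hat r_j=(P\ominus Q\text{ or common factor})_j \text{ for all } j\in\supp(P)\cup\supp(Q)\}$. Its probability is $3^{-|\supp(P)\cup\supp(Q)|}$ by independence of the local basis choices. On the overlap $\supp(P)\cap\supp(Q)$ the signs appear squared and drop out, since $\hat s_j^2=1$. The remaining sign product is exactly $\prod_{j\in\supp(P\ominus Q)}\hat s_j$, because $\supp(P\ominus Q)$ is the symmetric difference of the supports.

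\textbf{Conditioning on the basis pattern.} Conditioned on $\hat r$ compatible with $P\ominus Q$, the expectation of $\prod_{j\in\supp(P\ominus Q)}\hat s_j$ under the product measurement equals $\operatorname{Tr}(\rho\,P\ominus Q)=m_{P\ominus Q}$. This is because measuring each site in basis $(P\ominus Q)_j$ and multiplying the $\pm1$ outcomes is a measurement of the Pauli observable $P\ominus Q$. If $P\ominus Q=I^{\otimes k}$, the product is $1$ and $m_I=1$, consistent with the claim.

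\textbf{Combining the factors.} This gives
\[
3^{\wt(P)+\wt(Q)}\cdot 3^{-|\supp P\cup\supp Q|}\,m_{P\ominus Q}=3^{|\supp P\cap\supp Q|}\,m_{P\ominus Q},
\]
by inclusion--exclusion on the supports. For $Q=P$ we have $P\ominus P=I$, so $\Sigma_{PP}=3^{\wt(P)}-m_P^2$.

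\textbf{Main obstacle.} The main delicate step is the conditional expectation identity. It requires that, given $\hat r$, the outcome distribution is the Born rule for the product of projectors $\tfrac12(I+s_j\hat r_j)$. Summing $\prod_{j\in T}s_j$ against $\operatorname{Tr}\bigl(\rho\bigotimes_j\tfrac12(I+s_j\hat r_j)\bigr)$ over all signs kills every term except $\operatorname{Tr}\bigl(\rho\bigotimes_{j\in T}\hat r_j\bigr)$. Sites outside $T$ are marginalized to $I$. This is a short computation that I would write out explicitly.
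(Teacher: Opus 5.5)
Your proposal is correct and follows essentially the paper's route: the paper does not prove this uniform statement separately but obtains it as the special case $p_{j,r}=1/3$ of Theorem~\ref{thm:biased_local_pauli_exact_covariance}, whose appendix proof uses your steps. Incompatible pairs give an identically zero product, the unique compatible basis event has probability $3^{-|\supp(P)\cup\supp(Q)|}$ in the uniform case, overlap signs square away, and the conditional sign product equals $m_{P\ominus Q}$. Your explicit Born-rule marginalization, conditioning on the basis pattern $\hat r$, is a cleaner justification of the step the paper asserts in one line in Lemma~\ref{lem:biased_local_pauli_conditional_sign_product_identity}.
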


The purpose of
this section is to extend this result to 
the biased local Pauli shadows.  
Under our setting of the biased local Pauli shadows
given in \eqref{eq:local_pauli_basis_probabilities},
the above proposition is generalized as follows.
\begin{theorem}[Exact covariance-matrix formula for biased local Pauli shadows]
\label{thm:biased_local_pauli_exact_covariance}
Let $P,Q\in\mathcal P_k^\times$. Then
\begin{align}
\mathbb E[\hat x_P\hat x_Q]
=
\begin{cases}
0, &\hbox{when } P \not\sim Q,\\[1mm]
\beta(P,Q)\,m_{P\ominus Q}, &\hbox{when } P\sim Q.
\end{cases}\label{FKS1}
\end{align}
Consequently,
\begin{align}
&\Sigma_{PQ}(\rho)
=\Cov(\hat x_P,\hat x_Q)\notag\\
=&
\begin{cases}
-m_Pm_Q, & \hbox{when } P \not\sim Q,\\[1mm]
\beta(P,Q)\,m_{P\ominus Q}-m_Pm_Q, & \hbox{when }P\sim Q.
\end{cases}\label{FKS2}
\end{align}
In particular,
\[
\Sigma_{PP}(\rho)=\Bigl(\prod_{j\in\supp(P)} p_{j,P_j}^{-1}\Bigr)-m_P^2.
\]
\end{theorem}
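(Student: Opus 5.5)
The plan is to compute the raw second moment \(\mathbb E[\hat x_P\hat x_Q]\) directly from the coefficient rule of Lemma~\ref{lem:biased_local_pauli_coefficient_rule}. The covariance formula then follows by subtracting \(m_Pm_Q\), using \(\mathbb E[\hat x_P]=m_P\) (Proposition~\ref{prop:covariance_object_general}). Multiplying the two coefficient rules gives
\[
\hat x_P\hat x_Q=\Bigl(\prod_{j\in\supp(P)}p_{j,P_j}^{-1}\Bigr)\Bigl(\prod_{j\in\supp(Q)}p_{j,Q_j}^{-1}\Bigr)\mathbf 1_{\{\hat r\succeq P\}}\mathbf 1_{\{\hat r\succeq Q\}}\prod_{j\in\supp(P)}\hat s_j\prod_{j\in\supp(Q)}\hat s_j .
\]

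\textbf{Incompatible case.} If \(P\not\sim Q\), some site \(j\) has \(P_j,Q_j\ne I\) with \(P_j\ne Q_j\). No basis pattern can satisfy both \(r_j=P_j\) and \(r_j=Q_j\), so the product of indicators vanishes identically. Hence \(\mathbb E[\hat x_P\hat x_Q]=0\) and \(\Sigma_{PQ}=-m_Pm_Q\).

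\textbf{Compatible case.} If \(P\sim Q\), the joint indicator equals \(\mathbf 1_{\{\hat r\succeq P\ominus Q\}}\mathbf 1_{\{\hat r_j=P_j,\ j\in\supp(P)\cap\supp(Q)\}}\). Since \(\hat s_j^2=1\), the sign product reduces to \(\prod_{j\in\supp(P\ominus Q)}\hat s_j\), where \(\supp(P\ominus Q)\) is the symmetric difference of the two supports. The inverse-probability prefactor splits into two parts:
\begin{itemize}
\item the overlap part \(\beta(P,Q)^2\);
\item the symmetric-difference part \(\prod_{j\in\supp(P\ominus Q)}p_{j,(P\ominus Q)_j}^{-1}\).
\end{itemize}
This gives
\[
\hat x_P\hat x_Q=\beta(P,Q)^2\,\mathbf 1_{\{\hat r_j=P_j,\ j\in\supp(P)\cap\supp(Q)\}}\;\hat x_{P\ominus Q}.
\]
Here \(\hat x_{P\ominus Q}\) is given by the coefficient rule; if \(P\ominus Q=I^{\otimes k}\), i.e.\ \(P=Q\), it is read as \(1\) with \(m_{I}=1\).

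\textbf{Main step: taking expectations.} The key point is that the overlap-site indicator is independent of \(\hat x_{P\ominus Q}\), and this is where the product structure of the basis choice in \eqref{eq:local_pauli_basis_probabilities} is needed. I would verify it by conditioning on \(\hat r\). Given \(\hat r\), the outcomes are Born outcomes of the product measurement, so the sign product on \(\supp(P\ominus Q)\) has conditional expectation \(\operatorname{Tr}(\rho\,\bigotimes_{j\in\supp(P\ominus Q)}\hat r_j)\). This quantity depends only on \(\hat r\) restricted to the symmetric difference, because the other sites are marginalized and the outcomes there are summed out. It is disjoint from the overlap sites, and the \(\hat r_j\) are independent across sites. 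The expectation therefore factorizes as
\[
\beta(P,Q)^2\prod_{j\in\supp(P)\cap\supp(Q)}p_{j,P_j}\cdot\mathbb E[\hat x_{P\ominus Q}]=\beta(P,Q)\,m_{P\ominus Q}.
\]
The last equality uses unbiasedness applied to \(P\ominus Q\). The main obstacle is making this independence and marginalization rigorous: the physical outcomes on different sites are correlated through \(\rho\), and only the basis choices are independent. The conditioning on \(\hat r\) handles this, since nothing on the overlap sites beyond \(\hat r\) enters.

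The diagonal case \(P=Q\) gives \(P\ominus P=I\) and \(\beta(P,P)=\prod_{j\in\supp(P)}p_{j,P_j}^{-1}\), which yields the stated \(\Sigma_{PP}\). Setting \(p=1/3\) recovers Proposition~\ref{prop:local_pauli_exact_covariance}.
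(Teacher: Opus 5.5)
Your proof is correct, but it organizes the compatible case differently from the paper.

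\textbf{The paper's route.} It works with the full nonvanishing event $E=\{\hat r\succeq P,\ \hat r\succeq Q\}$ on $\supp(P)\cup\supp(Q)$. It computes $\mathbb P(E)$ as a product of basis probabilities over the whole union of supports. It proves a separate conditional sign-product identity, $\mathbb E[\prod_{j\in\supp(P\ominus Q)}\hat s_j\mid E]=m_{P\ominus Q}$. It then cancels the symmetric-difference probabilities against the inverse-probability weights by hand, leaving $\beta(P,Q)$.

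\textbf{Your route.} You instead establish the pointwise identity $\hat x_P\hat x_Q=\beta(P,Q)^2\,\mathbf 1_{\{\hat r_j=P_j,\ j\in\supp(P)\cap\supp(Q)\}}\,\hat x_{P\ominus Q}$. The symmetric-difference cancellation is then absorbed into the unbiasedness $\mathbb E[\hat x_{P\ominus Q}]=m_{P\ominus Q}$. The only remaining probabilistic input is that the overlap basis choices are independent of the data on the symmetric difference. You justify this correctly by conditioning on $\hat r$ and marginalizing the outcomes off $\supp(P\ominus Q)$.

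\textbf{What each buys.} Your version shows the structure more transparently: a mixed second moment is a rescaled first moment of another shadow coordinate, and the extra factor $\beta(P,Q)$ comes only from the overlap sites. It also spells out the marginalization over outcomes outside the sign product, which the paper's conditional sign-product lemma treats only briefly. The paper's version is a fully explicit computation that does not call on unbiasedness as a black box, and it records the probability of the joint nonvanishing event along the way.

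Your treatment of $P=Q$, reading $\hat x_{I^{\otimes k}}=1$, is consistent, since each local factor of the product snapshot has unit trace, so $\operatorname{Tr}(\hat\rho)=1$.
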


The proof of Theorem~\ref{thm:biased_local_pauli_exact_covariance} proceeds by inserting the coefficient rule for $\hat x_P$ and $\hat x_Q$, separating the incompatible case $P\not\sim Q$ from the compatible case $P\sim Q$, identifying the unique common basis event in the latter case, and then subtracting $m_Pm_Q$ to pass from the raw second moment to the covariance.

\begin{proof}[Proof Sketch of Theorem~\ref{thm:biased_local_pauli_exact_covariance}]
Lemma~\ref{lem:biased_local_pauli_coefficient_rule} gives
\begin{align}
 &\hat x_P\hat x_Q\notag\\
 =&\Bigl(\prod_{j\in\supp(P)} p_{j,P_j}^{-1}\Bigr)
  \Bigl(\prod_{j\in\supp(Q)} p_{j,Q_j}^{-1}\Bigr)
 \mathbf 1_{\{\hat r_j=P_j\ \forall j\in\supp(P)\}} \notag\\
&\cdot \mathbf 1_{\{\hat r_j=Q_j\ \forall j\in\supp(Q)\}}
 \prod_{j\in\supp(P)} \hat s_j
 \prod_{j\in\supp(Q)} \hat s_j.
\end{align}
If $P\not\sim Q$, the two indicator events are incompatible at a site where both strings are non-identity but different, so $\hat x_P\hat x_Q=0$ for every realization and hence $\mathbb E[\hat x_P\hat x_Q]=0$.

Assume now that $P\sim Q$. There is again a unique common basis pattern on $\supp(P)\cup\supp(Q)$ for which both coefficients are nonzero. The overlap contributes exactly the factor
\[
 \beta(P,Q)=\prod_{j\in\supp(P)\cap\supp(Q)} p_{j,P_j}^{-1},
\]
while the nonoverlap basis choices are absorbed by the probability of the compatible event. On that event the outcome product reduces to the expectation of the cancellation string $P\ominus Q$, and therefore
\(
 \mathbb E[\hat x_P\hat x_Q]=\beta(P,Q)m_{P\ominus Q}.
\)
Subtracting $m_Pm_Q$ gives the covariance formula, and the diagonal case $P=Q$ yields
\[
 \Sigma_{PP}(\rho)=\Bigl(\prod_{j\in\supp(P)} p_{j,P_j}^{-1}\Bigr)-m_P^2.
\]
The detailed compatible-basis and sign-product calculations are recorded in Appendix~\ref{app:proof_biased_local_pauli_exact_covariance}.
\end{proof}

\begin{remark}[Uniform local Pauli as a special case]
\label{rem:biased_local_pauli_uniform_special_case}
If $p_{j,X}=p_{j,Y}=p_{j,Z}=1/3$ for every $j$, then
\[
\beta(P,Q)=3^{|\supp(P)\cap\supp(Q)|},
\quad
\prod_{j\in\supp(P)} p_{j,P_j}^{-1}=3^{\wt(P)},
\]
and Theorem~\ref{thm:biased_local_pauli_exact_covariance} reduces exactly to Proposition~\ref{prop:local_pauli_exact_covariance}.
\end{remark}

\section{General Local Product Shadows}
\label{sec:general_local_product_shadows}

The local Pauli analysis above uses special algebraic features of qubit Pauli
measurements.  However, the bounded-radius mechanism itself is not restricted
to Pauli measurements.  In this section we record a more general local
product-shadow setting in which the selected observables have finite support.
The main point is that, when both the measurement and the reconstruction are
local product constructions, a finite-weight observable only sees the local
snapshots on its support.  Consequently, the selected one-shot radius is
controlled by the selected support sizes and local reconstruction
coefficients, not by the total number \(k\) of tensor factors.

\subsection{Local product measurement model and finite-weight observables}
\label{subsec:general_local_product_model}

We set the physical system as
\begin{equation}
  \mathcal H
  =
  \bigotimes_{j=1}^k \mathcal H_j,
  \qquad
  \mathcal H_j\simeq \mathbb C^t ,
  \label{eq:general_local_product_hilbert_space}
\end{equation}
where the local dimension \(t\) is fixed.  At each site \(j\), let
\(\mathcal U_j\) be a finite set of local measurement settings.  A random
local setting
\(
  \hat u_j\in\mathcal U_j
\)
is drawn with probability
\(
  \mathbb P(\hat u_j=u_j)=q_{j,u_j}.
\)
We assume that the local settings are chosen independently, so that 
we have 
$  q(u)=  \prod_{j=1}^k q_{j,u_j}$ for \(u=(u_1,\ldots,u_k)\).

For each \(u_j\in\mathcal U_j\), let
$  M^{(j)}_{u_j}
  =
  \{M^{(j)}_{b_j|u_j}\}_{b_j\in\mathcal B_{j,u_j}}
$ be a POVM on \(\mathcal H_j\).  Conditional on
\(u=(u_1,\ldots,u_k)\), the product POVM is defined as
\begin{equation}
  M_{b|u}
  :=
  \bigotimes_{j=1}^k M^{(j)}_{b_j|u_j},
  \qquad
  b=(b_1,\ldots,b_k).
  \label{eq:general_local_product_povm_element}
\end{equation}
Thus one shot produces random data
\(
  (\hat u,\hat b)
  =
  ((\hat u_1,\ldots,\hat u_k),(\hat b_1,\ldots,\hat b_k)).
\)

For each site \(j\), define the local shadow channel by
\begin{equation}
  \mathcal M_j[X_j]
  :=
  \sum_{u_j\in\mathcal U_j}
  q_{j,u_j}
  \sum_{b_j\in\mathcal B_{j,u_j}}
  M^{(j)}_{b_j|u_j}
  \operatorname{Tr}
  \left(
    X_j M^{(j)}_{b_j|u_j}
  \right).
  \label{eq:general_local_product_local_shadow_channel}
\end{equation}
We assume that each \(\mathcal M_j\) is invertible on a prescribed local
operator subspace \(\mathcal V_j\).  We also assume that all local POVM
elements \(M^{(j)}_{b_j|u_j}\) belong to \(\mathcal V_j\), so that
\(\mathcal M_j^{-1}(M^{(j)}_{b_j|u_j})\) is well defined.

The global shadow channel is defined, as in
Section~\ref{subsec:general_finite_dimensional_shadow_channel} has the form
\begin{equation}
  \mathcal M[X]
  =
  \sum_{u\in\mathcal U}
  q(u)
  \sum_{b\in\mathcal B_u}
  M_{b|u}
  \operatorname{Tr}
  \left(
    X M_{b|u}
  \right).
  \label{eq:general_local_product_global_shadow_channel}
\end{equation}
In the present local product setting, \(q(u)=\prod_j q_{j,u_j}\) and
\(M_{b|u}=\bigotimes_j M^{(j)}_{b_j|u_j}\).  
Throughout this section, we assume also that each local shadow channel \(\mathcal M_j\) is invertible on \(\mathcal L(\mathcal H_j)\). 
The next lemma shows that these
two product structures imply the product form of the global reconstruction.

\begin{lemma}[Product form of the reconstructed snapshot]
\label{lem:general_local_product_snapshot_factorization}
Assume the local product measurement setting described above.  
Then, the global shadow channel factorizes as
\begin{equation}
  \mathcal M
  =
  \bigotimes_{j=1}^k \mathcal M_j .
  \label{eq:general_local_product_channel_factorization}
\end{equation}
Consequently, on this reconstruction subspace,
\begin{equation}
  \mathcal M^{-1}
  =
  \bigotimes_{j=1}^k \mathcal M_j^{-1}.
  \label{eq:general_local_product_inverse_factorization}
\end{equation}
For the realized outcome \((\hat u,\hat b)\), define
\begin{equation}
  \hat\rho_j
  :=
  \mathcal M_j^{-1}
  \left(
    M^{(j)}_{\hat b_j|\hat u_j}
  \right).
  \label{eq:general_local_product_local_snapshot}
\end{equation}
Then the global reconstructed snapshot
\(
  \hat\rho
  :=
  \mathcal M^{-1}(M_{\hat b|\hat u})
\)
factorizes as
\begin{equation}
  \hat\rho
  =
  \bigotimes_{j=1}^k \hat\rho_j .
  \label{eq:general_local_product_global_snapshot}
\end{equation}
\end{lemma}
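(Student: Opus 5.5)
The plan is to verify the factorization \eqref{eq:general_local_product_channel_factorization} on product operators first and then extend by linearity. Take a product Hermitian operator \(X=\bigotimes_{j=1}^k X_j\). Insert \(q(u)=\prod_j q_{j,u_j}\) and \(M_{b|u}=\bigotimes_j M^{(j)}_{b_j|u_j}\) into \eqref{eq:general_local_product_global_shadow_channel}. The trace of a tensor product factorizes:
\[
\operatorname{Tr}\Bigl(X M_{b|u}\Bigr)=\prod_{j=1}^k\operatorname{Tr}\bigl(X_jM^{(j)}_{b_j|u_j}\bigr).
\]
The summand is therefore the tensor product over \(j\) of \(q_{j,u_j}M^{(j)}_{b_j|u_j}\operatorname{Tr}(X_jM^{(j)}_{b_j|u_j})\). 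Because \(\mathcal U=\prod_j\mathcal U_j\) and \(\mathcal B_u=\prod_j\mathcal B_{j,u_j}\) are Cartesian products, the sum over \((u,b)\) splits into iterated sums over each pair \((u_j,b_j)\). Multilinearity of \(\otimes\) then gives
\[
\mathcal M[X]=\bigotimes_{j=1}^k\mathcal M_j[X_j],
\]
using definition \eqref{eq:general_local_product_local_shadow_channel}.

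Product operators span \(\mathcal L(\mathcal H)=\bigotimes_j\mathcal L(\mathcal H_j)\) over \(\mathbb C\). Both \(\mathcal M\) and \(\bigotimes_j\mathcal M_j\) are linear and agree on this spanning set, so they coincide on \(\mathcal L(\mathcal H)\), and hence also on its Hermitian part \(\mathsf L(\mathcal H)\). For the inverse, each \(\mathcal M_j\) is assumed invertible on \(\mathcal L(\mathcal H_j)\), so \(\bigotimes_j\mathcal M_j^{-1}\) is a well-defined linear map. Composing,
\[
\Bigl(\bigotimes_j\mathcal M_j\Bigr)\Bigl(\bigotimes_j\mathcal M_j^{-1}\Bigr)=\bigotimes_j\bigl(\mathcal M_j\mathcal M_j^{-1}\bigr)=\mathrm{id},
\]
and similarly in the other order. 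This is checked on product operators and extended by linearity, and it yields \eqref{eq:general_local_product_inverse_factorization}. I will also note that each \(\mathcal M_j\) maps Hermitian operators to Hermitian operators, so \(\mathcal M_j^{-1}\) preserves Hermiticity. The restriction to \(\mathsf L(\mathcal H)\) is therefore consistent with the inverse used in \eqref{eq:general_reconstructed_snapshot}.

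Finally, apply \eqref{eq:general_local_product_inverse_factorization} to the product operator \(M_{\hat b|\hat u}=\bigotimes_j M^{(j)}_{\hat b_j|\hat u_j}\). This gives \(\hat\rho=\bigotimes_j\mathcal M_j^{-1}(M^{(j)}_{\hat b_j|\hat u_j})=\bigotimes_j\hat\rho_j\), which is \eqref{eq:general_local_product_global_snapshot}.

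The main obstacle is bookkeeping rather than any deep step. The tensor-product extension of linear maps must be stated over the complex span, since products of Hermitian operators span \(\mathsf L(\mathcal H)\) only over \(\mathbb R\) after suitable care. I would handle this by working on \(\mathcal L(\mathcal H)\), where \(\mathcal L(\mathcal H)\cong\bigotimes_j\mathcal L(\mathcal H_j)\) is standard, and then restricting. The subspace \(\mathcal V_j\) caveat from the setup is not needed, because full invertibility on \(\mathcal L(\mathcal H_j)\) is assumed in this section.
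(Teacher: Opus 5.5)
Your proposal is correct and follows the paper's proof essentially step for step. Like the paper, you verify $\mathcal M[X]=\bigotimes_j\mathcal M_j[X_j]$ on product operators using the factorization of $q(u)$, of the POVM elements, and of the trace, extend by linearity, invert factorwise using invertibility of each $\mathcal M_j$ on $\mathcal L(\mathcal H_j)$, and apply the result to the product operator $M_{\hat b|\hat u}$. Your extra remarks on the complex span, the explicit composition check, and Hermiticity preservation are sound, and they fill in points the paper leaves implicit.
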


\begin{proof}
It is enough to verify the factorization on product operators
\(X=X_1\otimes\cdots\otimes X_k\), since such operators span
\(\mathcal L(\mathcal H)\).  By the definition of the global shadow channel,
the product form of \(q(u)\), and the product form of the POVM elements,
\begin{align}
  \mathcal M[X]
  &=
  \sum_{u_1,\ldots,u_k}
  \left(
    \prod_{j=1}^k q_{j,u_j}
  \right)
  \sum_{b_1,\ldots,b_k}
  \left(
    \bigotimes_{j=1}^k M^{(j)}_{b_j|u_j}
  \right)
  \notag\\
  &\quad\cdot
  \operatorname{Tr}
  \left[
    \left(
      \bigotimes_{j=1}^k X_j
    \right)
    \left(
      \bigotimes_{j=1}^k M^{(j)}_{b_j|u_j}
    \right)
  \right].
\end{align}
The trace factorizes:
\[
  \operatorname{Tr}
  \left[
    \left(
      \bigotimes_{j=1}^k X_j
    \right)
    \left(
      \bigotimes_{j=1}^k M^{(j)}_{b_j|u_j}
    \right)
  \right]
  =
  \prod_{j=1}^k
  \operatorname{Tr}
  \left(
    X_jM^{(j)}_{b_j|u_j}
  \right).
\]
Therefore,
\begin{align}
  \mathcal M[X]
  &=
  \bigotimes_{j=1}^k
  \left[
    \sum_{u_j\in\mathcal U_j}
    q_{j,u_j}
    \sum_{b_j\in\mathcal B_{j,u_j}}
    M^{(j)}_{b_j|u_j}
    \operatorname{Tr}
    \left(
      X_jM^{(j)}_{b_j|u_j}
    \right)
  \right]
  \notag\\
  &=
  \bigotimes_{j=1}^k
  \mathcal M_j[X_j],
\end{align}
which implies \eqref{eq:general_local_product_channel_factorization}.

Since each \(\mathcal M_j\) is invertible on \(\mathcal L(\mathcal H_j)\), the
inverse of the tensor-product channel is
\(
  \mathcal M^{-1}
  =
  \bigotimes_{j=1}^k \mathcal M_j^{-1}.
\)
For the realized outcome, we have
\(
  M_{\hat b|\hat u}
  =
  \bigotimes_{j=1}^k
  M^{(j)}_{\hat b_j|\hat u_j}.
\)
Hence
\begin{align}
  \hat\rho
  &=
  \mathcal M^{-1}(M_{\hat b|\hat u})
  =
  \left(
    \bigotimes_{j=1}^k \mathcal M_j^{-1}
  \right)
  \left(
    \bigotimes_{j=1}^k
    M^{(j)}_{\hat b_j|\hat u_j}
  \right)\notag\\
  &=
  \bigotimes_{j=1}^k
  \mathcal M_j^{-1}
  \left(
    M^{(j)}_{\hat b_j|\hat u_j}
  \right)
  =
  \bigotimes_{j=1}^k\hat\rho_j,
\end{align}
which yields \eqref{eq:general_local_product_global_snapshot}.
\end{proof}

The factorization formula 
\eqref{eq:general_local_product_global_snapshot}
in Lemma~\ref{lem:general_local_product_snapshot_factorization}
is the general local product analogue of the local Pauli snapshot
formula.  We also assume throughout this section that the local snapshots are
normalized,
\begin{equation}
  \operatorname{Tr}(\hat\rho_j)=1
  \qquad
  \text{almost surely for every }j.
  \label{eq:general_local_product_trace_one_snapshot}
\end{equation}
This assumption ensures that identity tensor factors do not contribute to
the output coordinate.

We now specify the physical quantities whose expectation values are studied.
Let
\(
  O_1,\ldots,O_l
\)
be selected product observables of the form
\begin{equation}
  O_a
  =
  \bigotimes_{j=1}^k O_{a,j},
  \qquad
  a=1,\ldots,l,
  \label{eq:general_local_product_selected_observable}
\end{equation}
where \(O_{a,j}\) is a Hermitian operator on \(\mathcal H_j\), and
\(O_{a,j}=I_j\) for all but finitely many sites.  
Remember the definitions of 
the support and the weight as
\begin{equation}
  \supp(O_a)
  =
  \{j:\,O_{a,j}\neq I_j\},
  \quad
  \wt(O_a)=|\supp(O_a)|.
  \label{eq:general_local_product_support_weight}
\end{equation}
We assume that the selected observables have finite weight, and later we
will impose a uniform bound
\(
  \wt(O_a)\le w_0.
\)

For each selected observable, define the shadow-output coordinate
\begin{equation}
  \hat x_a
  :=
  \operatorname{Tr}(\hat\rho O_a),
  \qquad
  a=1,\ldots,l.
  \label{eq:general_local_product_shadow_output_coordinate}
\end{equation}
The selected output vector is
\begin{equation}
  R_l\hat x
  =
  (\hat x_1,\ldots,\hat x_l)^\top
  \in\mathbb R^l.
  \label{eq:general_local_product_selected_output_vector}
\end{equation}
Equivalently, in this section we may regard the selected observables
\(O_1,\ldots,O_l\) as the full observable family and take \(R_l=I_l\).

For a local operator \(A\) on \(\mathcal H_j\), define the local
reconstruction coefficient
\begin{equation}
  \eta_j(A;\hat u_j,\hat b_j)
  :=
  \operatorname{Tr}
  \left[
    \mathcal M_j^{-1}
    \left(
      M^{(j)}_{\hat b_j|\hat u_j}
    \right)
    A
  \right]
  =
  \operatorname{Tr}(\hat\rho_j A).
  \label{eq:general_local_product_eta_def}
\end{equation}
The following lemma is the general local product analogue of the local Pauli
coefficient rule.

\begin{lemma}[Local product shadow-output coefficient rule]
\label{lem:general_local_product_shadow_output_rule}
Under the local product-shadow assumptions above, for every selected product
observable \(O_a\) in
\eqref{eq:general_local_product_selected_observable}, the shadow-output
coordinate satisfies
\begin{equation}
  \hat x_a
  =
  \prod_{j\in\supp(O_a)}
  \eta_j(O_{a,j};\hat u_j,\hat b_j).
  \label{eq:general_local_product_shadow_output_rule}
\end{equation}
\end{lemma}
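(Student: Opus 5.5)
The plan is to insert the product form of the reconstructed snapshot from Lemma~\ref{lem:general_local_product_snapshot_factorization} into the definition \eqref{eq:general_local_product_shadow_output_coordinate}. Then we use multiplicativity of the trace over tensor products, exactly as in the proof of Lemma~\ref{lem:biased_local_pauli_coefficient_rule}. By \eqref{eq:general_local_product_global_snapshot}, $\hat\rho=\bigotimes_{j=1}^k\hat\rho_j$, with $\hat\rho_j=\mathcal M_j^{-1}(M^{(j)}_{\hat b_j|\hat u_j})$. The selected observable also has product form, $O_a=\bigotimes_{j=1}^k O_{a,j}$. Hence
\[
  \hat x_a
  =
  \operatorname{Tr}\Bigl[\Bigl(\bigotimes_{j=1}^k\hat\rho_j\Bigr)\Bigl(\bigotimes_{j=1}^k O_{a,j}\Bigr)\Bigr]
  =
  \prod_{j=1}^k \operatorname{Tr}(\hat\rho_j O_{a,j}),
\]
because $(\otimes_j A_j)(\otimes_j B_j)=\otimes_j A_jB_j$ and $\operatorname{Tr}(\otimes_j C_j)=\prod_j\operatorname{Tr}(C_j)$.

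Next, we split the product over sites into $j\notin\supp(O_a)$ and $j\in\supp(O_a)$. For $j\notin\supp(O_a)$ we have $O_{a,j}=I_j$. The local factor is therefore $\operatorname{Tr}(\hat\rho_j)$, which equals $1$ almost surely by the normalization assumption \eqref{eq:general_local_product_trace_one_snapshot}. For $j\in\supp(O_a)$, the local factor is $\operatorname{Tr}(\hat\rho_jO_{a,j})=\eta_j(O_{a,j};\hat u_j,\hat b_j)$ by definition \eqref{eq:general_local_product_eta_def}. Combining the two cases gives \eqref{eq:general_local_product_shadow_output_rule}.

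There is no real obstacle here; the argument is bookkeeping. The one point needing care is the product over all $k$ sites. The weight is finite and $k$ is finite, so it is a finite product, and all but the support factors equal $1$. The identity factors reduce to $1$ only because of the trace-one assumption \eqref{eq:general_local_product_trace_one_snapshot}, not automatically. This assumption should be cited explicitly, and it is used only almost surely, consistent with the almost-sure nature of the identity. We should also note that $\mathcal M^{-1}=\bigotimes_j\mathcal M_j^{-1}$ is available on the full operator space, since each $\mathcal M_j$ is assumed invertible on $\mathcal L(\mathcal H_j)$. This makes the factorized snapshot well defined.
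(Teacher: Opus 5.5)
Your proposal is correct and matches the paper's proof: substitute the factorized snapshot $\hat\rho=\bigotimes_{j=1}^k\hat\rho_j$, factor the trace over tensor factors, use the trace-one normalization for the identity sites, and apply the definition of $\eta_j$ on $\supp(O_a)$. You are also right to stress that the identity factors equal $1$ only because of that normalization assumption, which is a genuine hypothesis and not an automatic consequence of the local shadow channel.
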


\begin{proof}
Using the standing product reconstruction assumption
\eqref{eq:general_local_product_global_snapshot} and the product form of
\(O_a\),
we obtain
\begin{align}
  \hat x_a
  &=
  \operatorname{Tr}(\hat\rho O_a)
=
  \operatorname{Tr}
  \left[
    \left(\bigotimes_{j=1}^k\hat\rho_j\right)
    \left(\bigotimes_{j=1}^k O_{a,j}\right)
  \right]
  \notag\\
  &=
  \prod_{j=1}^k
  \operatorname{Tr}(\hat\rho_j O_{a,j}).
  \label{eq:general_local_product_output_factorization}
\end{align}
If \(j\notin\supp(O_a)\), then \(O_{a,j}=I_j\), and the condition \eqref{eq:general_local_product_trace_one_snapshot} implies 
$\operatorname{Tr}(\hat\rho_j I_j)=1$.
Therefore only sites in \(\supp(O_a)\) contribute.  Using the definition
\eqref{eq:general_local_product_eta_def} of \(\eta_j\), we obtain
\[
  \hat x_a
  =
  \prod_{j\in\supp(O_a)}
  \operatorname{Tr}(\hat\rho_j O_{a,j})
  =
  \prod_{j\in\supp(O_a)}
  \eta_j(O_{a,j};\hat u_j,\hat b_j),
\]
which proves \eqref{eq:general_local_product_shadow_output_rule}.
\end{proof}

In the biased local Pauli protocol, the local coefficient
\(\eta_j\) reduces to
\(
  p_{j,P_j}^{-1}
  \mathbf 1_{\{\hat r_j=P_j\}}\hat s_j
\)
for a non-identity local Pauli \(P_j\).  Thus the Pauli compatibility
indicator is a special feature of the local Pauli measurement basis.  In the
general local product setting, the measurement and the local observable need
not belong to the same basis, and the corresponding dependence is absorbed
into the coefficient \(\eta_j\).

\subsection{Selected one-shot radius and centered radius bounds}
\label{subsec:general_local_product_radius_bounds}

We now bound the selected one-shot radius
\(  B_l =  \operatorname*{ess\,sup}\|R_l\hat x\|_2\)
and the upper bound 
\(\widetilde A_l(\rho)=B_l+\|R_lm\|_2\)
of the centered radius
\(A_l(\rho)\).
The key point is that each coordinate \(\hat x_a\) depends only on the local
snapshots on \(\supp(O_a)\).

For a local operator \(A\) on \(\mathcal H_j\), define the one-site
coefficient radius
\begin{equation}
  \Gamma_j(A)
  :=
  \operatorname*{ess\,sup}
  \left|
  \eta_j(A;\hat u_j,\hat b_j)
  \right|.
  \label{eq:general_local_product_gamma_def}
\end{equation}
For a selected product observable \(O_a\), set
\begin{equation}
  L_a
  :=
  \prod_{j\in\supp(O_a)}
  \Gamma_j(O_{a,j}).
  \label{eq:general_local_product_La_def}
\end{equation}
Then Lemma~\ref{lem:general_local_product_shadow_output_rule} immediately
gives
\begin{equation}
  |\hat x_a|\le L_a
  \qquad
  \text{almost surely}.
  \label{eq:general_local_product_coordinate_bound}
\end{equation}

\begin{proposition}[Selected radius bound for finite-weight product observables]
\label{prop:general_local_product_selected_radius_bound}
In the general local product-shadow setting above, the selected one-shot
radius satisfies
\begin{equation}
  B_l
  \le
  \left(
    \sum_{a=1}^l
    \prod_{j\in\supp(O_a)}
    \Gamma_j(O_{a,j})^2
  \right)^{1/2}.
  \label{eq:general_local_product_Bl_bound}
\end{equation}
Moreover, for every state \(\rho\),
\begin{equation}
\widetilde A_l(\rho)
  \le
  \left(
    \sum_{a=1}^l
    \prod_{j\in\supp(O_a)}
    \Gamma_j(O_{a,j})^2
  \right)^{1/2}
  +
  \left(
    \sum_{a=1}^l
    \|O_a\|_{\mathrm{op}}^2
  \right)^{1/2}.
  \label{eq:general_local_product_Al_bound}
\end{equation}
\end{proposition}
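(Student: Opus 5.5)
The plan is to bound the selected output norm pointwise on every possible outcome and then take the essential supremum. By Lemma~\ref{lem:general_local_product_shadow_output_rule}, each coordinate is a product over the sites in \(\supp(O_a)\),
\[
  \hat x_a=\prod_{j\in\supp(O_a)}\eta_j(O_{a,j};\hat u_j,\hat b_j).
\]
By the definition \eqref{eq:general_local_product_gamma_def} of \(\Gamma_j\), each factor satisfies \(|\eta_j(O_{a,j};\hat u_j,\hat b_j)|\le\Gamma_j(O_{a,j})\) almost surely. Only finitely many sites enter each product, and there are only \(l\) observables. Intersecting these finitely many almost-sure events therefore gives a single event of probability one. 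On this event, \eqref{eq:general_local_product_coordinate_bound} holds simultaneously for all \(a=1,\ldots,l\), so \(\hat x_a^2\le L_a^2=\prod_{j\in\supp(O_a)}\Gamma_j(O_{a,j})^2\).

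Summing over \(a\) gives, almost surely,
\[
  \|R_l\hat x\|_2^2=\sum_{a=1}^l\hat x_a^2\le\sum_{a=1}^l\prod_{j\in\supp(O_a)}\Gamma_j(O_{a,j})^2 .
\]
Here \(R_l=I_l\) on the selected family, as noted after \eqref{eq:general_local_product_selected_output_vector}. Taking square roots and then the essential supremum in Definition~\ref{def:selected_one_shot_radius_general} yields \eqref{eq:general_local_product_Bl_bound}. This argument does not need the selected observables to be compatible, so no \(\max_r\) structure as in \eqref{eq:local_pauli_Bl_LP_def} is needed. We only use the crude bound obtained by dropping the indicator-type structure, which the general local product setting does not provide anyway.

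For the centered bound, recall that \(\widetilde A_l(\rho)=B_l+\|R_lm\|_2\) by \eqref{eq:selected_centered_radius_upper_bound_general}. It remains to bound the means. We have \(m_a=\operatorname{Tr}(\rho O_a)\), and since \(\rho\) is a density operator, \(|m_a|\le\|\rho\|_1\|O_a\|_{\mathrm{op}}=\|O_a\|_{\mathrm{op}}\) by Hölder's inequality for the trace. Hence \(\|R_lm\|_2^2=\sum_a m_a^2\le\sum_a\|O_a\|_{\mathrm{op}}^2\). Adding this to \eqref{eq:general_local_product_Bl_bound} gives \eqref{eq:general_local_product_Al_bound}, uniformly in \(\rho\).

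There is no real obstacle. The only point needing care is the measure-theoretic one: the essential suprema \(\Gamma_j\) are defined separately for each site, so we must make sure the coordinatewise bounds hold on a common full-probability event before summing. In the finite-outcome setting this issue disappears altogether, since the essential supremum is simply a maximum over outcomes of positive probability, and the bound then holds for every such outcome.
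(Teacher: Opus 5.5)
Your proposal is correct and follows the paper's proof essentially step for step: the coordinatewise bound $|\hat x_a|\le L_a$ from the coefficient rule, summing squares and taking the essential supremum, and then bounding $\|R_l m\|_2$ via $|m_a|\le\|O_a\|_{\mathrm{op}}$. The only addition is your explicit remark about intersecting the finitely many almost-sure events before summing, which the paper leaves implicit.
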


\begin{proof}
The relation \eqref{eq:general_local_product_coordinate_bound}
yields $  |\hat x_a|^2\le L_a^2
  =
  \prod_{j\in\supp(O_a)}
  \Gamma_j(O_{a,j})^2$, which implies
\begin{align}
  \|R_l\hat x\|_2^2
  =
  \sum_{a=1}^l |\hat x_a|^2
  \le
  \sum_{a=1}^l
  \prod_{j\in\supp(O_a)}
  \Gamma_j(O_{a,j})^2
  \label{eq:general_local_product_selected_norm_bound}
\end{align}
almost surely.  Taking the essential supremum proves
\eqref{eq:general_local_product_Bl_bound}.

For the centered radius, recall that
\( \widetilde A_l(\rho)=B_l+\|R_lm\|_2.
\)
For each coordinate, we have
\(
  m_a
  =
  \operatorname{Tr}(\rho O_a)
\), which implies
\(
  |m_a|
  \le
  \|O_a\|_{\mathrm{op}}.
\)
Thus, we have
\[
  \|R_lm\|_2
  =
  \left(
    \sum_{a=1}^l |m_a|^2
  \right)^{1/2}
  \le
  \left(
    \sum_{a=1}^l \|O_a\|_{\mathrm{op}}^2
  \right)^{1/2}.
\]
Combining this with \eqref{eq:general_local_product_Bl_bound} proves
\eqref{eq:general_local_product_Al_bound}.
\end{proof}

The preceding proposition gives a radius bound in terms of the actual local
reconstruction coefficients appearing in the selected observables.  The next
corollary records a simple uniform version.

\begin{corollary}[Dimension-independent radius under bounded local reconstruction]
\label{cor:general_local_product_dimension_independent_radius}
Assume that there are constants \(l_0,w_0,\Gamma_0,M_0\), independent of the
number \(k\) of tensor factors, such that
\begin{equation}
  l\le l_0,
  \qquad
  \wt(O_a)\le w_0,
  \qquad
  \Gamma_j(O_{a,j})\le \Gamma_0
  \label{eq:general_local_product_uniform_gamma_assumption}
\end{equation}
for all $a,j\in\supp(O_a)$,
and
\begin{equation}
  \|O_a\|_{\mathrm{op}}\le M_0,
  \qquad
  a=1,\ldots,l.
  \label{eq:general_local_product_uniform_operator_assumption}
\end{equation}
Then
\begin{equation}
  B_l
  \le
  \sqrt{l_0}\,\Gamma_0^{w_0},
  \label{eq:general_local_product_uniform_Bl_bound}
\end{equation}
and
\begin{equation}
\widetilde A_l(\rho)
  \le
  \sqrt{l_0}\,
  \bigl(
    \Gamma_0^{w_0}+M_0
  \bigr).
  \label{eq:general_local_product_uniform_Al_bound}
\end{equation}
In particular, \(B_l\) and \(\widetilde A_l(\rho)\) are bounded independently of \(k\).
\end{corollary}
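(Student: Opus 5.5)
The plan is to specialize Proposition~\ref{prop:general_local_product_selected_radius_bound} term by term. For each selected observable \(O_a\), the product \(\prod_{j\in\supp(O_a)}\Gamma_j(O_{a,j})^2\) has \(|\supp(O_a)|=\wt(O_a)\le w_0\) factors, and each factor is at most \(\Gamma_0^2\) by \eqref{eq:general_local_product_uniform_gamma_assumption}. This gives the bound \(\Gamma_0^{2\wt(O_a)}\).

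To replace the exponent \(\wt(O_a)\) by \(w_0\) I need \(\Gamma_0\ge1\), and this is the one point requiring care. Normalization \eqref{eq:general_local_product_trace_one_snapshot} does not force \(\Gamma_j(A)\ge 1\) for a non-identity \(A\). My first move is to observe that, without loss of generality, \(\Gamma_0\) may be enlarged to \(\max\{\Gamma_0,1\}\): the hypothesis only asks for an upper bound. Under that convention, \(\Gamma_0^{2\wt(O_a)}\le\Gamma_0^{2w_0}\). Alternatively, if \(\Gamma_0<1\), then \(\Gamma_0^{2\wt(O_a)}\le 1\), and one states the bound with \(\max\{\Gamma_0,1\}^{w_0}\).

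Summing over \(a=1,\dots,l\) with \(l\le l_0\) gives \(\sum_a\prod_j\Gamma_j(O_{a,j})^2\le l_0\Gamma_0^{2w_0}\). Taking square roots in \eqref{eq:general_local_product_Bl_bound} yields \eqref{eq:general_local_product_uniform_Bl_bound}. For the second term of \eqref{eq:general_local_product_Al_bound}, \eqref{eq:general_local_product_uniform_operator_assumption} gives \(\bigl(\sum_a\|O_a\|_{\mathrm{op}}^2\bigr)^{1/2}\le\sqrt{l_0}\,M_0\). Adding the two estimates gives \eqref{eq:general_local_product_uniform_Al_bound}.

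Since \(l_0,w_0,\Gamma_0,M_0\) do not depend on \(k\), neither do these bounds. This proves the final sentence of the statement.
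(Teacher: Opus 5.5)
Your argument is the paper's proof. You insert the uniform bounds into Proposition~\ref{prop:general_local_product_selected_radius_bound}, bound each product $\prod_{j\in\supp(O_a)}\Gamma_j(O_{a,j})^2$ by $\Gamma_0^{2w_0}$, sum over at most $l_0$ indices, and bound the mean term by $\sqrt{l_0}\,M_0$.

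The caveat you raise is genuine, and the paper's proof passes over it silently. The step $\Gamma_0^{2\wt(O_a)}\le\Gamma_0^{2w_0}$ needs $\Gamma_0\ge 1$, and without it the displayed bounds can fail. For example, take uniform local Pauli shadows with $l=1$, $O_1=\tfrac1{10}Z$ on one qubit, $w_0=2$, $\Gamma_0=\tfrac3{10}$ and $M_0=\tfrac1{10}$. All hypotheses hold, yet $B_l=\tfrac3{10}>\Gamma_0^{2}$.

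Note that replacing $\Gamma_0$ by $\max\{\Gamma_0,1\}$ is not literally without loss of generality, since it weakens the stated conclusion. What you actually prove is the corrected bound with $\max\{\Gamma_0,1\}^{w_0}$ in place of $\Gamma_0^{w_0}$, for both $B_l$ and $\widetilde A_l(\rho)$. Equivalently, you prove the statement under the extra hypothesis $\Gamma_0\ge1$. This is the right repair, and it still gives the $k$-independence claimed in the last sentence.
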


\begin{proof}
For each \(a\), the assumptions
\(\wt(O_a)\le w_0\) and
\(\Gamma_j(O_{a,j})\le\Gamma_0\) imply
\(
  \prod_{j\in\supp(O_a)}
  \Gamma_j(O_{a,j})^2
  \le
  \Gamma_0^{2w_0},
\)
which yields
\[
  \sum_{a=1}^l
  \prod_{j\in\supp(O_a)}
  \Gamma_j(O_{a,j})^2
  \le
  l_0\Gamma_0^{2w_0}.
\]
Substituting this into
\eqref{eq:general_local_product_Bl_bound} proves
\eqref{eq:general_local_product_uniform_Bl_bound}.

Similarly, we have
\[
  \left(
    \sum_{a=1}^l
    \|O_a\|_{\mathrm{op}}^2
  \right)^{1/2}
  \le
  \sqrt{l_0}\,M_0.
\]
Combining this with
\eqref{eq:general_local_product_uniform_Bl_bound} and
\eqref{eq:general_local_product_Al_bound} proves
\eqref{eq:general_local_product_uniform_Al_bound}.
\end{proof}

A more concrete sufficient condition can be stated in terms of the
operator-norm size of the local reconstructed snapshots.  Suppose that
\begin{equation}
  \|\hat\rho_j\|_{\mathrm{op}}\le G_0
  \qquad
  \text{almost surely for all }j,
  \label{eq:general_local_product_local_snapshot_op_bound}
\end{equation}
and that the non-identity local factors of the selected observables satisfy
\begin{equation}
  \|O_{a,j}\|_{\mathrm{op}}\le 1
  \qquad
  \text{for all }a,\ j\in\supp(O_a).
  \label{eq:general_local_product_local_observable_op_bound}
\end{equation}
Since \(\dim\mathcal H_j=t\), we have
\[
  \|O_{a,j}\|_1\le t\|O_{a,j}\|_{\mathrm{op}}\le t.
\]
Therefore,
\begin{align}
  \Gamma_j(O_{a,j})
  &=
  \operatorname*{ess\,sup}
  |\operatorname{Tr}(\hat\rho_jO_{a,j})|
  \notag\\
  &\le
  \operatorname*{ess\,sup}
  \|\hat\rho_j\|_{\mathrm{op}}\,
  \|O_{a,j}\|_1
 \le
  G_0t .
  \label{eq:general_local_product_G0t_gamma_bound}
\end{align}
Thus, under the finite-weight and bounded-size assumptions of
Corollary~\ref{cor:general_local_product_dimension_independent_radius}, 
the choice 
\(
  \Gamma_0=G_0t
\)
implies 
\begin{align}
  B_l
&  \le
  \sqrt{l_0}\,(G_0t)^{w_0},
  \label{eq:general_local_product_G0t_Bl_bound} \\
\widetilde A_l(\rho)
&  \le
  \sqrt{l_0}\,
  \bigl(
    (G_0t)^{w_0}+M_0
  \bigr).
  \label{eq:general_local_product_G0t_Al_bound}
\end{align}

\begin{remark}[General local observables on a finite support]
\label{rem:general_local_product_nonproduct_observables}
The product-observable assumption is convenient because it gives the
coefficient factorization
\eqref{eq:general_local_product_shadow_output_rule}.  The same bounded-radius
principle also applies to a general observable supported on a finite set
\(S\subset\{1,\ldots,k\}\).  If
\(
  O=O_S\otimes I_{S^c},
\)
then the relation
\(
  \hat x_O
  =
  \operatorname{Tr}
  \left[
    \left(
      \bigotimes_{j\in S}\hat\rho_j
    \right)
    O_S
  \right]
\) holds.
If \(\|\hat\rho_j\|_{\mathrm{op}}\le G_0\) for \(j\in S\), then
the relation $
  |\hat x_O|
  \le
  G_0^{|S|}\|O_S\|_1$ holds.
Thus the same conclusion holds: the radius depends on the size of the
support and on local reconstruction bounds, but not on the total number
\(k\) of tensor factors.
\end{remark}

Consequently, for finite-weight selected observables under a local product
shadow protocol with uniformly bounded local reconstruction coefficients,
the bounded-radius hypothesis in
Theorem~\ref{thm:constant_selected_error_bounded_radius} holds with constants
independent of \(k\).  The selected sample-centered covariance can therefore
be estimated to constant operator-norm accuracy with a sample size independent
of the total number of tensor factors.

\section{Discussion}
\label{sec:discussion}
This paper has developed a covariance-matrix viewpoint for classical-shadow outputs.  Instead of organizing the analysis only around individual observables, marginal variances, or many-observable prediction bounds, we studied selected covariance matrices of reconstructed shadow-output vectors.  The main finite-sample theorem applies to fixed selected coordinates for arbitrary shadow protocols and gives an operator-norm error bound for the selected sample-centered empirical covariance.  The bound contains protocol-dependent constants that measure how large the selected reconstructed vector can be in a single measurement round.  When these constants and the number of selected coordinates are independent of the ambient system size, the required sample size is also independent of the ambient dimension.

The main structural message is that such protocol-dependent constants can be controlled in local measurement settings.  For general local product shadows, finite-weight product observables are controlled by their support sizes and local reconstruction coefficients, rather than by the total number of tensor factors.  Biased local Pauli shadows provide a fully explicit instance: the relevant constants are written directly in terms of selected Pauli supports and local basis-selection probabilities.  A comparison with global Clifford shadows shows that the same local behavior should not be expected without additional protocol structure.  The exact covariance formula for biased local Pauli shadows gives further structural information, but it is separate from the finite-sample estimation argument.

\subsection{What the covariance viewpoint adds}
\label{subsec:discussion_covariance_viewpoint}

The main conceptual point is that the covariance matrix
\(\Sigma(\rho)=\operatorname{Cov}_\rho(\hat x)\) contains information that is
not visible from marginal coordinate variances alone.  Its diagonal entries
describe the variances of individual reconstructed output coordinates, while
its off-diagonal entries describe statistical couplings between different
coordinates produced by the same shadow protocol.  These off-diagonal
entries are covariances of random post-processed shadow outputs, and
therefore describe the joint fluctuation structure of the
measurement-and-reconstruction procedure.  In the Pauli specialization, this
statistical covariance should not be confused with physical Pauli product
correlations.

This distinction matters whenever one studies more than one shadow-output
coordinate at a time.  A multi-term observable, or a finite family of
observables, does not depend only on the list of marginal variances.  Its
estimation variance also depends on how the selected output coordinates
fluctuate together.  The covariance matrix is the object that records this
collective statistical structure.

The selected covariance matrix
\(\Sigma_l(\rho)=R_l\Sigma(\rho)R_l^\top\) gives a natural finite-dimensional
object associated with a fixed selected coordinate set.  Its eigenvalues
describe the largest and smallest fluctuation scales among normalized linear
combinations of the selected coordinates.  Its eigenvectors identify the
corresponding principal fluctuation directions inside the selected coordinate
set.  Thus selected covariance spectra provide a diagnostic that is
different from, but complementary to, observable-wise shadow-norm guarantees.

The covariance viewpoint also connects with reconstruction error whenever the chosen output coordinates form a coordinate representation of the reconstructed object. In such settings, traces or selected traces of \(\Sigma(\rho)\) quantify total fluctuation within the chosen coordinate system. More generally, the same covariance matrix organizes single-coordinate variances, off-diagonal statistical couplings, selected spectra, and fluctuations of selected reconstructed coordinates.

\subsection{Why the finite-sample theory is selected}
\label{subsec:discussion_selected_finite_sample}

The finite-sample theory in this paper is deliberately formulated for fixed
selected coordinate sets.  This is not only a technical convenience, but also
the natural scale at which one can expect useful operator-norm guarantees.
The full shadow-output covariance matrix may have a very large ambient
dimension \(p\), and estimating the entire matrix in operator norm would be a
much stronger and substantially different task.  Instead, we fix a selected
coordinate set \(J_l=\{\alpha_1,\ldots,\alpha_l\}\) and study the compressed
covariance \(\Sigma_l(\rho)=R_l\Sigma(\rho)R_l^\top\).

This selected formulation has two advantages.  First, it matches the
statistical question of interest when only a prescribed family of output
coordinates is relevant.  In that case, the eigenvalues of
\(\Sigma_l(\rho)\) describe the fluctuation scales of linear combinations
inside that selected family, and the corresponding spectral projectors
identify stable or unstable selected directions.  Second, the finite-sample
concentration is performed after the compression by \(R_l\).  Hence the
dimension entering the probability bound is the selected dimension \(l\),
rather than the ambient output dimension \(p\).

This is why the fixed-selection assumption is important.  The selection
matrix \(R_l\) is chosen independently of the data.  Under this assumption,
one may treat the selected covariance as an ordinary \(l\times l\) covariance
matrix and apply matrix concentration directly in the selected space.  If the
coordinate set were chosen after looking at the data, additional arguments
would be needed, such as uniform concentration over a larger class of
possible selections, sample splitting, or other post-selection controls.
Those questions are outside the scope of the present paper.

The selected viewpoint should therefore be read as a finite-sample localization of covariance estimation. It does not claim to reconstruct the entire covariance structure of the full shadow-output space. Rather, it shows that once a coordinate family has been fixed, the covariance spectrum within that family can be estimated with explicit nonasymptotic guarantees. This is the covariance analogue of focusing on a fixed family of observables in ordinary classical-shadow prediction, but with the empirical target changed from expectation values to the selected covariance matrix itself.

\subsection{Interpretation of the constant selected-error regime}
\label{subsec:discussion_constant_selected_error}

The dimension-independent selected regime is the main finite-sample consequence of the selected formulation. The general theorem says that, for any chosen shadow protocol, selected covariance estimation has a sample complexity controlled by the selected dimension and by protocol-dependent constants in the error bound. If these constants remain bounded independently of the ambient system size, then constant operator-norm accuracy for the selected covariance requires a number of samples independent of the ambient dimension.

General local product shadows provide one mechanism for this behavior. When the selected observables have uniformly bounded weight and the relevant local reconstruction coefficients are uniformly bounded, the constants in the finite-sample bound are controlled independently of the number of tensor factors. This explains why dimension-independent selected covariance estimation is not tied to Pauli measurements: it follows from local product structure, finite support, and bounded local reconstruction.

The biased local Pauli protocol gives a fully explicit instance of this mechanism. If the selected set size is bounded by \(l_0\), the selected Pauli weights are bounded by \(w_0\), and the local basis probabilities satisfy \(p_{j,r}\ge p_0>0\), then the finite-sample bound depends only on \(l_0\), \(w_0\), and \(p_0\), not on the total number of qubits. Consequently, the sample size needed to estimate the selected sample-centered covariance to constant operator-norm accuracy can be chosen independently of the total number of qubits.

The comparison with global Clifford shadows separates two issues. The finite-sample theorem is a general statement about selected covariance estimation, but whether its constants remain independent of system size is a protocol-specific question. Local product protocols, including biased local Pauli shadows, give positive examples under finite-support and bounded-coefficient assumptions. Global Clifford shadows show that such behavior should not be expected without additional local or protocol-specific structure.

\subsection{Local Pauli covariance formulas as structural information}
\label{subsec:discussion_local_pauli_covariance_formula}

The exact local Pauli covariance formula should be read as structural
information that complements the finite-sample results.  The finite-sample
selected covariance theorem does not require an entrywise formula for
\(\Sigma(\rho)\).  It only requires the protocol-dependent constants in the
concentration bound to be controlled.  
Nevertheless, in the biased local Pauli protocol, the covariance entries can be computed explicitly in
terms of Pauli expectations of the state and the local basis-selection
probabilities.

The compatibility relation \(P\sim Q\), the cancellation string
\(P\ominus Q\), and the overlap factor
\(\beta(P,Q)=\prod_{j\in\supp(P)\cap\supp(Q)}p_{j,P_j}^{-1}\) describe how
the local Pauli measurement design determines the raw second moments of the
reconstructed Pauli coefficients.  In particular, Pauli compatibility
determines which off-diagonal raw second moments can be nonzero, while
inverse-probability overlap factors determine the size of the corresponding
raw second-moment contributions.  The covariance entries are obtained from
these raw second moments by subtracting the products of the corresponding
means.

This formula is useful because it shows how local basis bias changes the
covariance matrix of the reconstructed Pauli coefficient vector.  The bias
does not merely rescale marginal variances.  It also changes off-diagonal
statistical couplings through the same inverse-probability mechanism.  Thus
the biased local Pauli protocol provides an explicit model in which
the covariance matrix can be analyzed entry by entry, while the finite-sample
theory explains how selected parts of that covariance matrix can be estimated
from data.

This analytic formula is also a reminder that selected covariance estimation
and closed-form covariance calculation are different tasks.  The selected
finite-sample theorem applies even when no explicit formula for
\(\Sigma(\rho)\) is available, provided the protocol-dependent constants in
the error bound can be controlled.  Conversely, an explicit covariance
formula can reveal detailed structure, but it does not by itself give
finite-sample operator-norm recovery.  In the local Pauli setting both
ingredients are available: the finite-sample bound gives the estimation
guarantee, and the exact covariance formula explains the covariance structure
being estimated.

\subsection{Scope and outlook}
\label{subsec:discussion_scope_outlook}

The results of this paper should be read as selected covariance results.
They do not assert operator-norm recovery of the full ambient covariance
matrix \(\Sigma(\rho)\), and the selected coordinate set is assumed to be
fixed independently of the data.  Adaptive or post-selected coordinate
choices would require additional tools, such as sample splitting or uniform
concentration over candidate selections.

The dimension-independent regime also depends on protocol-specific constants.
The finite-sample theorem applies to arbitrary shadow protocols, but it does
not guarantee that these constants are small.  General local product shadows
and biased local Pauli shadows provide positive examples under finite-support
and bounded-coefficient assumptions, whereas the global Clifford comparison
shows that the corresponding constants can grow with the Hilbert-space
dimension.

Finally, the exact covariance formula derived for biased local Pauli shadows
is a protocol-specific structural result.  It is separate from the
finite-sample theorem, which only requires the relevant finite-sample
constants to be controlled.  Natural future directions include sharper bounds
for structured selected sets, adaptive selected covariance estimation,
extensions to other shadow protocols, and robust covariance estimation under
noise, drift, or corrupted samples.

\appendix

\section{Proof of \eqref{eq:exact_rank_one_relation_true_sample_centered_general}}
\label{AA1}
Indeed, using
\(
  \hat x_i-\bar{\hat x}_N
  =
  (\hat x_i-m)-(\bar{\hat x}_N-m),
\)
we expand the sample-centered covariance as
\begin{align}
&  \widehat\Sigma_N^{\mathrm{sc}}
  \notag\\
  =&
  \frac1N
  \sum_{i=1}^N
  (\hat x_i-\bar{\hat x}_N)(\hat x_i-\bar{\hat x}_N)^\top
  \notag\\
  =&
  \frac1N
  \sum_{i=1}^N
  \bigl((\hat x_i-m)-(\bar{\hat x}_N-m)\bigr)
  \bigl((\hat x_i-m)-(\bar{\hat x}_N-m)\bigr)^\top .
\end{align}
Expanding the product gives
\begin{align}
  \widehat\Sigma_N^{\mathrm{sc}}
  &=
  \frac1N
  \sum_{i=1}^N
  (\hat x_i-m)(\hat x_i-m)^\top
  \notag\\
  &\quad
  -
  \frac1N
  \sum_{i=1}^N
  (\hat x_i-m)(\bar{\hat x}_N-m)^\top
  \notag\\
  &\quad
  -
  \frac1N
  \sum_{i=1}^N
  (\bar{\hat x}_N-m)(\hat x_i-m)^\top
  \notag\\
  &\quad
  +
  \frac1N
  \sum_{i=1}^N
  (\bar{\hat x}_N-m)(\bar{\hat x}_N-m)^\top .
\end{align}
Since
\(
  \frac1N\sum_{i=1}^N(\hat x_i-m)
  =
  \bar{\hat x}_N-m,
\)
the two cross terms become
\(
  -
  (\bar{\hat x}_N-m)(\bar{\hat x}_N-m)^\top.
\)
The last term is
\[
  \frac1N
  \sum_{i=1}^N
  (\bar{\hat x}_N-m)(\bar{\hat x}_N-m)^\top
  =
  (\bar{\hat x}_N-m)(\bar{\hat x}_N-m)^\top .
\]
Therefore, we obtain
\begin{align}
  \widehat\Sigma_N^{\mathrm{sc}}
  &=
  \frac1N
  \sum_{i=1}^N
  (\hat x_i-m)(\hat x_i-m)^\top
  -
  (\bar{\hat x}_N-m)(\bar{\hat x}_N-m)^\top
  \notag\\
  &=
  \widehat\Sigma_N^{\mathrm{tc}}
  -
  (\bar{\hat x}_N-m)(\bar{\hat x}_N-m)^\top ,
\end{align}
which implies
\(
  \widehat\Sigma_N^{\mathrm{sc}}
  -
  \widehat\Sigma_N^{\mathrm{tc}}
  =
  -
  (\bar{\hat x}_N-m)(\bar{\hat x}_N-m)^\top .
\)

\section{Proof of \eqref{KJ13}}\label{KJ13P}
We briefly derive the reconstruction formula.  For a single qubit at site
\(j\), the biased Pauli measurement channel acts on an operator \(A\)
as
\begin{align*}
  \mathcal M_j[A]
&  =
  \sum_{r\in\{X,Y,Z\}} p_{j,r}
  \sum_{s=\pm1}
  M^{(j)}_{s|r}\,
  \operatorname{Tr}\!\left(A M^{(j)}_{s|r}\right),\\
  M^{(j)}_{s|r}
  &=
  \frac12(I+s r).
\end{align*}
Since
$\operatorname{Tr}(I M^{(j)}_{s|r})=1$, and 
$\operatorname{Tr}(t M^{(j)}_{s|r})=s\,\mathbf 1_{\{t=r\}}$
for $t\in\{X,Y,Z\}$,
we have
\begin{equation}
  \mathcal M_j[I]=I,
  \qquad
  \mathcal M_j[t]=p_{j,t}\,t,
  \quad t\in\{X,Y,Z\}.
  \label{eq:single_qubit_biased_pauli_channel_diagonal_action}
\end{equation}
Thus \(\mathcal M_j\) is diagonal in the single-qubit Pauli basis, and its
inverse is
\begin{equation}
  \mathcal M_j^{-1}[I]=I,
  \qquad
  \mathcal M_j^{-1}[t]=p_{j,t}^{-1}t,
  \quad t\in\{X,Y,Z\}.
  \label{eq:single_qubit_biased_pauli_inverse_action}
\end{equation}
Applying this inverse map to the actually observed rank-one projector
\(M^{(j)}_{\hat s_j|\hat r_j}=\frac12(I+\hat s_j\hat r_j)\) gives the
single-qubit reconstructed snapshot
\begin{equation}
  \hat\rho_j^{(1)}
  =
  \mathcal M_j^{-1}\!\left[
    M^{(j)}_{\hat s_j|\hat r_j}
  \right]
  =
  \frac12
  \left(
    I+\frac{\hat s_j}{p_{j,\hat r_j}}\hat r_j
  \right).
  \label{eq:single_qubit_biased_pauli_reconstructed_snapshot}
\end{equation}
Because both the basis choice and the POVM are product over sites, the
\(k\)-qubit shadow channel is the tensor product
\(\mathcal M=\mathcal M_1\otimes\cdots\otimes\mathcal M_k\), and hence its
inverse is
\(\mathcal M^{-1}=\mathcal M_1^{-1}\otimes\cdots\otimes\mathcal M_k^{-1}\).
Therefore the full reconstructed shadow snapshot is
\begin{equation}
  \hat\rho
  =
  \bigotimes_{j=1}^k
  \hat\rho_j^{(1)}
  =
  \bigotimes_{j=1}^k
  \frac12
  \left(
    I+\frac{\hat s_j}{p_{j,\hat r_j}}\hat r_j
  \right),
  \label{eq:biased_local_pauli_reconstructed_snapshot}
\end{equation}
which yields \eqref{KJ13}.

\section{Proof of Theorem~\ref{thm:biased_local_pauli_exact_covariance}}
\label{sec:proof_of_theorem_ref_thm_biased_local_pauli_exact_covariance}

This appendix extends the exact local-Pauli covariance calculation 
(Proposition \ref{prop:local_pauli_exact_covariance}) by Huang--Kueng--Preskill~\cite[Lemma~4]{HuangKuengPreskill2020}
to the biased measurement design of Section~\ref{sec:local_pauli_exact_covariance}. The proof follows the same overall logic as in the uniform case, but with the deterministic coefficient rule now carrying explicit inverse-probability weights. We first derive the weighted single-shot coefficient formula, then compute the corresponding raw second moments for compatible and incompatible pairs, and finally obtain the exact covariance expression in terms of the overlap bias factor. The result is the closed-form covariance formula for the biased local design stated in Theorem~\ref{thm:biased_local_pauli_exact_covariance}.

\label{app:proof_biased_local_pauli_exact_covariance}
In this appendix we prove Theorem~\ref{thm:biased_local_pauli_exact_covariance}.
Throughout, we work with the $k$-qubit biased local Pauli classical-shadow protocol described in Section~\ref{sec:local_pauli_exact_covariance}.

\begin{lemma}[Incompatible strings have zero raw second moment]
\label{lem:biased_local_pauli_incompatible_zero_second_moment}
Let $P,Q\in\mathcal P_k^\times$. If $P
\not\sim Q$, then $\mathbb E[\hat x_P\hat x_Q]=0$.
\end{lemma}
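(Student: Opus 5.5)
The plan is to show that $\hat x_P\hat x_Q$ vanishes identically, for every realization of the one-shot data $(\hat r,\hat s)$, when $P\not\sim Q$. The expectation then vanishes trivially. No property of the state $\rho$ and no distributional computation is needed; the argument is purely deterministic.

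\textbf{Step 1: write the product via the coefficient rule.} Apply Lemma~\ref{lem:biased_local_pauli_coefficient_rule} to both $P$ and $Q$ and multiply. This gives
\[
\hat x_P\hat x_Q
=\beta_P\beta_Q\,\mathbf 1_{\{\hat r\succeq P\}}\mathbf 1_{\{\hat r\succeq Q\}}
\prod_{j\in\supp(P)}\hat s_j\prod_{j\in\supp(Q)}\hat s_j,
\]
where $\beta_P=\prod_{j\in\supp(P)}p_{j,P_j}^{-1}$ and $\beta_Q$ is defined analogously. Both are finite because all $p_{j,r}>0$.

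\textbf{Step 2: show the indicators are never simultaneously one.} By definition, $P\not\sim Q$ means there is a site $j_0\in\supp(P)\cap\supp(Q)$ with $P_{j_0}\neq Q_{j_0}$. The event $\{\hat r\succeq P\}$ forces $\hat r_{j_0}=P_{j_0}$. The event $\{\hat r\succeq Q\}$ forces $\hat r_{j_0}=Q_{j_0}$. These requirements contradict each other, so $\mathbf 1_{\{\hat r\succeq P\}}\mathbf 1_{\{\hat r\succeq Q\}}=0$ for every basis pattern $\hat r\in\{X,Y,Z\}^k$.

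\textbf{Step 3: conclude.} The remaining factors (the $\beta$'s and the sign products) are bounded, so the product $\hat x_P\hat x_Q$ is identically zero. Taking expectations gives $\mathbb E[\hat x_P\hat x_Q]=0$.

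There is essentially no obstacle here. The only point needing care is that the contradiction in Step 2 holds pointwise, for every pattern, rather than merely almost surely. This is what lets the argument bypass any computation of the outcome distribution $\hat s$ given $\hat r$.
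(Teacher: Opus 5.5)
Your proposal is correct and follows essentially the same route as the paper. Both arguments apply the coefficient rule of Lemma~\ref{lem:biased_local_pauli_coefficient_rule} at a site $j_0\in\supp(P)\cap\supp(Q)$ with $P_{j_0}\neq Q_{j_0}$, where the two compatibility indicators cannot both equal one; hence $\hat x_P\hat x_Q=0$ for every realization, so the expectation vanishes, and your observation that this holds pointwise rather than merely almost surely only sharpens the paper's ``with probability one'' phrasing.
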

\begin{proof}
If $P
\not\sim Q$, then there exists a site $j$ such that $P_j
\neq I$, $Q_j
\neq I$, and $P_j
\neq Q_j$. By Lemma~\ref{lem:biased_local_pauli_coefficient_rule}, the event $\hat x_P
\neq 0$ requires 
$\mathbf 1_{\{\hat r_{j'}=P_{j'}\ \forall j'\in\supp(P)\}}
\neq 0$, i.e., 
$\hat r_j=P_j$ 
for any $j \in \supp(P)$.
Also, the event $\hat x_Q
\neq 0$ requires $\hat r_{j'}=Q_{j'}$
for any $j' \in \supp(Q)$.
These conditions are incompatible. Hence,
we have $\hat x_P=0$ or $\hat x_Q=0$, i.e.,
$\hat x_P\hat x_Q=0$ with probability one.
\end{proof}

\begin{lemma}[Unique compatible basis assignment]
\label{lem:biased_local_pauli_unique_compatible_basis_assignment}
Let $P,Q\in\mathcal P_k^\times$ satisfy $P\sim Q$. 
When both $\hat x_P$ and $\hat x_Q$ are nonzero,
there is a unique local measurement basis 
$\hat r_j$ for any $j \in \supp(P)\cup\supp(Q)$, 
namely
\begin{align}
\hat r_j=\begin{cases}
P_j & \hbox{when } j\in\supp(P)\setminus\supp(Q),\\
Q_j &\hbox{when } j\in\supp(Q)\setminus\supp(P),\\
P_j(=Q_j) &\hbox{when } j\in\supp(P)\cap\supp(Q).
\end{cases}
\label{ANM}
\end{align}
\end{lemma}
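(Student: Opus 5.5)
The plan is to read off the constraint on \(\hat r\) directly from the coefficient rule in Lemma~\ref{lem:biased_local_pauli_coefficient_rule}, treating \(\hat x_P\) and \(\hat x_Q\) separately and then intersecting the two conditions. By that lemma,
\[
  \hat x_P
  =
  \Bigl(\prod_{j\in\supp(P)}p_{j,P_j}^{-1}\Bigr)
  \mathbf 1_{\{\hat r\succeq P\}}
  \prod_{j\in\supp(P)}\hat s_j .
\]
The prefactor is strictly positive, since all \(p_{j,r}>0\) by \eqref{eq:local_pauli_basis_probabilities}, and the sign product lies in \(\{\pm1\}\). Hence \(\hat x_P\neq0\) holds if and only if \(\hat r\succeq P\), that is, \(\hat r_j=P_j\) for every \(j\in\supp(P)\). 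The same argument gives \(\hat x_Q\neq0\) if and only if \(\hat r_j=Q_j\) for every \(j\in\supp(Q)\).

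Next I split \(\supp(P)\cup\supp(Q)\) into the three disjoint pieces \(\supp(P)\setminus\supp(Q)\), \(\supp(Q)\setminus\supp(P)\), and \(\supp(P)\cap\supp(Q)\). On the first piece, only the \(P\)-condition applies, and it forces \(\hat r_j=P_j\). On the second piece, only the \(Q\)-condition applies, and it forces \(\hat r_j=Q_j\). On the overlap, both conditions apply, so we need \(\hat r_j=P_j\) and \(\hat r_j=Q_j\) simultaneously. Compatibility \(P\sim Q\) guarantees \(P_j=Q_j\) at every site where both are non-identity, so these two requirements coincide rather than conflict. In each case the value of \(\hat r_j\) is pinned to a single element of \(\{X,Y,Z\}\), which gives both existence and uniqueness of the assignment \eqref{ANM}.

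The only point needing care is the overlap case. There one must invoke \(P\sim Q\) explicitly to see that the two constraints are consistent; without it, the event would be empty, which is exactly the situation of Lemma~\ref{lem:biased_local_pauli_incompatible_zero_second_moment}. I would also remark that sites outside \(\supp(P)\cup\supp(Q)\) are unconstrained. This is why the statement only asserts uniqueness on the union of supports, and it is what later lets the non-overlap basis probabilities be absorbed when computing \(\mathbb E[\hat x_P\hat x_Q]\).
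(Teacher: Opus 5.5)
Your proposal is correct and follows the same route as the paper: you read off $\hat r_j=P_j$ on $\supp(P)$ and $\hat r_j=Q_j$ on $\supp(Q)$ from the coefficient rule, then use $P\sim Q$ to see that the two prescriptions agree on the overlap. Your extra observation that the prefactor is strictly positive, so that $\hat x_P\neq 0$ is in fact equivalent to $\hat r\succeq P$, sharpens the paper's ``requires'' direction, but the lemma only needs that direction.
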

\begin{proof}
By Lemma~\ref{lem:biased_local_pauli_coefficient_rule}, $\hat x_P
\neq 0$ requires $\hat r_j=P_j$ for every $j\in\supp(P)$, while $\hat x_Q
\neq 0$ requires $\hat r_j=Q_j$ for every $j\in\supp(Q)$. Compatibility guarantees that these prescriptions agree on the overlap and therefore determine a unique common basis pattern on $\supp(P)\cup\supp(Q)$.
\end{proof}

\begin{lemma}[Support and sitewise form of the cancellation string]
\label{lem:biased_local_pauli_cancellation_string_support}
Let $P,Q\in\mathcal P_k^\times$ satisfy $P\sim Q$, and define $R:=P\ominus Q$. Then we have the following sitewise relations
\[
R_j=\begin{cases}
I & \hbox{when }P_j=Q_j,\\
P_j & \hbox{when }Q_j=I,\\
Q_j & \hbox{when }P_j=I.
\end{cases}
\]
\end{lemma}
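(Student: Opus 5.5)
The claim is a sitewise statement about the cancellation string $R=P\ominus Q$, so I would prove it directly from the definition of $\ominus$, working one site at a time. Recall that $P\ominus Q$ is obtained by cancelling the common non-identity factors of $P$ and $Q$ and keeping the remaining ones. Since compatibility $P\sim Q$ forbids any site where $P_j,Q_j\neq I$ and $P_j\neq Q_j$, each site $j$ falls into exactly one of four cases:
\begin{itemize}
\item[(a)] $P_j=Q_j=I$;
\item[(b)] $P_j=Q_j\neq I$, that is, $j\in\supp(P)\cap\supp(Q)$;
\item[(c)] $P_j\neq I$ and $Q_j=I$, that is, $j\in\supp(P)\setminus\supp(Q)$;
\item[(d)] $P_j=I$ and $Q_j\neq I$, that is, $j\in\supp(Q)\setminus\supp(P)$.
\end{itemize}

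I would then read off $R_j$ in each case.
\begin{itemize}
\item[(b)] The common non-identity factor is cancelled, so $R_j=I$.
\item[(c)] Only $P$ carries a factor, and it is retained, so $R_j=P_j$.
\item[(d)] Symmetrically, $R_j=Q_j$.
\item[(a)] Neither string carries a factor, so $R_j=I$.
\end{itemize}

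Next I would match these four cases to the three clauses of the statement. Cases (a) and (b) together are exactly the condition $P_j=Q_j$, and both give $R_j=I$. Case (c) is the condition $Q_j=I$ with $P_j\neq I$. Case (d) is the condition $P_j=I$ with $Q_j\neq I$.

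The only point needing care is that the clauses overlap at sites with $P_j=Q_j=I$, which satisfy all three conditions. There every clause assigns $R_j=I$, because $P_j=Q_j=I$, so the piecewise description is consistent. I would also note in passing that the construction yields $R_j=P_jQ_j$ sitewise, using $P_j^2=I$, with no phase. It follows that $\supp(R)$ is the symmetric difference of $\supp(P)$ and $\supp(Q)$, which is the form needed later.

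I expect no real obstacle. The main thing is to be explicit that the proof uses compatibility to rule out the fifth case $P_j\neq Q_j$ with both non-identity, since $\ominus$ is undefined there.
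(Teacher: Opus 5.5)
Your proof is correct and follows the same route as the paper, whose one-line argument reads off $R_j$ directly from the definition of $\ominus$ (cancel the common non-identity factor, retain the lone one); your four-case split, your use of $P\sim Q$ to exclude the fifth case, and your consistency check at sites with $P_j=Q_j=I$ simply make that argument explicit. Your side remark that $R_j=P_jQ_j$ with no phase, so that $\supp(R)$ is the symmetric difference of $\supp(P)$ and $\supp(Q)$, is also correct, and it is exactly what the subsequent sign-product identity relies on.
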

\begin{proof}
The definition of $P\ominus Q$ cancels the common non-identity factors and retains the unique non-identity factor on sites where only one of $P$ or $Q$ is $I$.
\end{proof}

\begin{lemma}[Conditional sign-product identity]
\label{lem:biased_local_pauli_conditional_sign_product_identity}
Let $P,Q\in\mathcal P_k^\times$ satisfy $P\sim Q$, and let $R:=P\ominus Q$. 
We assume that $\hat r_j$ is the unique compatible basis assignment of Lemma~\ref{lem:biased_local_pauli_unique_compatible_basis_assignment}
for any $j \in \supp(P)\cup\supp(Q)$. 
Then, we have
\begin{align}
\mathbb E\!\left[\prod_{j\in\supp(P\ominus Q)} \hat s_j\ \middle|\ \hat x_P\hat x_Q
\neq 0\right]=\Tr(\rho R)=m_R.
\label{BNE}
\end{align}
\end{lemma}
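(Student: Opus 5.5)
We prove \eqref{BNE} by conditioning on the basis pattern and then using the Born rule for product Pauli-basis measurements. By Lemma~\ref{lem:biased_local_pauli_coefficient_rule}, the inverse-probability prefactors are deterministic and the signs $\hat s_j$ take values in $\{\pm1\}$. Hence the event $\{\hat x_P\hat x_Q\neq 0\}$ coincides with the basis event $E:=\{\hat r\succeq P\}\cap\{\hat r\succeq Q\}$. By Lemma~\ref{lem:biased_local_pauli_unique_compatible_basis_assignment}, $E$ is exactly the event that $\hat r_j$ equals the assignment \eqref{ANM} for every $j\in\supp(P)\cup\supp(Q)$. This event has positive probability because all $p_{j,r}>0$, so the conditional expectation is well defined. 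The event depends only on $\hat r$, not on $\hat s$. It therefore suffices to show that for every full pattern $r\in\{X,Y,Z\}^k$ agreeing with \eqref{ANM} on $\supp(P)\cup\supp(Q)$,
\[
\mathbb E\Bigl[\prod_{j\in\supp(R)}\hat s_j\,\Big|\,\hat r=r\Bigr]=\Tr(\rho R).
\]
Averaging over the remaining free sites with the conditional law of $\hat r$ given $E$ then gives \eqref{BNE}.

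\textbf{Step 1: locate $\supp(R)$.} By Lemma~\ref{lem:biased_local_pauli_cancellation_string_support}, $\supp(R)\subseteq\supp(P)\cup\supp(Q)$. On each $j\in\supp(R)$, $R_j$ is the unique non-identity factor among $P_j,Q_j$, so \eqref{ANM} gives $r_j=R_j$. In other words, $r\succeq R$.

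\textbf{Step 2: Born rule.} Conditional on $\hat r=r$, the outcome $\hat s$ has law $\Tr\bigl(\rho\bigotimes_j\tfrac12(I+s_jr_j)\bigr)$. For any $S$, summing $\prod_{j\in S}s_j$ against this law gives
\[
\sum_s\prod_{j\in S}s_j\,\Tr\Bigl(\rho\bigotimes_j\tfrac12(I+s_jr_j)\Bigr)=\Tr\Bigl(\rho\bigotimes_{j\in S}r_j\otimes I_{S^c}\Bigr),
\]
because $\sum_{s=\pm1}\tfrac12(I+sr)=I$ and $\sum_{s=\pm1}s\,\tfrac12(I+sr)=r$ sitewise. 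Taking $S=\supp(R)$ and using Step~1, the right-hand side is $\Tr(\rho R)=m_R$. If $R=I^{\otimes k}$, i.e.\ $P=Q$, the product is empty, both sides equal $1=\Tr\rho$, and the claim holds trivially.

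\textbf{Main obstacle.} The one delicate point is justifying that conditioning on $\{\hat x_P\hat x_Q\neq 0\}$ is the same as conditioning on the basis event $E$. This rests on the nonvanishing prefactors and the $\pm1$ signs from Lemma~\ref{lem:biased_local_pauli_coefficient_rule}. It also requires that the free sites outside $\supp(P)\cup\supp(Q)$ do not affect the result. They do not, because the Step~2 identity holds for every completion $r$ of \eqref{ANM}. Everything else is a sitewise computation.
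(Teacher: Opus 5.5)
Your proof is correct and follows the same route as the paper. You condition on the compatible basis event of Lemma~\ref{lem:biased_local_pauli_unique_compatible_basis_assignment}, observe that the measured bases on $\supp(R)$ coincide with the factors of $R$, and identify the sign product's expectation as $\Tr(\rho R)$; your explicit Born-rule sum and the averaging over free sites outside $\supp(P)\cup\supp(Q)$ spell out what the paper's one-line argument leaves implicit.
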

\begin{proof}
Due to Lemma~\ref{lem:biased_local_pauli_unique_compatible_basis_assignment},
the condition $\hat x_P\hat x_Q\neq 0$ implies 
that the measurement basis $\hat r_j$
for $j \in \supp(P)\cup\supp(Q)$ is given as \eqref{ANM}.
By Lemma~\ref{lem:biased_local_pauli_cancellation_string_support}, the operator $R=P\ominus Q$ is supported on $\supp(P\ominus Q)$.
Hence the product of outcomes on this support is exactly the same as the commuting Pauli observable $R$, which implies 
\eqref{BNE}.
\end{proof}

\begin{lemma}[Probability of the compatible basis event]
\label{lem:biased_local_pauli_compatible_basis_event_probability}
Let $P,Q\in\mathcal P_k^\times$ satisfy $P\sim Q$. 
Then the probability of the event of 
$\hat x_P\hat x_Q
\neq 0$ is
\begin{align*}
&\Big(\prod_{j\in\supp(P)\setminus\supp(Q)} p_{j,P_j}\Big)\cdot\Big(\prod_{j\in\supp(Q)\setminus\supp(P)} p_{j,Q_j}\Big)
\notag\\
&\cdot\Big(\prod_{j\in\supp(P)\cap\supp(Q)} p_{j,P_j}
\Big).
\end{align*}
\end{lemma}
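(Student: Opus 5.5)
The plan is to identify the event $\{\hat x_P\hat x_Q\neq 0\}$ exactly with a basis-pattern event, and then use the product form of the basis-selection distribution. By Lemma~\ref{lem:biased_local_pauli_coefficient_rule}, $\hat x_P$ equals a strictly positive deterministic factor $\prod_{j\in\supp(P)}p_{j,P_j}^{-1}$ times $\mathbf 1_{\{\hat r\succeq P\}}$ times a sign $\prod_{j\in\supp(P)}\hat s_j\in\{\pm1\}$. The sign never vanishes, so
\[
\{\hat x_P\neq 0\}=\{\hat r\succeq P\},
\]
and likewise for $Q$. Hence $\{\hat x_P\hat x_Q\neq0\}=\{\hat r\succeq P\}\cap\{\hat r\succeq Q\}$. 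This reduction is exact, not merely an inclusion, and it is the point that needs stating carefully. Note that it depends only on $\hat r$, not on the outcomes $\hat s$.

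Next, Lemma~\ref{lem:biased_local_pauli_unique_compatible_basis_assignment} (together with compatibility $P\sim Q$) identifies this intersection with the event that $\hat r_j$ takes the value prescribed in \eqref{ANM} for every $j\in\supp(P)\cup\supp(Q)$. Conversely, any pattern satisfying \eqref{ANM} is compatible with both $P$ and $Q$, because on the overlap $P_j=Q_j$. The event therefore imposes one constraint per site in the union of supports and no constraint elsewhere. The sites outside the union are unconstrained and contribute total probability $1$.

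Finally, the local bases $\hat r_1,\ldots,\hat r_k$ are drawn independently with $\mathbb P(\hat r_j=r)=p_{j,r}$, by \eqref{eq:local_pauli_basis_probabilities}. So the probability of the intersection factorizes as $\prod_{j\in\supp(P)\cup\supp(Q)}p_{j,\hat r_j^{\ast}}$, where $\hat r_j^\ast$ denotes the prescribed value. Splitting the union into the disjoint pieces $\supp(P)\setminus\supp(Q)$, $\supp(Q)\setminus\supp(P)$ and $\supp(P)\cap\supp(Q)$, and inserting the values from \eqref{ANM}, gives exactly the three stated products.

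I expect no real obstacle. The only subtle step is the first one: justifying that nonvanishing of $\hat x_P\hat x_Q$ is equivalent to, not merely implied by, the basis-compatibility event. This uses that the outcome signs are $\pm1$ and that all $p_{j,r}>0$. I would also remark that the overlap product equals $\beta(P,Q)^{-1}$, since this is the form used afterwards to cancel one copy of the overlap weights in $\mathbb E[\hat x_P\hat x_Q]$.
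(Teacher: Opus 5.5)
Your proof is correct and follows the same route as the paper: the event depends only on $\hat r$ restricted to $\supp(P)\cup\supp(Q)$, where it fixes one basis value per site, and independence of the local basis choices across sites yields the product over the three disjoint pieces. Your explicit check that $\{\hat x_P\hat x_Q\neq 0\}=\{\hat r\succeq P\}\cap\{\hat r\succeq Q\}$ (positive prefactors, $\pm1$ signs) usefully spells out a step the paper leaves implicit. One small correction: cross-site independence is an implicit standing assumption of the protocol, not something stated in \eqref{eq:local_pauli_basis_probabilities}, which only gives the single-site marginals.
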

\begin{proof}
The event of 
$\hat x_P\hat x_Q
\neq 0$ does not depend on $r_j$ for $ j \in 
(\supp(P)\cup\supp(Q))^c$.
The set $\supp(P)\cup\supp(Q)$ is divided into 
$\supp(P)\setminus\supp(Q)$,
$\supp(Q)\setminus\supp(P)$,
and $\supp(P)\cap\supp(Q)$.
The local basis selections are independent across sites.
Hence, we obtain the desired relation.
\end{proof}

\begin{proof}[Proof of Theorem~\ref{thm:biased_local_pauli_exact_covariance}]
If $P
\not\sim Q$, Lemma~\ref{lem:biased_local_pauli_incompatible_zero_second_moment} gives $\mathbb E[\hat x_P\hat x_Q]=0$. 

Assume now that $P\sim Q$. 
We assume that $\hat r_j$ is the unique compatible basis assignment of Lemma~\ref{lem:biased_local_pauli_unique_compatible_basis_assignment}
for any $j \in \supp(P)\cup\supp(Q)$. 
On that event, Lemma~\ref{lem:biased_local_pauli_coefficient_rule} yields
\begin{align}
\hat x_P\hat x_Q=
\Bigl(\prod_{j\in\supp(P)} p_{j,P_j}^{-1}\Bigr)\Bigl(\prod_{j\in\supp(Q)} p_{j,Q_j}^{-1}\Bigr)\prod_{j\in\supp(P\ominus Q)} \hat s_j,
\label{BSHJ1}
\end{align}
because the overlap sign factors square to one. With $R:=P\ominus Q$, Lemma~\ref{lem:biased_local_pauli_conditional_sign_product_identity} gives
\begin{align}
\mathbb E\!\left[\prod_{j\in\supp(P\ominus Q)} \hat s_j\ \middle|\ \hat x_P\hat x_Q
\neq 0\right]=m_R=m_{P\ominus Q}.
\label{BSHJ2}
\end{align}
Thus,
\begin{align}
&\mathbb E[\hat x_P\hat x_Q]
=
\mathbb P\!\left(
\hat x_P\hat x_Q \neq 0 \right)
\mathbb E\!\left[
\hat x_P\hat x_Q 
 \middle|\ \hat x_P\hat x_Q
\neq 0\right] \notag\\
\stackrel{(a)}{=}&
\Big(\prod_{j\in\supp(P)\setminus\supp(Q)} p_{j,P_j}\Big)\cdot\Big(\prod_{j\in\supp(Q)\setminus\supp(P)} p_{j,Q_j}\Big)\notag\\&\cdot\Big(\prod_{j\in\supp(P)\cap\supp(Q)} p_{j,P_j}
\Big)\notag\\
&\cdot \Bigl(\prod_{j\in\supp(P)} p_{j,P_j}^{-1}\Bigr)\Bigl(\prod_{j\in\supp(Q)} p_{j,Q_j}^{-1}\Bigr)\notag\\
&\cdot\mathbb E\!\left[\prod_{j\in\supp(P\ominus Q)} \hat s_j\ \middle|\ \hat x_P\hat x_Q
\neq 0\right]\notag\\
\stackrel{(b)}{=}&
\Bigl(\prod_{j\in\supp(P)\cap\supp(Q)} p_{j,P_j}^{-1}\Bigr)m_{P\ominus Q}
\stackrel{(c)}{=}
\beta(P,Q)m_{P\ominus Q}.
\end{align}
Here,
$(a)$ follows from 
Lemma \ref{lem:biased_local_pauli_compatible_basis_event_probability}
and \eqref{BSHJ1}.
$(b)$ follows from \eqref{BSHJ2}.
$(c)$ follows from \eqref{BSHJ3}.
We obtain \eqref{FKS1}.
Subtracting $m_Pm_Q$ gives the covariance formula, and taking $Q=P$ yields the stated diagonal formula because $P\ominus P=I^{\otimes k}$ and $m_{I^{\otimes k}}=1$.
Thus, we obtain \eqref{FKS2}.
\end{proof}

\section*{Acknowledgments}
The author was supported in part by
the General R\&D Projects of 1+1+1 CUHK-CUHK(SZ)-GDST Joint Collaboration Fund (Grant No. GRDP2025-022), the Guangdong Provincial Quantum Science Strategic Initiative (Grant No. 
GDZX2505003), 
and the Shenzhen International Quantum Academy (Grant No. SIQA2025KFKT07).
Large language model tools were used as auxiliary aids in preparing this
manuscript, including assistance with exposition, literature search, and
exploratory discussions of possible approaches to the research problem.
The manuscript was written under the author's
direction, and the author is solely responsible for all mathematical content,
proofs, references, and conclusions.

\bibliography{refs_used}
\end{document}